\newtheorem{theorem}{Theorem}
\newtheorem{proposition}[theorem]{Proposition}
\newcommand{\R}{\mathbb{R}}
\newcommand{\Z}{\mathbb{Z}}
\newcommand{\eps}{\varepsilon}
\renewcommand{\d}{\,\mathrm{d}} 
\renewcommand{\i}{\mathrm{i}}
\newcommand{\rme}{\mathrm{e}}
\newcommand{\rhogs}{{\rho_\mathrm{gs}}}
\newcommand{\FHKpure}{F_\mathrm{HK1,pure}}
\newcommand{\FHKens}{F_\mathrm{HK1,ens}}
\newcommand{\FSD}{F^0_\mathrm{SD}}
\newcommand{\FLL}{\FCSpure}
\newcommand{\FDM}{\FCSens}
\newcommand{\FCSpure}{F_\mathrm{CS,pure}}
\newcommand{\FCSens}{F_\mathrm{CS,ens}}
\newcommand\FD {F_{\mathrm{D}}}
\newcommand\GD {G_{\mathrm{D}}}
\newcommand\qmstate {\Gamma}
\newcommand{\xx}{\mathbf{x}}
\newcommand{\rr}{\mathbf{r}}
\newcommand{\pp}{\mathbf{p}}
\newcommand{\vv}{\mathbf{v}}
\newcommand{\jpara}{\mathbf{j}^\mathrm{p}}
\newcommand{\jtot}{\mathbf{j}}
\newcommand{\jmag}{\mathbf{j}^\mathrm{m}}
\newcommand{\jarb}{\mathbf{k}}
\newcommand\FGH {G}
\newcommand{\A}{\mathbf{A}}
\renewcommand{\a}{\mathbf{a}}
\newcommand{\B}{\mathbf{B}}
\newcommand\aD {\a}
\newcommand\jparaDm {{\jpara_{\mathrm{m}}}}
\newcommand{\Ah}{\mathbf{\hat{A}}}
\newcommand{\kn}{\mathbf{k}_{\mathbf{n}}}
\newcommand{\bn}{\mathbf{n}}
\newcommand{\ah}{\hat{a}}
\newcommand{\be}{\boldsymbol{\epsilon}}
\newcommand{\bj}{\mathbf{j}}
\DeclareMathOperator{\trace}{Tr}
\begin{document}

\author{Markus Penz}
\address{Basic Research Community for Physics, Innsbruck, Austria}

\author{Erik I. Tellgren}
\address{Hylleraas Centre for Quantum Molecular Sciences, University of Oslo, Norway}

\author{Mih\'aly A. Csirik}
\address{Department of Computer Science, Oslo Metropolitan University, Norway}
\address{Hylleraas Centre for Quantum Molecular Sciences, University of Oslo, Norway}

\author{Michael Ruggenthaler}
\address{Max Planck Institute for the Structure and Dynamics of Matter, Hamburg, Germany}

\author{Andre Laestadius}
\email{andre.laestadius@oslomet.no}
\address{Department of Computer Science, Oslo Metropolitan University, Norway}
\address{Hylleraas Centre for Quantum Molecular Sciences, University of Oslo, Norway}

\title{
	The structure of the density-potential mapping\\
	Part~II: Including magnetic fields
}

\begin{abstract}
The Hohenberg--Kohn theorem of density-functional theory (DFT) is broadly considered the conceptual basis for a full characterization of an electronic system in its ground state by just the one-body particle density. In this Part~II of a series of two articles, we aim at clarifying the status of this theorem within different extensions of DFT including magnetic fields. We will in particular discuss current-density-functional theory (CDFT) and review the different formulations known in the literature, including the conventional paramagnetic CDFT and some non-standard alternatives. For the former, it is known that the Hohenberg--Kohn theorem is no longer valid due to counterexamples. Nonetheless, paramagnetic CDFT has the mathematical framework closest to standard DFT and, just like in standard DFT, non-differentiability of the density functional can be mitigated through Moreau--Yosida regularization. Interesting insights can be drawn from both Maxwell--Schr\"odinger DFT and quantum-electrodynamical DFT, which are also discussed here.
\end{abstract}

\maketitle
\tableofcontents

\newpage
\section{Introduction}
\label{sec:intro}

The celebrated and highly successful method of using the one-body particle density to describe quantum systems---density-functional theory (DFT)---has also been extended to include magnetic fields~\cite{Vignale1987,Vignale1988,vignale-rasolt-geldart1990,Diener,Capelle2002}. In this Part~II of a two-part review series, we will explore such formulations. 
Just like the case for Part~I~\cite{PartI}, the scope of this review is limited to topics closely related to the Hohenberg--Kohn (HK) mapping and properties of the exact functional(s). Here, in Part~II, this is done for extended DFTs to account for magnetic fields. Again, many excellent reviews and textbooks are available on the subject of standard DFT \cite{vonBarth2004basic,burke2007abc,burke2012perspective,dreizler2012-book,eschrig2003-book,parr}. We also direct the interested reader to a rather unique round-table structured article \cite{teale2022round-table} that also discusses extended DFTs. In addition, Part~I can be consulted and is also referenced throughout this part.

In Part~I, we discussed the theoretical aspects of the HK theorem as far as the standard DFT is concerned. In this part, we will continue the study of more general DFTs related to more general Hamiltonians including magnetic fields. 
The formulation using a {\it universal} functional in terms of just the density (valid for all systems in an external electric potential) must then be augmented when the Hamiltonians considered include more than scalar potentials (in addition to parts modelling the internal energy). 
The response of atoms and molecules to strong magnetic fields is
of direct interest in astrophysics~\cite{LAI_RMP73_629}. Moreover, magnetic properties, such
as magnetizabilities and nuclear magnetic resonance parameters, are a
major target of quantum chemistry~\cite{HelgakerJaszunskiRuud}. Other
static magnetic properties
include magnetically induced ring currents due to their statistical
association with other chemical
properties~\cite{GOMES_CR101_1349} and higher-order static
properties~\cite{VAARA_PRL86_3268,PAGOLA_PRA72_033401,CAPUTO_JCP126_154103}.
Additionally, there are many time-
or frequency-dependent properties related to the response to external magnetic fields.
However, standard DFT does not fully describe magnetic properties, thereby motivating the schemes studied here.  

An important ingredient in the density-functional approach is to obtain a universal density functional that is appropriate for the underlying Hamiltonian.
A more general Hamiltonian would intuitively require more variables of the corresponding extended DFT. 
For magnetic systems, natural candidates~\cite{Tellgren2012} to use as variables alongside the particle density are the gauge-invariant total current density $\jtot$ (sometimes also called the `physical' current density) and the paramagnetic current $\jpara$. 
The paramagnetic current must be carefully distinguished from the total current, where the latter cannot be determined from the wave function $\psi$ (or density matrix $\Gamma$) alone since it also includes the external vector potential. 
Both of these current densities have been presented in the literature as variables of current-density-functional theory (CDFT)~\cite{Vignale1987,Vignale1988,vignale-rasolt-geldart1990,Diener,Capelle2002}. 
We will here discuss the paramagnetic and physical CDFTs, as well as other formulations for magnetic systems, with the HK1 and HK2 structure in mind. We direct the interested reader to Refs.\citenum{Tellgren2012} and \citenum{LaestadiusBenedicks2014} for more on the choice of basic variables in CDFT.
In addition, there are other options than a theory formulated with a current density, e.g., the magnetic-field DFT of \citeauthor{GRAYCE}~\cite{GRAYCE}. However, this formalism requires that a semi-universal functional is employed, i.e., utilizing a functional that takes the particular magnetic field of interest as a parameter.

The complexity of formulating a DFT for magnetic systems is apparent from the fact that there is no HK theorem yet proven. This means it is unknown whether the particle density and a current determine the scalar and vector potential of the system~\cite{Capelle2002,Tellgren2012,LaestadiusBenedicks2014,Laestadius2021}. In fact, the theory 
with the paramagnetic current density cannot be used to establish a one-to-one correspondence between the densities and the potentials as first demonstrated by \citeauthor{Capelle2002}~\cite{Capelle2002}. 
Concerning a HK theorem that uses the total current density, the best attempt so far (due to \citeauthor{Diener}~\cite{Diener}) 
is irreparably in error as was recently shown in Ref.~\citenum{Laestadius2021}.
Moreover, even if a HK result could be proven for the total current density, there are serious issues with the HK variational principle and its extension to $N$-representable density pairs~\cite{LaestadiusBenedicks2015}.  Simply put, the total current density is not a suitable, independent variational parameter. However, this difficulty can be circumvented in models that modify the usual Rayleigh--Ritz variational principle. Specifically, as will be discussed below, the introduction of an induced magnetic field as an independent variational parameter in the Maxwell--Schr\"odinger model does permit a DFT formulated using the total current density.~\cite{TellgrenSolo2018}

This review is structured as follows: In Section~\ref{sec:split} we first repeat the restructuring of the HK theorem into two parts, namely HK1 and HK2. Just as in Part~I, this will be a feature of our presentation here for extended DFTs. We thereafter discuss preliminaries in Section~\ref{sec:Prel} where, e.g., different densities (in addition to the one-body particle density) are introduced together with our typical Hamiltonian in Eq.~\eqref{eq:CDFT-Ham-general}. 

In Section~\ref{sec:pCDFT} we then investigate a formulation of CDFT using the paramagnetic current density. For this theory HK1 holds, whereas HK2 does not. The latter fact is also discussed by considering the structure of known counterexamples. 
Paramagnetic CDFT is the formulation that comes closest to the mathematical framework developed by Lieb and others for standard DFT, which is outlined in the section. We also review Moreau--Yosida regularization which can be achieved in these variables (on a reflexive ``density space''). 
The restriction to uniform magnetic fields is also discussed, where the function space of paramagnetic current densities can be reduced to a finite-dimensional vector space. In Section~\ref{sec:mUCP} we briefly discuss the unique-continuation property (from sets of positive measure) for magnetic Schr\"odinger operators that can be used to establish (for an eigenstate) $\psi\neq 0$ almost everywhere (a.e.).
If one treats the magnetic field (or the magnetic vector potential) as a parameter of the system, a semi-universal formulation of CDFT becomes available. This treatment, sometimes referred to as B-DFT, is discussed in Section~\ref{sec:B-DFT}. 

We then continue, in Section~\ref{sec:tCDFT}, with a discussion on CDFT using the total (physical) current density. Here, already HK1 fails and we therefore discuss alternatives, one being the approach of Diener. We also look into partial HK results. Furthermore, in Section~\ref{sec-MDFT}, we discuss how the introduction of an induced classical magnetic field circumvents the difficulties with using the total current density as a variational parameter. We provide HK1 and HK2 in this setting. In Section~\ref{sec:QEDFT}, we consider the natural generalization to a quantized electromagnetic field induced by the electrons.
We conclude our review with a summary in Section~\ref{sec:summary}.

\section{Restructuring the Hohenberg--Kohn theorem}\label{sec:split}

In Part~I of this review, we introduced a convenient and beneficial split of the seminal HK theorem of standard DFT into two separate results:
\begin{itemize}
    \item (\textbf{HK1}) If two potentials share a common ground-state density then they also share a common ground-state wave function or density matrix.
    \item (\textbf{HK2}) If two potentials share any common eigenstate and if this eigenstate further is non-zero almost everywhere then they are equal up to a constant.
\end{itemize}
The combination of both results then gives the full HK theorem that allows a well-defined density-potential mapping in standard DFT. It was shown in Part~I that while HK1 is uncontroversial and in fact follows solely from how the energy functional is defined, HK2 requires a bit more technicality if we want to guarantee that the eigenstate is in fact non-zero almost everywhere. The split of the HK theorem allows to study the status of HK1 and HK2 separately for different versions of DFT for magnetic systems. We will see that automatically HK1 holds, as was already pointed out in Section~X of Part~I, for any variant of DFT that has a \emph{universal} constrained-search functional, i.e., one that varies over the density quantities independent of the external potentials. This will be the case in paramagnetic CDFT (Section~\ref{sec:pCDFT}), yet not in total CDFT (Section~\ref{sec:tCDFT}), where the current density depends on the vector potential. While this raises doubts about the possibility of a full HK theorem in total CDFT, a formulation beyond the proposed split can still be feasible. At present its status is open, and we will summarize the most relevant attempts in Section~\ref{sec:tCDFT}.
The strategy for proving HK2, on the other hand, cannot be generalized from standard DFT to variants involving magnetic fields. For paramagnetic CDFT even a general condition can be derived that facilitates counterexamples (Section~\ref{sec:cond-count}). Does such a failure of the HK theorem, and with it of a unique density-potential mapping, deliver a final blow to these versions of DFT for magnetic systems? Not necessarily, since parts of the theory, like the availability of a density functional, still survive. Further, within a dual setup of density-potential variables, a regularization technique can be used to reinstate a unique \emph{quasi}density-potential mapping (Section~\ref{sec:para-MY}).
An exception is finally given by the most elaborate theory discussed here, quantum-electrodynamical DFT (QEDFT). In QEDFT the expectation value of the quantized electromagnetic field operator enters as another density variable and not only the HK1/HK2 split works out but both theorems can be successfully established (Section~\ref{sec:QEDFT}).

\section{Preliminaries}
\label{sec:Prel}

As pointed out in the previous section, the conceptual centerpiece of DFT is the availability of a density-potential mapping. A set of reduced quantities (densities) then suffices to determine the external parameters (potentials) acting on the system. Conversely, this allows to fully control the densities by adjusting the potentials, most importantly in order to induce the effect of particle interactions into a non-interacting system with the help of the so-called exchange-correlation potential. The mediating tool to establish such a mapping in the ground-state theory is the energy: A functional is set up that takes the external potentials $\vv$ as arguments and minimizes the total energy of the system by varying over all possible densities $\xx$. In Section~X of Part~I we already introduced such an `abstract' formulation of DFT, where the ground-state energy $E[\vv]$ is determined by
\begin{align} 
    \label{eq:Fxx}
    \tilde F[\xx] &= \inf_{\psi\mapsto\xx}\{\langle \psi | H_0 | \psi\rangle\}, \\
    E[\vv] &= \inf_\xx \{ \tilde F[\xx] + f[\vv,\xx]  \} . \nonumber
\end{align}
Here, $H_0$ in the definition of the universal constrained-search functional is such that it contains only internal contributions, i.e., no reference to the external potentials $\vv$ is included. Variation in the definition of $\tilde F[\xx]$ is over all states $\psi$ that yield the given densities $\xx$, indicated by the notation $\psi\mapsto\xx$. What is then missing for the total energy is given by the coupling of the densities to the potentials, summarized by the term $f[\vv,\xx]$. Then this formulation alone already facilitates the HK1 theorem, since the density quantity $\xx$ alone already determines the possible ground-states in Eq.~\eqref{eq:Fxx}. To achieve such a formulation, we must thus separate out all quantities that directly couple to the potentials included in the system. In the case of standard DFT the scalar potential just couples linearly to the one-particle density in the form of a dual pairing, $f[v,\rho] = \langle v,\rho \rangle$. In the presence of a magnetic field it is clear that the one-particle density needs to be complemented with another density quantity that allows to determine the magnetic energy contribution. If one adheres to the dual-pairing structure of standard DFT, then this must be a vector-field quantity, the current density. But it is not strictly imposed that $\vv$ and $\xx$ are dual variables, meaning the potential space is dual to the density space. When this is not the case, the density variable might be redundant, i.e., has more information than needed in order to determine the external potential. A concrete example is found in certain  formulations of collinear spin-DFT, where the external potential is just the scalar potential $v$  and the density variables, $\xx=(\rho_{\uparrow},\rho_{\downarrow})$, are the spin up and down densities. Although it is easier to devise practical approximate functionals with direct access to both spin densities, this leads to the said redundancy and breaks the dual setting. Duality can be restored by including a scalar $B_z'$ (see Table~\ref{table:sum-dfts}) or, equivalently, spin-resolved potentials~\cite{AYERS_JCP124_224108}. Another example is provided by different treatments of spin and orbital effects in the presence of magnetic fields~\cite{CAPELLE_PRL78_1872}. Some formulations rely on the pair $(\rho,\mathbf{m})$, where $\mathbf{m}$ is a possibly noncollinear spin density~\cite{ESCHRIG_JCC20_23,GONTIER_PRL111_153001,EICH_PRL111_156401}. Other formulations rely on a density triple $\xx=(\rho,\mathbf{m},\jpara)$ that gives rise to a spin Zeeman term $\langle \mathbf{m}, \nabla\times\B\rangle$ and an orbital Zeeman term $\langle \jpara, \A \rangle$. Note that, if partial integration can be performed without boundary terms, then $\langle \mathbf{m}, \nabla\times\B\rangle = \langle \nabla\times\mathbf{m}, \B\rangle$. Alternative formulations thus introduce the magnetization current $\jmag = \jpara + \nabla\times\mathbf{m}$ instead, which results in a combined spin and orbital term $\langle \jmag, \A \rangle$.

It is our task now, to determine the appropriate density quantities for DFT including magnetic fields. As we have seen, this goes by writing down the ground-state energy of the system, so the starting point will naturally be the system Hamiltonian.
Throughout this review we will employ atomic units, which only leaves the speed of light $c$ and the vacuum magnetic permeability $\mu_0$ as fundamental constants. The factor $1/c$ that usually still appears in front of vector potentials and magnetic fields can further be absorbed into the corresponding units.
The most general Hamiltonian considered here is the Pauli Hamiltonian,
\begin{equation}\label{eq:CDFT-Ham-general}
\begin{aligned}
    & H[v,w,\B',\A] = \frac{1}{2} \sum_k (-\i\nabla_k + \A(\rr_k))^2 \\
     &\quad + \sum_k v(\rr_k) + \sum_{k<l} w(r_{kl})
     + \sum_k \B'(\rr_k)\cdot\mathbf{S}_k.
\end{aligned}
\end{equation}
We allow for a general interaction term $w$ that depends on the particle distance $r_{kl}=|\rr_k-\rr_l|$, next to the scalar potential $v$, as well as a vector potential $\A$ and a magnetic field $\B'$ that couples to the Pauli matrices.
Although one is used to think that the vector potential and the magnetic field are coupled via $\B = \nabla \times \A$, for the purpose of DFT they can be assumed independent. We thus write $\B'$ for this independent magnetic field. This is especially useful for constructing the Kohn--Sham system, where non-interacting particles are steered by choosing the appropriate external fields. In most cases though, only the vector potential $\A$ is present and we set $\B'=0$. Further, for molecular systems we fix the interaction to a Coulomb potential $w(r_{kl}) = \lambda r_{kl}^{-1}$, where $\lambda=1$ corresponds to full interaction and $\lambda = 0$ to the non-interacting Kohn--Sham system. Then the Hamiltonian from Eq.~\eqref{eq:CDFT-Ham-general} is reduced to
\begin{equation}\label{eq:CDFT-Ham-basic}
  H[v,\A] = \frac{1}{2} \sum_k (-\i\nabla_k + \A(\rr_k))^2
  + \lambda \sum_{k<l} \frac{1}{r_{kl}} 
  + \sum_k v(\rr_k),
\end{equation}
or, written down in its basic components, $H[v,\A] = T_\A + W +  V[v]$. Here the kinetic operator in the presence of a vector potential is $T_\A = \frac{1}{2}\sum_k(-\i\nabla_k + \A(\rr_k))^2$ (sometimes with a minus sign instead in front of $\A$, that comes from the assumed negative charge of the particles, but that can be absorbed into $\A$).
The Hamiltonian without external potentials is then as usual $H_0 = T + W$, where it holds $T_\mathbf{0} = T$. When we talk about a ``non-interacting'' Hamiltonian, this means that $H[v,\A] = T_\A + V[v]$, without the interaction term $W$.

Now, investigating the energy expectation value $\langle \psi| H[v,\A]| \psi \rangle$, we see that we have to rewrite the mixed term 
$ \sum_k \A(\rr_k)\cdot (\langle \psi |  (-\i \nabla_k \psi) \rangle - \langle (\i \nabla_k\psi) | \psi \rangle)$ that arises from squaring out the kinetic term in a form suitable for a density-functional formulation. To that end, next to the one-particle density $\rho_\psi$ (Eq.~(2) in Part~I),
we define the paramagnetic current density of a given $N$-particle pure state $\psi$ in terms of the spin-summed one-particle reduced density matrix,
\begin{equation*}
    \begin{aligned}
        \jpara_\psi(\rr) & = \mathrm{Im} \{ \nabla \gamma_\psi(\rr,\rr')\vert_{\rr'=\rr} \} \\
        &\;= N \sum_{\underline{\sigma}} \int_{\mathbb R^{3(N-1)}} \mathrm{Im}\left\{ \psi^*\nabla_{\rr}\psi \right\} \d \rr_2 \ldots \d \rr_N.
    \end{aligned}
\end{equation*}
Here, $\rr=\rr_1$ and $\underline\sigma = (\sigma_1,\ldots,\sigma_N)$ are the spin degrees-of-freedom. 
If we have an ensemble state given by a density matrix $\Gamma = \sum_j \lambda_j \vert \psi_j\rangle \langle \psi_j \vert$, $\lambda_j \in [0,1]$, $\sum_j \lambda_j = 1$,  then the paramagnetic current is
\begin{equation*}
    \begin{aligned}
        \jpara_\Gamma (\rr) & = \mathrm{Im} \{ \nabla \gamma_\Gamma(\rr,\rr')\vert_{\rr'=\rr} \} = \sum_j \lambda_j \jpara_{\psi_j}(\rr).
    \end{aligned}
\end{equation*}
We also define the total (or `physical') current density for a given density matrix $\Gamma$ and a vector potential $\A$,
\begin{equation*}
\jtot = \jpara_{\Gamma} + \rho_{\Gamma} \A, 
\end{equation*} 
where {\it both} the state {\it and} the vector potential are explicitly needed in the definition. Of course, in the case $\Gamma = \vert \psi\rangle \langle \psi \vert$, we have $\jtot = \jpara_{\psi} + \rho_{\psi} \A$. 

The utility of introducing the paramagnetic current $\jpara$ 
(and also $\jtot$) becomes apparent when we compute the energy expectation value for some pure state $\psi$,
\begin{equation} \label{eq:E-def-jpara}
    \begin{aligned}
    &\langle \psi| H[v,\A] | \psi \rangle  
     =  \langle \psi|H_0|\psi \rangle \\
     &+ \int_{\mathbb R^3} \left(v(\rr) + \tfrac 1 2 |\A(\rr)|^2 \right) \rho_\psi(\rr) \d \rr
     + \int_{\mathbb R^3} \A(\rr) \cdot \jpara_\psi(\rr) \d \rr \\
    &= \langle \psi|H_0|\psi \rangle + \langle v + \tfrac 1 2 |\A|^2 , \rho_\psi \rangle + \langle \A, \jpara_\psi \rangle.
    \end{aligned}
\end{equation}
We have made use of the notation $\langle f, g\rangle = \int f(\rr) g(\rr) \d \rr$ for the dual pairing between potential and density quantities (with obvious extension to vector fields, see Part~I for further details). 
Similarly, for ensemble states given by density matrices we have
\begin{equation} \label{eq:E-def-jpara2}
    \trace (H[v,\A] \Gamma) 
    = \trace (H_0\Gamma) + \langle v + \tfrac 1 2 |\A|^2, \rho_\Gamma \rangle  + \langle \A, \jpara_\Gamma \rangle.
\end{equation}
Note that because of
\begin{equation*}
E[v,\A] = \inf_{\psi} \langle \psi| H[v,\A] | \psi \rangle  = \inf_\Gamma \trace (H[v,\A] \Gamma)
\end{equation*}
it makes no difference for the energy if we minimize over pure states or density matrices since every component (eigenvector) of a density matrix already realizes the ground-state energy as a degenerate ground state.
For the dual pairing $\langle v + \tfrac 1 2 |\A|^2, \rho \rangle$, the definition
\begin{equation}\label{eq:Aabs-u}
u=u[v,\A] = v + \tfrac 1 2 |\A|^2
\end{equation}
allows us to formulate the theory with an effective potential. Here, for the mathematical formulation it becomes necessary that $v$ and $\vert \A \vert^2$ are elements of the same function space that is also dual to the density space. If this holds, and one additionally has that $\rho\A$ is from the same space as the current $\jpara$, we call those spaces \emph{compatible}~\cite{MY-CDFTpaper2019}. This property is important for the convex formulation of 
paramagnetic CDFT and will be described further in Section~\ref{sec:cdft-functionals}.

We define $N$- and $v$-representability for a pair $(\rho,\jpara)$ in an equivalent fashion as in standard (density-only) DFT (see Section~III of Part~I).
Note that we stick to the denomination ``$v$-representable'' from standard DFT instead of saying ``$(v,\A)$-representable''.
The density pair $(\rho,\jpara)$ is said to be

\begin{enumerate}[label=(\roman*)]
	\item pure-state $N$-representable if there is a wave function $\psi$ that has finite kinetic energy such that $\rho_\psi = \rho$ and $\jpara_\psi = \jpara$,
	\item ensemble $N$-representable if there is a density matrix $\Gamma$ that has finite kinetic energy such that $\rho_\Gamma = \rho$ and $\jpara_\Gamma = \jpara$,
	\item pure-state $v$-representable if there exists a potential pair $(v,\A)$, such that the Hamiltonian $H[v,\A]$ has a ground-state wave function $\psi$ with $\rho_\psi = \rho$ and $\jpara_\psi = \jpara$, and
	\item ensemble $v$-representable if there exists a potential pair $(v,\A)$, such that the Hamiltonian $H[v,\A]$ has a ground-state density matrix $\Gamma$ with $\rho_\Gamma = \rho$ and $\jpara_\Gamma = \jpara$.
\end{enumerate}

Contrary to the situation in standard DFT, pure-state and ensemble $N$-representability have to be differentiated, since different results apply. Additionally to the condition $\int |\nabla\sqrt{\rho(\rr)}|^2 \d\rr < \infty$ already known from standard DFT, different conditions involving the paramagnetic current $\jpara$ and the vorticity  $\pmb{\nu}=\nabla\times (\jpara/\rho)$ must hold.
In the construction of \citeauthor{LiebSchrader}~\cite{LiebSchrader}, the velocity field $\jpara/\rho$ must either be curl-free or the number of particles must be $N\geq 4$ and additional decay properties on $\pmb{\nu}$ must hold. This not only allows for a pure state with the required densities, but even in the form of a Slater determinant.
Yet, contrary to standard DFT, it does not give an upper bound on the kinetic energy of the representing determinant.
A different result for ensemble $N$-representability is that of \citeauthor{TellgrenNrep}~\cite{TellgrenNrep}. Here, the integrals $\int |\jpara|^2 / \rho\d\rr$ and $\int (1+r^2) \rho(\partial_\alpha(\jpara_\beta/\rho))^2 \d\rr$ (for all $\alpha,\beta$ that describe the different components of a 3-vector) must be finite. The proof is by direct construction of a one-particle reduced density matrix and a kinetic-energy bound is available as well.
The problem of $v$-representability must be marked as mostly unsolved, just as in standard DFT. Still, these notions are important and ubiquitous in DFT, since the formulation often depends on constraints like $\psi\mapsto (\rho,\jpara)$, which means ``all wave functions $\psi$ that yield the densities $(\rho,\jpara)$'' and that consequently only makes sense if $(\rho,\jpara)$ is pure-state $N$-representable. For $\Gamma\mapsto (\rho,\jpara)$ ensemble $N$-representability would be sufficient. The $v$-representability naturally shows up in connection to the HK theorem: For which density pairs can a unique mapping to potentials be established?

Note that this definition for $v$-representability of the density pair $(\rho,\jpara)$ involving the paramagnetic current directly carries over to the total current: If $(\rho,\jpara)$ is $v$-representable using $(v,\A)$, then also $(\rho,\jtot) = (\rho,\jpara + \rho\A)$ is. On the other hand, it does not really make sense to ask for $N$-representability of a total current in the presence of a vector potential, only for its paramagnetic part. Other realizations of DFT including magnetic fields will include different density and potential quantities, so these notions have to be adopted accordingly.

\section{Paramagnetic CDFT}
\label{sec:pCDFT}

We begin by addressing the status of HK1 and HK2. The energy expressions given in Eqs.~\eqref{eq:E-def-jpara} and~\eqref{eq:E-def-jpara2}, give  
the ground-state energy for a given potential pair $(v,\A)$   
\begin{equation}\label{eq:E-cdft}
\begin{aligned}
E[v,\A] &= \inf_{\psi} \{ \langle \psi |H_0 |\psi \rangle + \langle u[v,\A],\rho_\psi\rangle + \langle \A, \jpara_\psi \rangle \} \\
&= \inf_{\Gamma}\left\{ \trace (H_0 \Gamma) + \langle u[v,\A],\rho_\Gamma\rangle + \langle \A, \jpara_\Gamma \rangle \right\} ,
\end{aligned}
\end{equation}
where we recall the effective potential $u[v,\A] = v + \frac 1 2 \vert \A \vert^2$ from Eq.~\eqref{eq:Aabs-u}. 
Again, just like in the density-only setting with $E[v]$, the structure of $E[v,\A]$ is such that for fixed densities $(\rho,\jpara)$ the terms $\langle u[v,\A],\rho \rangle$ and $\langle \A, \jpara \rangle$ are already fully determined and do not need explicit reference to the wave function or the density matrix. This allows us to establish HK1 as follows.

\begin{theorem}[HK1 for paramagnetic CDFT]\label{th:paraCDFT-HK1}
Let $\Gamma_1$ be a (mixed) ground state of $H[v_1,\A_1]$ and $\Gamma_2$ a (mixed) ground state of $H[v_2,\A_2]$. If $\Gamma_1,\Gamma_2 \mapsto (\rho,\jpara)$, i.e., if these states share the same density pair, then $\Gamma_1$ is also a ground state of $H[v_2,\A_2]$ and $\Gamma_2$ is also a ground state $H[v_1,\A_1]$.
%
\end{theorem}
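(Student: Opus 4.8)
The plan is to run the standard Hohenberg--Kohn ``swap'' argument, exploiting the structural observation highlighted just before the statement: for \emph{any} density matrix carrying the fixed density pair $(\rho,\jpara)$, the coupling contributions $\langle u[v,\A],\rho\rangle$ and $\langle \A,\jpara\rangle$ are completely determined by the densities and the chosen potentials alone, with no reference to the state. Since $\Gamma_1$ and $\Gamma_2$ both map to the same $(\rho,\jpara)$, these coupling terms agree whenever either state is paired with a given potential. This reduces the whole problem to comparing the purely internal energies $\trace(H_0\Gamma_1)$ and $\trace(H_0\Gamma_2)$.

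Concretely, I would first write out the ground-state energies using Eq.~\eqref{eq:E-cdft}: denoting $E_i=E[v_i,\A_i]$, the assumption that $\Gamma_i$ is a ground state gives $E_i=\trace(H_0\Gamma_i)+\langle u[v_i,\A_i],\rho\rangle+\langle\A_i,\jpara\rangle$ for $i=1,2$, where I have already inserted $\rho_{\Gamma_i}=\rho$ and $\jpara_{\Gamma_i}=\jpara$. Next I would cross-evaluate using Eq.~\eqref{eq:E-def-jpara2}: compute $\trace(H[v_2,\A_2]\Gamma_1)$. Because $\Gamma_1$ carries the densities $(\rho,\jpara)$, this equals $\trace(H_0\Gamma_1)+\langle u[v_2,\A_2],\rho\rangle+\langle\A_2,\jpara\rangle$, and the variational principle forces it to be $\geq E_2$. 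Cancelling the identical coupling terms yields $\trace(H_0\Gamma_1)\geq\trace(H_0\Gamma_2)$. Running the symmetric computation with $\Gamma_2$ inserted into $H[v_1,\A_1]$ gives the reverse inequality $\trace(H_0\Gamma_2)\geq\trace(H_0\Gamma_1)$.

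Combining the two inequalities forces $\trace(H_0\Gamma_1)=\trace(H_0\Gamma_2)$, whence both cross-evaluations are in fact equalities; that is, $\trace(H[v_2,\A_2]\Gamma_1)=E_2$ and $\trace(H[v_1,\A_1]\Gamma_2)=E_1$, which is exactly the claim that each state is a ground state of the other's Hamiltonian. There is no serious obstacle here: the result follows entirely from the universal, state-independent structure of the coupling terms, and the only point demanding care is bookkeeping the effective potential $u[v,\A]=v+\tfrac{1}{2}|\A|^2$ from Eq.~\eqref{eq:Aabs-u} correctly, so that both the scalar and the vector-potential couplings cancel. This is precisely the sense in which HK1 ``comes for free'' from the constrained-search formulation, independently of whether the densities uniquely determine the potentials.
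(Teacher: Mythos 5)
Your argument is correct, but it is not the route the paper takes: you run the traditional Hohenberg--Kohn ``swap'' with cross-evaluations, $\trace(H[v_2,\A_2]\Gamma_1)\geq E[v_2,\A_2]$ and its mirror image, cancel the state-independent coupling terms $\langle u[v_i,\A_i],\rho\rangle+\langle\A_i,\jpara\rangle$ on both sides, and squeeze $\trace(H_0\Gamma_1)=\trace(H_0\Gamma_2)$ to conclude that each cross-evaluation is an equality. The paper instead argues via the constrained-search structure: since each $\Gamma_i$ attains the minimum in Eq.~\eqref{eq:E-cdft}, the coupling terms can be pulled out of the minimization once the density pair is fixed, leaving $E[v_i,\A_i]=\min_{\Gamma'\mapsto(\rho,\jpara)}\trace(H_0\Gamma')+\langle u[v_i,\A_i],\rho\rangle+\langle\A_i,\jpara\rangle$, so the remaining minimum is a universal quantity determined by $(\rho,\jpara)$ alone and both states solve the same potential-free variational problem. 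The two arguments are logically equivalent here, and the paper explicitly remarks after its proof that the energy-inequality route (given as an alternative proof of Theorem~1 in Part~I) adapts directly to this setting --- which is exactly what you have done. The paper's version makes more transparent \emph{why} HK1 is automatic for any DFT flavor with a universal constrained-search functional, whereas yours is slightly more self-contained and makes the equality case of the variational principle explicit; your bookkeeping of the effective potential $u[v,\A]=v+\tfrac{1}{2}|\A|^2$ is exactly right, and no step is missing.
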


\begin{proof}
Since we assumed the existence of ground states $\Gamma_1,\Gamma_2$ for the respective potentials, the infimum in Eq.~\eqref{eq:E-cdft}, when varied over density matrices, is actually a minimum.
Further, for $i=1,2$ the energy contributions $\langle \A_i, \jpara \rangle$ and $\langle u[v_i,\A_i],\rho \rangle$ are fixed because $(\rho,\jpara)$ is given and can be taken out of the minimum,
\begin{equation*}
\begin{aligned}
E[v_i,\A_i] = \min_{\Gamma' \mapsto (\rho,\jpara)}\trace (H_0 \Gamma') &+ \langle u[v_i,\A_i],\rho \rangle  \\
&+ \langle \A_i, \jpara \rangle.
\end{aligned}
\end{equation*}
We now note that the remaining minimum includes no reference to the potentials $(v_i,\A_i)$ and is thus determined by the density pair $(\rho,\jpara)$ alone.
This means that $\Gamma_1,\Gamma_2$ are both valid ground states for both Hamiltonians, $H[v_1,\A_1]$ and $H[v_2,\A_2]$.
\end{proof}

The above proof followed precisely the proof structure of Theorem~1 in Part~I, where also an alternative proof was given that follows the more traditional route using energy inequalities. This alternative proof can just as easily be adapted to the paramagnetic CDFT setting.

One has to use a bit of caution in case of degeneracy. This means that there are potentials $(v,\A)$ that lead to a full set of degenerate ground-state wave functions $\{\psi_j\}_j$ that in turn can be combined into mixed states $\Gamma$ and lead to very different density pairs.
For such cases it was shown that the density pair $(\rho,\jpara)$ is not sufficient to determine the full set of degenerate ground-state wave functions $\{\psi_j\}_j$~\cite{LaestadiusTellgren2018}. 
So the usual statement in DFT that ``the density determines the ground state'' cannot be taken for granted if one means to say ``\emph{all} ground states'', after all we do not have a full HK result for paramagnetic CDFT as we will see below.
This case arises, for example, as a general feature of degenerate systems where the degenerate eigenstates have different angular momenta.
What is still true is that, by Theorem~\ref{th:paraCDFT-HK1} above, the density determines some ground state.
Moreover, when $(\rho,\jpara)$ is ensemble $v$-representable from $H[v,\A]$ by a mixed state formed from $r$ degenerate ground states, then any Hamiltonian $H[v',\A']$ that shares this ground-state density pair must have at least $r$ degenerate ground states in common with $H[v,\A]$.\cite{LaestadiusTellgren2018} Thus, any set of Hamiltonians that shares a ground-state density pair $(\rho,\jpara)$ by necessity has to have at least one joint ground state. 
The non-degenerate case was already noted in the case of paramagnetic CDFT by \citeauthor{Vignale1987}~\cite{Vignale1987}.

Is it possible to proceed to the next step and obtain a HK2? Unfortunately not. 
If the external scalar potential $v$ is supplemented by an external vector potential $\A$ that can give rise to magnetic fields, then the HK theorem in general does not hold any more. The reason is that (infinitely) many combinations of scalar and vector potentials could be linked to the same ground state, i.e., the ground state does not uniquely determine its potentials. This even holds when gauge transformations are taken into account that equate equivalent potentials. In the context of CDFT, this was first noted by \citeauthor{Capelle2002}~\cite{Capelle2002}.

The argument in Ref.~\citenum{LaestadiusBenedicks2014} (see also \citeauthor{Tellgren2012}~\cite{Tellgren2012} on the topic of non-uniqueness in paramagnetic CDFT), in a condensed form, is the following: Assume that a one-electron system without a vector potential supports a ground state $\psi_0$.  We can consider, for example, a Hydrogen-like system. The Schr\"odinger equation is then $H[v,\mathbf{0}]\psi_0 = E \psi_0$, with $H[v,\mathbf{0}] = -\tfrac 1 2\nabla^2 + v$, and where 
we assume that $v$ is locally bounded from above. Apart from this we keep $v$ arbitrary.
We then know \cite[Section~11.8]{LiebLoss} that in such a case $\psi_0$ is {\it unique}, real and everywhere greater than zero. 
Now, introduce another system that includes a vector potential in its Hamiltonian. 
Set $\A= \nabla \phi \times \nabla \psi_0$, where the choice of $\phi$ is kept open, and let $\mu\in\R$. We can observe the following facts:
\begin{itemize}
 \item $\nabla \cdot \A$, the divergence of $\A$, equals zero,
 \item the magnetic field, $\B = \nabla \times \A$, is not identically zero (except possibly for some particular 
choices of $\phi$), and therefore $\A$ is not a gradient field,
 \item $\A\cdot \nabla \psi_0 = 0$.
\end{itemize}
Now, consider the Schr\"odinger operator 
\begin{equation*}
\begin{aligned}
H[v - \tfrac 1 2 |\mu \A |^2,\mu \A] &=  \tfrac 1 2(-\i\nabla + \mu \A)^2 + v - \tfrac 1 2 |\mu \A |^2 \\
&=: H_{\mu \A}.
\end{aligned}
\end{equation*}
We notice that $H_{\mu \A}\psi_0 = H[v,\mathbf{0}]\psi_0 = E \psi_0$, because of the facts above. Thus, for any $\mu$ and an arbitrary choice of $\phi$, we have that $\psi_0$ is an eigenstate (not yet the \emph{ground} state) of $H_{\mu \A}$. Consequently, the density $\rho= \psi_0^2$ and the paramagnetic 
current $\jpara_{\psi_0} = \textrm{Im}\{ \psi_0^*\nabla\psi_0\} =0$ (which is zero since $\psi_0$ is real, as noted above) are independent of $\mu$ and $\phi$. Nevertheless, the potential 
$v - \tfrac 1 2| \mu \A |^2$ and vector potential $ \mu \A$ of course depend crucially on $\mu$ and $\phi$. 

\begin{figure}
    \centering
    \includegraphics[width = 0.48\textwidth]{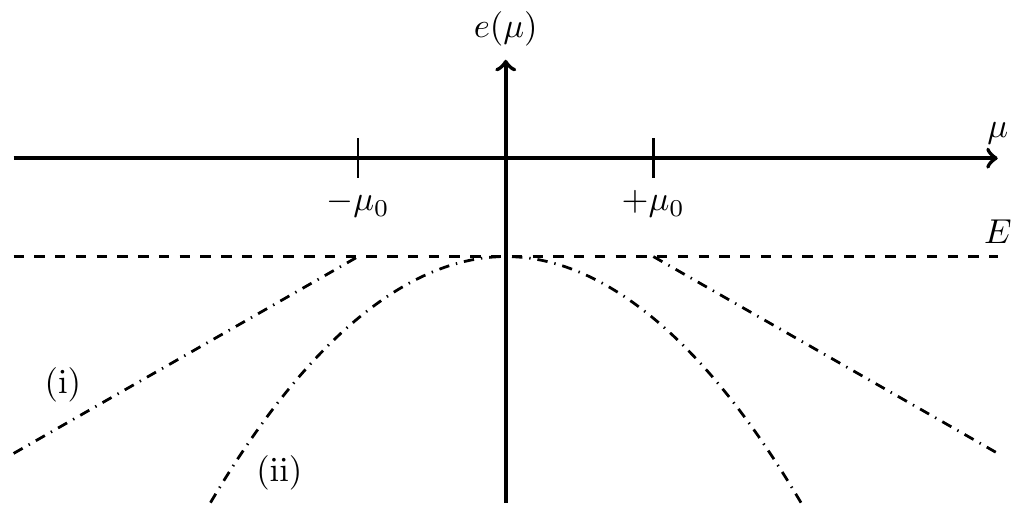}
    \caption{Illustration of the concave $e(\mu)$ (defined in the text) for the two different cases: (i) $e(\mu) = E$ for all $|\mu|<\mu_0$, for some $\mu_0>0$, or (ii) $e(\mu)< E$ for $\mu \neq 0$ and 
$\lim_{\mu \to 0} e(\mu) =E$.}
    \label{Fig:e-mu}
\end{figure}

In the next step it will become clear why we introduced the seemingly unnecessary parameter $\mu$ (since $\phi$ was arbitrary anyway). This is because $\mu$ will be used in proving that $\psi_0$ really is the ground state of $H_{\mu \A}$, at least for small enough $\mu$.
We give an outline of the proof, the full proof can be found in Ref.~\citenum{LaestadiusBenedicks2014} (proof of Theorem~2). We also refer the reader to Ref.~\citenum{Avron1981} as well as to Theorem~4 in the aforementioned Ref.~\citenum{LaestadiusBenedicks2014} for more details on this counterexample. Let $e(\mu) \leq E$ denote the ground-state energy. This is a continuous and even function of $\mu$, i.e., $e(\mu )= e(-\mu)$ and has $e(0)=E$. 
Moreover, since $H_{\mu \A}$ is linear in $\mu$ (not quadratic, since the quadratic term gets canceled), $e(\mu)$ is a concave function.
There are now two possibilities: (i) $e(\mu) = E$ for all $|\mu|<\mu_0$, for some $\mu_0>0$, or (ii) $e(\mu)< E$ for $\mu \neq 0$, and 
$\lim_{\mu \to 0} e(\mu) =E$. Both cases are illustrated in Fig.~\ref{Fig:e-mu}. 
In case (i), the ground-state energy equals $E$ for all $|\mu|<\mu_0$ and $\psi_0$ is the ground state 
for all these $\mu$. In case (ii), let us consider  a ground state $\psi_\mu$. It is then not difficult to prove~\cite{Laestadius2014} that the 
limit wave function  $\lim_{\mu\to 0} \psi_\mu$ is nonzero (i.e., not the zero function) and therefore is an additional ground state of $H[v,\mathbf{0}]$ (but $H[v,\mathbf{0}]$ has a unique ground state). The reason that it is not 
the zero function is that $\int v |\psi_\mu|^2 \d\rr < -C$, for some fixed positive constant $C$. 
The reason that the limit function 
is not $\psi_0$ itself is that $\psi_\mu$ is orthogonal to $\psi_0$ for all $\mu$ (that is, $\langle \psi_\mu |\psi_0 \rangle =0$ for all $\mu$, which would then give a contradiction in the limit $\mu\to 0$).
At any rate, we can conclude that $\psi_0$ is another ground state of the magnetic system for sufficiently small $\mu$. Thus, for magnetic Schr\"odinger operators, the (ground-state) solution does not uniquely determine the potentials. 
In fact, the above argument shows that there are {\it infinitely} many systems that share the same ground state if magnetic fields are included in the formulation. 
The above discussion is a more mathematical construction of the situation first demonstrated by Capelle and Vignale~\cite{Capelle2002} summarized in the following theorem.

\begin{theorem}[Capelle and Vignale~\cite{Capelle2002}]\label{thm:no-HK-p-cdft}
For CDFT formulated with the paramagnetic current density $\jpara$, HK2 does not hold and consequently there cannot be a HK result.
\end{theorem}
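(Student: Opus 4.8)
The plan is to produce an explicit counterexample: a single state that is the ground state of a whole family of genuinely distinct potential pairs, all sharing one common ground-state density pair $(\rho,\jpara)$. This is exactly the construction assembled in the discussion above, and the work of the proof is to make each link rigorous. First I would fix a one-electron reference Hamiltonian $H[v,\mathbf{0}] = -\tfrac12\nabla^2 + v$ with $v$ locally bounded from above, and invoke the standard result \cite[Section~11.8]{LiebLoss} that its ground state $\psi_0$ is unique, real, and strictly positive everywhere. I would record these three properties at the outset, since reality gives $\jpara_{\psi_0}=\mathbf{0}$ and strict positivity guarantees $\psi_0\neq 0$ a.e., both of which the argument relies on.

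Next I would introduce the field $\A = \nabla\phi \times \nabla\psi_0$ for an arbitrary scalar $\phi$ and verify the three structural identities $\nabla\cdot\A = 0$ (from the divergence of a cross product of two gradient fields), $\A\cdot\nabla\psi_0 = 0$ (since $\A$ is orthogonal to $\nabla\psi_0$ by construction), and $\B = \nabla\times\A \not\equiv \mathbf{0}$ for generic $\phi$. Expanding $H_{\mu\A}\psi_0 = \tfrac12(-\i\nabla+\mu\A)^2\psi_0 + (v - \tfrac12|\mu\A|^2)\psi_0$, the two terms linear in $\mu$, proportional to $(\nabla\cdot\A)\psi_0$ and to $\A\cdot\nabla\psi_0$, vanish by the first two identities, while the $\mu^2|\A|^2\psi_0$ contribution from the kinetic term is exactly cancelled by $-\tfrac12|\mu\A|^2\psi_0$. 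This leaves $H_{\mu\A}\psi_0 = H[v,\mathbf{0}]\psi_0 = E\psi_0$, so $\psi_0$ is an eigenstate for every $\mu$ and $\phi$, with the $\mu$- and $\phi$-independent densities $\rho=\psi_0^2$ and $\jpara_{\psi_0}=\mathrm{Im}\{\psi_0^*\nabla\psi_0\}=\mathbf{0}$.

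The main obstacle is upgrading $\psi_0$ from eigenstate to \emph{ground} state of $H_{\mu\A}$ for a range of $\mu$, since only a ground-state density pair is subject to the density-potential mapping. To this end I would examine $e(\mu)\le E$, the ground-state energy of $H_{\mu\A}$. Writing out the operator shows the $\mu^2$ terms cancel, so $H_{\mu\A}$ is affine in $\mu$; consequently $e(\mu)$, as an infimum of affine functions, is concave, and it is continuous with $e(0)=E$ and even (by complex conjugation, $\overline{H_{\mu\A}}=H_{-\mu\A}$). Concavity forces one of two cases (Fig.~\ref{Fig:e-mu}): either $e(\mu)=E$ on a neighbourhood of the origin, where $\psi_0$ is plainly a ground state, or $e(\mu)<E$ for all $\mu\neq 0$. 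To exclude the latter I would take ground states $\psi_\mu$ and show, following \cite{Laestadius2014}, that $\lim_{\mu\to 0}\psi_\mu$ is nonzero---the energy bound $\int v|\psi_\mu|^2\d\rr < -C$ prevents it collapsing to zero---and orthogonal to $\psi_0$; this limit would be a second ground state of $H[v,\mathbf{0}]$, contradicting uniqueness. Hence case (ii) cannot occur and $\psi_0$ is the ground state of $H_{\mu\A}$ for small $\mu$.

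Finally I would conclude. For all small $\mu$, and for arbitrary admissible $\phi$, the Hamiltonians $H[v-\tfrac12|\mu\A|^2,\mu\A]$ all carry $\psi_0$ as ground state and therefore share the single ground-state density pair $(\psi_0^2,\mathbf{0})$, yet their vector potentials $\mu\A$ yield different gauge-invariant magnetic fields $\mu\B$ and so are not equal even up to a gauge transformation. A common (nonzero-a.e.) eigenstate of distinct potential pairs is precisely a violation of HK2, and because the ground-state density pair here maps to infinitely many potentials, the density-potential correspondence breaks down. Since the full HK theorem is the conjunction of HK1 and HK2, the failure of HK2 rules out any HK result for paramagnetic CDFT, which is the claim.
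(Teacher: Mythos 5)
Your proposal is correct and follows essentially the same route as the paper: the construction $\A=\nabla\phi\times\nabla\psi_0$ with its three structural identities, the cancellation of the quadratic term making $H_{\mu\A}$ affine in $\mu$, the concavity dichotomy for $e(\mu)$, and the exclusion of case~(ii) via the uniqueness of the nonmagnetic ground state are exactly the argument (due to Ref.~\citenum{LaestadiusBenedicks2014}) that the paper presents immediately before the theorem statement. No substantive differences to report.
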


We shall further explore counterexamples to the HK theorem in paramagnetic CDFT in the next section. 
However, we first discuss a further subtlety from Ref.~\citenum{LaestadiusTellgren2018}. Suppose now that a given pair $(\rho,\jpara)$ is associated with two different Hamiltonians. Is it then true that the level of degeneracy for these potential pairs needs to be the same for the ground state? This turns out to {\it not} be the case. 
Indeed, we can pick a ground state $\psi_0$ for some system without a magnetic field. Then we can construct a magnetic system
that has a degenerate ground state that includes $\psi_0$. We will return to this matter later in Section~\ref{sec:para-MY} when we discuss Kohn--Sham theory for paramagnetic CDFT. 

\subsection{Discussion of HK2 counterexamples} 
\label{sec:cond-count}

As known in the literature, and here summarized in Theorem~\ref{thm:no-HK-p-cdft}, a full HK theorem for paramagnetic CDFT is {\it not} possible. 
Phrased somewhat differently, \citeauthor{Vignale1987}'s attempted proof~\cite{Vignale1987} of a HK theorem for paramagnetic CDFT suffers from a loophole, since it does not exclude the possibility that two (or more) sets of different potentials share the same ground-state wave function. Explicit counterexamples have been constructed by exploiting that angular momentum is quantized in cylindrically symmetric systems and with very special choices of the magnetic vector potential. Despite these counterexamples, our intuition is that these are exceptions connected to high symmetry. In typical cases, lacking both symmetry and unlikely coincidences, it might hold that no (further) counterexamples exist. Yet, until now a general result along these lines has not been proved. However, we can make the intuition more precise for non-interacting systems.

Consider the non-interacting ($\lambda=0$) $N$-electron Hamiltonian from Eq.~\eqref{eq:CDFT-Ham-basic}, with the substitution $v\mapsto u$ from Eq.~\eqref{eq:Aabs-u}, resulting in $\bar{H}[u,\A] = H[v,\A]$.  Furthermore, we use the decomposition
\begin{equation*}
\begin{aligned}
    \bar H[u,\A] &= \sum_{j=1}^N \bar{h}_j[u,\A] \\
    &= \sum_{j=1}^N \left( -\tfrac{1}{2} \nabla_j^2 - \tfrac{\i}{2} \{\nabla_j, \A(\rr_j)\} + u(\rr_j) \right),
\end{aligned}
\end{equation*}
which is a sum of one-particle terms of the form
\begin{equation*}
\begin{aligned}
    \bar{h}_j[u,\A] &= -\tfrac{1}{2} \nabla_j^2 - \tfrac{\i}{2} \{\nabla_j, \A(\rr_j)\} + u(\rr_j) \\
    &= -\tfrac{1}{2} \nabla_j^2 - \i\A(\rr_j)\cdot\nabla_j - \tfrac{\i}{2} (\nabla_j \cdot \A(\rr_j)) + u(\rr_j).
\end{aligned}
\end{equation*}
Here, $\left\{ \cdot,\cdot \right\}$ denotes the anti-commutator.
When particle indices are superfluous we write simply $\bar{h}[u,\A]$ and we sometimes let a tilde indicate that the divergence term is absorbed into the scalar potential, i.e., $\widetilde{u}(\rr) = u(\rr) - \tfrac{\i}{2} \nabla \cdot \A(\rr)$.
Suppose now that a Slater determinant $\psi_{\mathrm{SD}} = |\phi_1 \phi_2 \ldots \phi_N|$, formed from orthonormal orbitals, is a shared ground state of two such Hamiltonians,
\begin{align*}
    \bar{H}[u_1,\A_1] \psi_{\mathrm{SD}} & = E_1 \psi_{\mathrm{SD}}, \\
    \bar{H}[u_2,\A_2] \psi_{\mathrm{SD}} & = E_2 \psi_{\mathrm{SD}}.
\end{align*}
These $N$-electron equations can also equivalently be written as one-electron equations,
\begin{align*}
    \bar{h}[u_1,\A_1] \phi_k & = \sum_{l=1}^N \varepsilon^{(1)}_{kl} \phi_l, \\
    \bar{h}[u_2,\A_2] \phi_k & = \sum_{l=1}^N \varepsilon^{(2)}_{kl} \phi_l.
\end{align*}
Letting $U = u_2 - u_1$, $\a = \A_2 - \A_1$, $\omega_{kl} = \varepsilon^{(2)}_{kl} - \varepsilon^{(1)}_{kl}$, the difference of the above two equations can be written
\begin{equation*}
  \left( -\tfrac{\i}{2} \{\nabla,\a\} + U \right) \phi_k = \left( -\i\a\cdot\nabla + \widetilde{U} \right) \phi_k = \sum_l \omega_{kl} \phi_l.
\end{equation*}
Noting that unitary transformations within the space of occupied orbitals do not change the total energies $E_1$ and $E_2$, it is always possible to choose the orbitals such that \emph{one} of the $\varepsilon^{(1)}_{kl}$, $\varepsilon^{(2)}_{kl}$, and $\omega_{kl}$ is diagonal. For our purposes, it is convenient to choose orbitals such that $\omega_{kl}$ is diagonal. Hence,
\begin{equation}
  \label{eqVRHKloopholeNecEigProb}
  \left( -\tfrac{\i}{2} \{\nabla,\a\} + U \right) \phi_k = \left( -\i\a\cdot\nabla + \widetilde{U} \right) \phi_k = \omega_{kk} \phi_k.
\end{equation}
Division by $\phi_k$ now yields (assuming $\phi_k\neq 0$ almost everywhere)
\begin{equation*}
  -\i\frac{\a\cdot\nabla \phi_k}{\phi_k} + \widetilde{U} = -\i\a\cdot\nabla \log(\phi_k) + \widetilde{U} = \omega_{kk}.
\end{equation*}
The real part of this expression is somewhat subtle to work with, since $\omega_{kk}$ depends on $(U,\a)$. The imaginary part takes the simple form
\begin{equation}
    \label{eqVRHKloopholeNecCond}
    -\a\cdot\nabla \log(|\phi_k|) - \tfrac{1}{2} \nabla\cdot\a = 0,
\end{equation}
which is equivalent to
\begin{equation*}
    \nabla\cdot |\phi_k|^2 \a = |\phi_k|^2 \nabla\cdot\a + \a\cdot\nabla |\phi_k|^2 = 0.
\end{equation*}
From this divergence condition for the density contribution $\rho_k = |\phi_k|^2$ of each individual orbital also $\nabla\cdot(\rho\a) = 0$ follows since $\rho=\sum_k\rho_k$.
It is instructive to see how Eq.~\eqref{eqVRHKloopholeNecCond} is satisfied in each of the known counterexamples that prevents a full HK result in paramagnetic CDFT:
\begin{itemize}
    \item[(C1)] {\bf Cylindrical symmetry}: For $u_1 = u_2$ that are cylindrically symmetric about the $z$-axis, two vector potentials $\A_1(\rr) = B_1 \mathbf{e}_z \times \rr$ and $\A_2(\rr) = B_2 \mathbf{e}_z \times \rr$ preserve the symmetry and lead to quantized angular momentum. As long as the difference $B_2-B_1$ is not large enough to lead to a level-crossing, the ground-state wave function is therefore the same, and the energies differ by a trivial shift $E_2-E_1=\tfrac{1}{2} (B_2-B_1) L_z$. This holds because the orbitals of both ground states are eigenfunctions of the operator $\{\nabla,\a\}=\tfrac{\i}{2} (B_2-B_1) \hat{L}_z$ and Eq.~\eqref{eqVRHKloopholeNecEigProb} is therefore satisfied. Finally, Eq.~\eqref{eqVRHKloopholeNecCond} is satisfied because $\a(\rr)$ is parallel to the angular direction $\mathbf{e}_z\times\rr$, whereas the gradient $\nabla |\phi_k(\rr)|$ is always contained in the two-dimensional plane spanned by $\mathbf{e}_z$ and $\rr$.
    \item[(C2)] {\bf Real-valued one-electron ground states}: Given a one-electron ground state $\psi$ of $\bar H[u_1,\A_1]$, with $\A_1 = 0$, one can then always construct another Hamiltonian by setting $U(\rr) = u_2(\rr)-u_1(\rr) = 0$ and $\a(\rr) = \A_2(\rr)-\A_1(\rr) = \A_2(\rr) = \nabla\times ( g(\rr) \nabla\psi(\rr) )$. This is possible since the ground state can be chosen as a real-valued function in the absence of a vector potential. In this case, $E_2 = E_1$, $\a$ is divergence free by construction, and $\a(\rr)\cdot\nabla \psi(\rr) = 0$. Because, in the $N=1$ case, there is no distinction between a Slater determinant $\psi_\mathrm{SD}$ and its orbital $\phi$, this also verifies that the necessary condition Eq.~\eqref{eqVRHKloopholeNecEigProb} is satisfied.
    \item[(C3)] {\bf One-electron ground states}: Given a one-electron ground state $\psi$ of $\bar H[u_1,\A_1]$, with $\A_1 \neq \mathbf{0}$, one can then choose $U(\rr) = u_2(\rr)-u_1(\rr) = 0$ and $\a(\rr) = \A_2(\rr)-\A_1(\rr) = \i C \nabla\psi(\rr)^* \times  \nabla \psi(\rr)$, which is real-valued. The constant $C>0$ needs to be chosen sufficiently small not to result in a level crossing. From this choice, it follows that $E_2 = E_1$, $\a$ is divergence free by construction, and $\a(\rr)\cdot\nabla \psi(\rr) = 0$. Identifying $\psi$ and its orbital $\phi$, this also verifies the necessary condition Eq.~\eqref{eqVRHKloopholeNecEigProb}.
    \item[(C4)] {\bf Non-interacting real-valued two-orbital systems}: Let the Slater determinant $\psi_{\mathrm{SD}} = |\phi_1 \phi_2|$ be the ground state of a non-interacting Hamiltonian $\bar H[u_1,\A_1]$, with $\A_1=\mathbf{0}$. The orbitals $\phi_1$ and $\phi_2$ can in this case always be chosen real. Another Hamiltonian sharing the ground state $\psi_{\mathrm{SD}}$ can now be constructed by setting $U(\rr) = u_2(\rr) - u_1(\rr) = 0$ and $\a(\rr) = \A_2(\rr) - \A_1(\rr) = C \nabla\phi_1(\rr) \times \nabla\phi_2(\rr)$, where $C$ is a constant sufficiently small not to result in a level crossing. It follows that $E_2 = E_1$, $\a$ is divergence free by construction, and the necessary condition Eq.~\eqref{eqVRHKloopholeNecEigProb} is satisfied because $\a$ is orthogonal to both $\nabla \phi_1$ and $\nabla\phi_2$.
\end{itemize}

In absence of special symmetries, satisfaction of the necessary condition Eq.~\eqref{eqVRHKloopholeNecCond} becomes increasingly implausible with increasing $N$.
Note that Eq.~\eqref{eqVRHKloopholeNecCond} is of type $\a\cdot\mathbf{x}_k=d$ for all $\mathbf{x}_k = \nabla \log(|\phi_k|)$. Consequently, all $\nabla \log(|\phi_k|)$ must lie in the same affine plane orthogonal to $\a$.
In the absence of special symmetries and for large enough $N$, the fact that orbitals are orthonormal typically leads to orbital gradients that are not contained in the same plane. Even in the presence of a few discrete symmetries, such as $90$ degree rotations or inversion, we would expect special points $\rr$, e.g., symmetry axes or planes, where gradients $\nabla \log(|\phi_k|)$ are confined to a plane to make up a set of measure zero. In summary, we expect the detailed features of a typical ground state $\psi_{\mathrm{SD}}$ to force the conclusion that $\a = \mathbf{0}$.
Interestingly, this critical part that symmetries play for the presence of counterexamples to a possible full HK result in paramagnetic CDFT reminds a lot on a comparable statement in linear-response time-dependent one-body density-matrix-functional theory where no HK-like result is available.\cite{giesbertz2016invertibility}.

The interacting case, $\lambda =1$ in Eq.~\eqref{eq:CDFT-Ham-basic}, is considerably harder to analyze, although the corresponding necessary condition does not appear to be any less restrictive. For a shared $N$-electron ground state $\psi(\rr_1,\dots,\rr_N)$ of two interacting Hamiltonians  $\bar H[u_1,\A_1]$ and $\bar H[u_2,\A_2]$, we obtain
\begin{equation*}
\begin{aligned}
    &\left( \bar H[u_2,\A_2] - \bar H[u_1,\A_1] \right) \psi \\
    &= \sum_{j=1}^N \left( -\i\a(\rr_j)\cdot\nabla_j + \widetilde{U}(\rr_j) \right) \psi = (E_2 - E_1) \psi.
\end{aligned}
\end{equation*}
Division by $\psi$ yields 
\begin{equation*}
    -\i \sum_j \a(\rr_j) \cdot \nabla_j \log(\psi) + \sum_{j=1}^N  \widetilde{U}(\rr_j) = E_2-E_1,
\end{equation*}
which is a highly restrictive condition since a typical wave function is a highly nontrivial function of all particle coordinates simultaneously. All other terms are additive over particle coordinates. In particular, the imaginary part becomes
\begin{equation*}
    \sum_j \a(\rr_j) \cdot \nabla_j \log(|\psi|) = -\frac{1}{2} \sum_{j=1}^N  \nabla_j\cdot\a(\rr_j),
\end{equation*}
which is a highly restrictive condition on the joint many-electron probability distribution $|\psi|^2$.

So far we have focused on necessary conditions that a counterexample must satisfy. It is also possible to derive a near sufficient condition from the requirement that two Hamiltonians commute. We write ``near sufficient" because this only guarantees that they share all eigenstates, not that the energy ordering and therefore the ground states are the same. However, if the difference between the Hamiltonians is made small enough to not induce a level crossing with the ground state, the condition becomes sufficient. Returning to the non-interacting case, we note that two one-electron Hamiltonians commute if and only if
\begin{equation*}
\begin{aligned}
    &[\bar{h}[u_1,\A_1], \bar{h}[u_2,\A_2]] \\
    &=[\bar{h}[u_1,\A_1], \bar{h}[u_2,\A_2] - \bar{h}[u_1,\A_1]] \\
    &= [-\tfrac{1}{2} \nabla^2 - \i\A_1\cdot\nabla + \widetilde{u}_1, -\i\a\cdot\nabla + \widetilde{U}] = 0.
\end{aligned}
\end{equation*}
We would like to check this condition with respect to (C1)--(C4) from above. For this we write out the commutator into its separate parts, substitute $\tilde u_1$ and $\tilde U$, and use that in all counterexamples $U=0$.
\begin{align*}
  &-[\A_1\cdot\nabla + \tfrac{1}{2}(\nabla\cdot\A_1),\a\cdot\nabla + \tfrac{1}{2}(\nabla\cdot\a)]\\
  &+[\tfrac{1}{2} \nabla^2, \i\a\cdot\nabla + \tfrac{\i}{2}(\nabla\cdot\a)] - [u_1,\i\a\cdot\nabla]=0.
\end{align*}
This equation is satisfied by (C1). The other counterexamples of the forms (C2)--(C4) in general all have $[u_1,\i\a\cdot\nabla]\psi \neq 0$ and, since $u_1$ can be chosen independently, this means that the condition is not satisfied.
Hence, these counterexamples are interesting since they involve non-commuting Hamiltonians which do not share all eigenvectors, but nonetheless share ground states.

Finally, in the interacting case, the commutator
\begin{equation*}
\begin{aligned}
    &[\bar{H}[u_1,\A_1], \bar{H}[u_2,\A_2]] \\
    &= [\bar{H}[u_1,\A_1], \bar{H}[u_2,\A_2]-\bar{H}[u_1,\A_1]]
\end{aligned}
\end{equation*}
contains the additional contribution $\sum_j [w, -\i\a(\rr_j)\cdot\nabla_j]$, where we recall that $w(r_{12}) = \lambda r_{12}^{-1}$. For these terms to give vanishing total contribution, we must have
\begin{equation*}
\begin{aligned}
    &\a(\rr_1)\cdot\nabla_1\frac{1}{r_{12}} + \a(\rr_2)\cdot\nabla_2\frac{1}{r_{12}} \\
    &= \left( \a(\rr_1)-\a(\rr_2) \right) \cdot \frac{\rr_1-\rr_2}{r_{12}^3} = 0.
\end{aligned}
\end{equation*}
Hence, for interacting systems, the near-sufficient condition is satisfied for linear vector potentials $\a(\rr) = \mathbf{b}\times\rr + \mathbf{q}$, with $\mathbf{b}$ and $\mathbf{q}$ constant, and excludes all other forms.


\subsection{Paramagnetic CDFT functionals and convex formulation}
\label{sec:cdft-functionals}

Although usually described as the theoretical foundation of DFT, the lack of a (full) HK result for the paramagnetic current density 
does not prevent a mathematical formulation that is very close to the corresponding one in standard DFT (described in detail in Part~I). In fact, HK1 alone is enough to set up a similar hierarchy of functionals for paramagnetic CDFT just as in standard DFT. \citeauthor{Vignale1987}~\cite{Vignale1987} first introduced the correspondence of a HK functional (here denoted $\FHKpure$) and the first mathematical formulation (including a paramagnetic Lieb functional) was done in \citeauthor{Laestadius2014}~\cite{Laestadius2014} for the current vector space $\vec L^1= L^1 \times L^1 \times L^1$ (later refined in Ref.~\citenum{MY-CDFTpaper2019}, see below).  

Let $(\rho,\jpara)$ be associated with a ground state $\psi_{\rho,\jpara}$ that could potentially come from many different potential pairs (due to lack of HK2), but here we suppose that at least one such pair $(v,\A)$ exists which makes $\psi_{\rho,\jpara}$ pure-state $v$-representable. Then
\begin{align*}
    \FHKpure[\rho,\jpara]= \langle\psi_{\rho,\jpara}| H_0 |\psi_{\rho,\jpara} \rangle
\end{align*}
is well-defined due to the availability of HK1 that maps the density pair to a ground state.  A slightly less severe constraint is to instead rely on ensemble $v$-representability (of the density pair) and introduce the functional 
\begin{align*}
    \FHKens[\rho,\jpara]= \trace (H_0 \Gamma_{\rho,\jpara}),
\end{align*}
where $\Gamma_{\rho,\jpara}$ is a ground-state density matrix for at least one $H[v,\A]$ and 
$\Gamma_{\rho,\jpara} \mapsto (\rho,\jpara)$. $\FHKens$ extends $\FHKpure$ to density pairs that are not pure-state $v$-representable but are ensemble $v$-representable. 

Since paramagnetic CDFT inherits (from standard DFT) the fact that not all $(\rho,\jpara)$ are (ensemble) $v$-representable, 
the corresponding constrained-search functionals are useful extensions to all $N$-representable density pairs. They are defined by
\begin{align}
    \label{eq:FCSpure}
    \FCSpure[\rho,\jpara] &= \inf_{\psi \mapsto (\rho,\jpara)} \langle \psi |H_0| \psi\rangle  ,\\
    \FCSens[\rho,\jpara] &= \inf_{\Gamma \mapsto (\rho,\jpara)} \trace( H_0 \Gamma ) . \nonumber
\end{align}
Note that these two functionals are different. The pure-state version was first introduced by \citeauthor{Vignale1987}~\cite{Vignale1987} and the density-matrix version is (in DFT) due to \citeauthor{Valone80}~\cite{Valone80}. 
In Ref.~\citenum{Laestadius2014} (Proposition~8) it was demonstrated that $\FLL[\rho,\jpara]$ is non-convex using the non-convexity of $\FLL[\rho]$ of standard DFT~\cite{penz-DFT-graphs}.
Since $\Gamma \mapsto (\rho_\Gamma,\jpara_\Gamma)$ is linear, it follows that $\FDM[\rho,\jpara]$ \emph{is} convex. 

Before continuing through the hierarchy of paramagnetic current-density functionals we will make some more technical remarks. 
A suitable density space for paramagnetic CDFT was established in Ref.~\citenum{MY-CDFTpaper2019} as
\begin{equation*}
(\rho,\jpara) \in (L^1\cap L^3) \times (\vec L^1\cap \vec L^{3/2}) =: X \times \vec Y,
\end{equation*}
where we use the notation $\vec L^p = L^p \times L^p \times L^p$. Finite kinetic energy of $\psi$ is used for both the $\vec L^1$- and $\vec L^{3/2}$-constraint for $\jpara_\psi$ (recall that $\rho_\psi\in L^1$ for normalized $\psi$ even if $\langle \psi| T | \psi \rangle =+\infty$). Of course, there are also other constraints that could be used to characterize the set of $N$-representable density pairs, such as $\int \vert \jpara \vert^2 / \rho \d \rr < +\infty$ (the current-correction to the von Weiz\"acker term). Both $\FLL$ and $\FDM$ can be defined on the whole of $X \times \vec Y$ simply by setting the values to $+\infty$ when no states exist satisfying the density constraint. The functional $\FLL$ is expectation-valued \cite[Theorem~5]{Laestadius2014}, i.e., there exists a wave function $\psi_0$ such that
\begin{equation*}
\FLL[\rho,\jpara] = \langle \psi_0 | H_0| \psi_0 \rangle, \quad  \psi_0 \mapsto (\rho,\jpara) .
\end{equation*}
This follows from the fact that the set of all wave functions yielding a fixed paramagnetic current density $\jpara$ is weakly closed~\cite{Laestadius2014}. 
For $\FDM$, the fact that there exists a $\Gamma_0$ such that
\[
\FDM[\rho,\jpara] = \trace  (H_0 \Gamma_0) , \quad  \Gamma_0 \mapsto (\rho,\jpara), 
\]
was proven only fairly recently by \citeauthor{Kvaal2021}~\cite{Kvaal2021}.

Just as in standard DFT, we can define a Lieb functional. The convex formulation of paramagnetic CDFT requires a change of variables already mentioned above in Eq.~\eqref{eq:Aabs-u}, i.e., we set $u=u[v,\A] = v + \vert \A\vert^2/2$. Recall the notion of {\it compatibility} of the function spaces that is fulfilled here and requires that $\vert \A\vert^2$ is an element of the dual space of the space of densities. 
(Since we have $\jpara \in \vec L^1\cap\vec L^{3/2}$, the space for vector potentials is $\vec L^3 + \vec L^\infty$ and $|\A|^2 \in L^{3/2} + L^\infty$ giving $u \in L^{3/2} + L^\infty$ as well, i.e., the potential space is the dual of the density space $X=L^1\cap L^3$, as required.) 
We then let $\bar E[u,\A] = E[v,\A]$, which is a jointly concave energy function (this is the reason we call this a convex formulation). 
Now, we can define on $X\times \vec Y$
\begin{equation}\label{eq:paraCDFT-dens-functional}
    F[\rho,\jpara]= \sup_{u,\A} \{ \bar E[u,\A] - \langle u,\rho \rangle - \langle \A, \jpara \rangle  \} .
\end{equation}
This expresses the link between a universal functional of the density pair and the ground-state energy through a Legendre--Fenchel transformation---just as in standard DFT. 
%
Conversely, the Legendre--Fenchel transformation can also be utilized to go back from $F[\rho,\jpara]$ to $\bar E[u,\A]$,
\begin{equation*}
\bar E[u,\A] = \inf_{\rho,\jpara} \{ F[\rho,\jpara] + \langle u,\rho \rangle 
+ \langle \A,\jpara \rangle \}.
\end{equation*}
In analogy with the presentation of standard DFT given in Part~I, 
the HK variational principle for paramagnetic CDFT can now be formulated as
\begin{align*}
    E[v,\A] = \inf_{\rho,\jpara} \left\{ F_\bullet[\rho,\jpara] +  \langle u[v,\A],\rho \rangle + \langle \A, \jpara \rangle \right\}.
\end{align*}
Here, $F_\bullet$ is any of the \emph{admissible} paramagnetic functionals (i.e., any of the above; see Part~I, especially Table I for the full hierarchy of such functionals).

It has recently been proven~\cite{Kvaal2021} that $F$ is lower semicontinuous and that $F = \FDM$. Thus, although paramagnetic CDFT lacks a HK theorem, the equality of the Lieb functional $F$ and the density-matrix constrained-search functional $\FDM$ is carried over to CDFT. This means that $\FDM$ contains the same information as the energy functional $\bar E$ (or $E$).

Finally, we discuss one more paramagnetic current-density functional that connects to the non-interacting reference system used in the Kohn--Sham scheme. For this we take $H_0 = T$ in Eq.~\eqref{eq:FCSpure}, i.e., no interactions are involved, and one restricts the wave functions to single Slater determinants,
\begin{equation*}
\begin{aligned}
    &\FSD[\rho,\jpara] \\
    &= \inf_{\phi \mapsto (\rho,\jpara)}\{ \langle \phi |T | \phi  \rangle \mid \text{$\phi$ is a single Slater determinant} \}.
\end{aligned}
\end{equation*}
Just as before, the zero superscript in the notation indicates that non-interacting systems are considered.
As we described in Section~\ref{sec:Prel}, it has been proven by \citeauthor{LiebSchrader}~\cite{LiebSchrader} that for $N\geq 4$ under mild conditions on $(\rho,\jpara)$ there is always a determinant $\phi$ such that 
$\phi \mapsto (\rho,\jpara)$ is (pure-state) $N$-representable. Just as in density-only DFT, we have in the absence of degeneracy $\FSD[\rho,\jpara]=\FLL^0[\rho,\jpara]$. 
In general, for non-interacting systems, we have two {\it different energies},
\begin{equation*}
\begin{aligned}
    &E^0[v,\A] \\
    &= \inf_{\rho,\jpara} \{  \FLL^0[\rho,\jpara] + \langle v + \tfrac{1}{2}|\A|^2, \rho \rangle + \langle \A, \jpara \rangle  \} \\
    &=\inf_{\rho,\jpara} \{  \FDM^0[\rho,\jpara] + \langle v + \tfrac{1}{2}|\A|^2, \rho \rangle + \langle \A, \jpara \rangle  \}
\end{aligned}
\end{equation*}
and
\begin{align*}
    \tilde E^0[v,\A] = \inf_{\rho,\jpara} \{  \FSD[\rho,\jpara] + \langle v + \tfrac{1}{2}|\A|^2, \rho \rangle + \langle \A, \jpara \rangle  \} .
\end{align*}
The latter one forms the basis of what we could describe as standard Kohn--Sham theory.

\subsection{Regularization and the Kohn--Sham scheme in paramagnetic CDFT}
\label{sec:para-MY}

Besides the counterexamples that make a full HK result in paramagnetic CDFT impossible, the same non-differentiability issues for the density functional $F[\rho,\jpara]$ as in standard DFT~\cite[Section~VII]{PartI} can be expected to arise. 
In Part~I of this review we pointed out the possibility of density-potential mixing that is equivalent to Moreau--Yosida regularization of the functional to circumvent this problem~\cite[Section~IX]{PartI}. In this section, we will show how this technique is applicable to paramagnetic CDFT, for which a detailed account can be found in Ref.~\citenum{MY-CDFTpaper2019}.
Just like in (density-only) standard DFT~\cite{KSpaper2018} this requires   the potential and density spaces to be reflexive and strictly convex. Previously, in Section~\ref{sec:cdft-functionals}, we have chosen the density-current space $X \times \vec Y = (L^1\cap L^3) \times (\vec L^1\cap \vec L^{3/2})$, which is \emph{not} reflexive due to the occurrence of the non-reflexive $L^1$. A possible alternative choice is now the extended space $L^3 \times \vec L^{3/2}$ that we will rely on henceforth in this section. The dual space of potentials is then $L^{3/2} \times \vec L^3$, so every scalar potential is chosen as $v\in L^{3/2}$ and the vector potential as $\A \in \vec L^3$, and both spaces are reflexive and strictly convex. This choice of spaces is still \emph{compatible} in the sense that was given before (i.e., $|\A|^2 \in L^{3/2}$), so we are able to set up the same convex formulation with $F[\rho,\jpara]$ being the Legendre--Fenchel transformation of $\bar E[u,\A]$ from Eq.~\eqref{eq:paraCDFT-dens-functional}.

Now, in order to achieve a unique density-potential mapping, we switch from the densities $(\rho,\jpara)$ to the \emph{quasidensities}
\begin{equation}\label{eq:quasidens-relation}
    (\rho_\eps,\jpara_\eps) = (\rho,\jpara) - \eps J^{-1}(u,\A),
\end{equation}
where the potentials $(u,\A)$ are thought to map to $(\rho,\jpara)$ in the ground state. (Note that this notation with a subscript $\eps$ is exactly opposite to the one chosen in Ref.~\citenum{MY-CDFTpaper2019}, but fits to the one used in Part~I.) Here, $J^{-1}$ is the inverse of the duality map $J:L^3 \times \vec L^{3/2} \to L^{3/2} \times \vec L^3$ that canonically maps the density space to the potential space. The duality map $J$ is just the subdifferential of $\frac{1}{2}\|\cdot\|^2$ on the density space, while $J^{-1}$ is the same on the potential space~\cite[Section~F]{KSpaper2018}. Translated to the language of optimization that we adhere to here, this means determining the minimizer in
\begin{equation*}
    \inf_{u,\A} \{ \tfrac{1}{2}\|u\|^2 + \tfrac{1}{2}\|\A\|^2 - \langle u,\rho \rangle - \langle \A,\jpara \rangle \}
\end{equation*}
to get $J(\rho,\jpara)$ and
\begin{equation*}
    \inf_{\rho,\jpara} \{ \tfrac{1}{2}\|\rho\|^2 + \tfrac{1}{2}\|\jpara\|^2 - \langle u,\rho \rangle - \langle \A,\jpara \rangle \}
\end{equation*}
to get $J^{-1}(u,\A)$. In both cases the minimizer is unique since $\|\cdot\|^2$ is strictly convex. Now the aim is to make a connection between the quasidensities and a new, regularized density functional and to achieve the same kind of uniqueness in the density-potential mapping. For this purpose, add the strictly concave term $-\tfrac{\eps}{2}\|(u,\A)\|^2$ inside the supremum of Eq.~\eqref{eq:paraCDFT-dens-functional} and get a unique maximizer and a regularized functional $F_\eps[\rho,\jpara]$ as a result. It can be shown that this functional is G\^ateaux differentiable (even Fr\'echet differentiable for uniformly convex spaces, which is the case here)~\cite[Th.~9]{KSpaper2018}.

Functional differentiability is needed to set up the usual Kohn--Sham scheme from the expression for the ground-state energy of the non-interacting reference system,
\begin{equation*}
        E^0[v_s,\A_s] 
    = \inf_{\rho,\jpara} \big\{ F^0(\rho,\jpara) + \langle u_s , \rho \rangle 
    + \langle  \A_s, \jpara \rangle  \big\}.
\end{equation*}
Here, $F^0$ is defined with the non-interacting, purely kinetic $H_0=T$ and we have of course $u_s = v_s + \tfrac 1 2 \vert \A_s \vert^2$. A minimizing (ground-state) density pair $(\rho,\jpara)$ of $E^0[v_s,\A_s]$ is then assumed to be the same as for $E[v,\A]$, which represents the interacting system.
In a regularized setting, where the differentials $\nabla F_\eps$ and $\nabla F^0_\eps$ are well-defined, it is then possible to set up the relation
\begin{align*}
    (u_s, \A_s) &=
    (u, \A) + \nabla F_\eps[\rho_\eps,\jpara_\eps] - \nabla F^0_\eps[\rho_\eps,\jpara_\eps] \\
    &= (u, \A) + \nabla (F_\eps[\rho_\eps,\jpara_\eps] - F^0_\eps[\rho_\eps,\jpara_\eps]) \\
    &= (u, \A) + (u_\mathrm{Hxc}, \A_\mathrm{xc}),
\end{align*}
at the ground-state quasidensity pair $(\rho_\eps,\jpara_\eps)$ that relates to the ground-state density pair via Eq.~\eqref{eq:quasidens-relation}. This relation defines the (Hartree-)exchange-correlation potentials $(u_\mathrm{Hxc}, \A_\mathrm{xc})$ that need to be added to the external, given potentials to achieve the same ground-state density pair for the non-interacting system as in the interacting system.
By substituting back to $v_s=u_s-\tfrac 1 2 \vert \A_s \vert^2$, we can write down the corresponding Kohn--Sham equation,
\begin{align*}
 &\left( \frac{1}{2} (-\i\nabla + \A(\rr) + \A_\mathrm{xc}(\rr))^2 + v(\rr) + u_\mathrm{Hxc}(\rr) \right. \\
 &\left. + \frac{1}{2}\left( |\A(\rr)|^2 - |\A(\rr)+\A_\mathrm{xc}(\rr)|^2 \right) \right) \varphi_i(\rr\sigma) = \varepsilon_i \varphi_i(\rr\sigma).
\end{align*}
In order to be able to define the exchange-correlation potentials without depending on differentiability, a different approach has been introduced that defines them just in terms of forces~\cite{tchenkoue2019force}.

When it comes to the choice of spaces, selecting $L^2 \times \vec L^2$ for the density \emph{and} potential spaces (since those spaces are self-dual) would have the benefit that the duality map needed for passing from quasidensities to densities is just the identity map. But this clashes with the already mentioned requirement of \emph{compatibility} since then in general $\vert \A \vert^2 \notin L^2$. Thus, $v=u-\tfrac 1 2 \vert \A \vert^2 \in L^2$ cannot be guaranteed to be from the potential space and the functional derivatives can no longer be decomposed into a scalar and vector potential, i.e., only a $(u,\A)$-formulation (without reference to $v$) is in general possible.

We briefly demonstrated in this section that the regularization strategy that was before worked out in standard DFT~\cite{Kvaal2014,KSpaper2018,Kvaal2022-MY} can also be applied to paramagnetic CDFT. 
Yet, although the strategy is very beneficial in order to get differentiable functionals, a unique quasidensity-potential mapping, and for setting up a well-defined Kohn--Sham scheme, it has not evolved into a practical method as of yet.
On the other hand this form of regularization relates closely to the Zhao--Morrison--Parr method~\cite{ZMP1994} for density-potential inversion which clearly \emph{has} a practical purpose~\cite{Penz2022ZMP}.

In addition to the above outlined approach of achieving functional differentiation in CDFT, Ref.~\citenum{MY-CDFTpaper2019} also demonstrated the construction
of a well-defined Kohn--Sham iteration scheme labeled `MYKSODA'. Although implemented only for a toy model (a one-dimensional quantum ring), the presented MYKSODA is an algorithm for  calculations in the full setting of ground-state CDFT employing a Moreau--Yosida-regularized functional.

\subsection{Uniform magnetic fields in DFT}

The two most commonly calculated static magnetic properties are magnetizabilities and
nuclear shielding constants. For the former, it is sufficient to
restrict attention to uniform magnetic fields. 
As uniform fields are often represented by a linear vector potential
in the cylindrical gauge,
\begin{equation*}
\A(\rr) =\tfrac{1}{2} \B\times(\rr-\mathbf{G}) + \a,
\end{equation*}
where $\B,\a,\mathbf{G} \in \mathbb{R}^3$ are constants, it is
convenient to introduce this as a restriction on the vector
potentials. This enables specialization and simplification of
paramagnetic CDFT, which is formulated above as a theory for general,
nonuniform magnetic field. The resulting theory offers a simplified
framework that retains many of the interesting features of the full
CDFT, such as the gauge dependent basic variables and the choice about
how to incorporate spin.~\cite{Tellgren2018} In particular, the status of the
HK theorem turns out to be intermediate between standard
DFT and paramagnetic CDFT.

With the vector potential determined by a
magnetic field $\B$ and gauge shift $\a$, one finds
that the paramagnetic term is given by
\begin{equation*}
  \langle \A, \jpara \rangle = \frac{1}{2}
  \B\cdot\mathbf{L}_{\mathbf{G}} + \a\cdot\pp,
\end{equation*}
where $\pp = \int \jpara \d\rr$ is the canonical momentum
and $\mathbf{L}_{\mathbf{G}} = \int (\rr-\mathbf{G})\times \jpara \d\rr$ is the
canonical angular momentum relative to $\mathbf{G}$. The reference
point $\mathbf{G}$ is in fact redundant in the sense that we can absorb $-\tfrac{1}{2} \B\times\mathbf{G}$ into the constant $\a$, but it is still a very tangible degree of freedom in actual calculations. It is not only constant in the
sense that it does not vary over space, but also in the sense that we
take it to be fixed even when $\B$ and $\a$ are
varied. Note that being defined with the paramagnetic current, both $\pp$ and $\mathbf{L}_{\mathbf{G}}$ are
gauge dependent, unlike the physical momentum $\boldsymbol{\pi} = \int
\mathbf{j} \d\rr$ and the physical angular momentum $\mathbf{J}_{\mathbf{G}} = \int (\rr-\mathbf{G})\times \mathbf{j} \d\rr$.

That $\pp$ is well-defined is guaranteed by the restriction
$\jpara \in \vec L^1$ required to formulate a paramagnetic CDFT. However,
to guarantee that $\mathbf{L}_{\mathbf{G}}$ is well-defined we make the assumption
that $|\rr | \, \jpara \in \vec L^1$ also, for reasons of
compatibility (see Section~\ref{sec:Prel}) between the density and current density, that
$|\rr|^2 \rho \in L^1$. Hence, we only allow wave functions
with finite second-order moments. Under these conditions, we may specialize the Hamiltonian to uniform fields
\begin{equation*}
  \bar{H}[u,\mathbf{a},\mathbf{B}] = \bar{H}[u, \tfrac{1}{2} \B\times(\rr-\mathbf{G}) + \a],
\end{equation*}
and define the ground state energy functional
\begin{equation*}
\begin{aligned}
  &E[u,\a,\B]\\
  &=
  \inf_{\rho,\pp,\mathbf{L}_{\mathbf{G}}} \Big( F_\mathrm{LDFT}[\rho,\pp,\mathbf{L}_{\mathbf{G}}] + \langle u,\rho
    \rangle + \a\cdot\pp + \frac{1}{2}
    \B\cdot\mathbf{L}_{\mathbf{G}}  \Big)
\end{aligned}
\end{equation*}
with the functional
\begin{equation*}
  F_\mathrm{LDFT}[\rho,\pp,\mathbf{L}_{\mathbf{G}}]= \inf_{\Gamma \mapsto
    \rho,\pp,\mathbf{L}_{\mathbf{G}}} \trace(H_0 \Gamma).
\end{equation*}
This framework is termed LDFT~\cite{Tellgren2018}, with `L' standing for linear vector
potentials or the angular momentum~$\mathbf{L}_{\mathbf{G}}$.

As the triple $(\rho,\pp,\mathbf{L}_{\mathbf{G}})$ is linear in
the density matrix $\Gamma$, the analogue of HK1 holds
automatically. The theory also has a convex structure that immediately
leads to a mapping between supergradients of
$E[u,\a,\B]$ and subgradients of
$F_\mathrm{LDFT}[\rho,\pp,\mathbf{L}_{\mathbf{G}}]$.
However, it is subject to some of the same counterexamples to a full
HK theorem as paramagnetic CDFT: In a cylindrically
symmetric system, the ground-state wave function is piecewise constant
as a function of a magnetic field directed along the symmetry axis and
the energy is piecewise linear. Nonetheless, a stronger result is known
for LDFT than CDFT, because all LDFT counterexamples feature cylindrical
symmetry. Excluding the cylindrically symmetric densities, a HK2-type result is available~\cite{Tellgren2018} if we can take the unique-continuation property (see Section~\ref{sec:mUCP}) for the respective Hamiltonian for granted.

\begin{theorem}
Let $\bar{H}[u_1,\a_1,\B_1]$ and $\bar{H}[u_2,\a_2,\B_2]$ be two
Hamiltonians with non-degenerate ground states $\psi_1$ and $\psi_2$, respectively. Suppose these ground states share the same density triple, i.e., $\psi_1, \psi_2 \mapsto (\rho,\pp,\mathbf{L}_{\mathbf{G}})$. Suppose further that $\rho$ is not cylindrically symmetric about any axis. Then (a) $\psi_1$ and $\psi_2$ are equal up to a global phase, (b) the potentials are equal up to a constant shift $u_1 = u_2 + \text{constant}$, and (c) the vector potentials are equal $(\a_1,\B_1) = (\a_2,\B_2)$.
\end{theorem}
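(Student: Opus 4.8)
The plan is to deduce all three conclusions from a single mechanism: the shared density triple first forces $\psi_1$ and $\psi_2$ to be a \emph{common} ground state, and then the difference of the two eigenvalue equations pins down the potentials.

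First I would establish (a). Since the triple $(\rho,\pp,\mathbf{L}_{\mathbf{G}})$ is linear in the state, the automatic HK1 for LDFT (established exactly as Theorem~\ref{th:paraCDFT-HK1}) applies: among all states mapping to the fixed triple the coupling terms $\langle u,\rho\rangle + \a\cdot\pp + \tfrac12\B\cdot\mathbf{L}_{\mathbf{G}}$ are constant, so both $\psi_1$ and $\psi_2$ minimize $\trace(H_0\,\cdot)$ over that constrained set and are therefore ground states of \emph{both} Hamiltonians. Because $\psi_2$ is a non-degenerate ground state of $\bar H[u_2,\a_2,\B_2]$, the common ground-state space is one-dimensional, whence $\psi_1 = e^{\i\theta}\psi_2$; I denote this common state $\psi$.

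Next I would subtract the eigenvalue equations $\bar H[u_i,\a_i,\B_i]\psi = E_i\psi$. The kinetic, interaction, and $|\A|^2$ pieces combine so that, writing $U = u_2-u_1$ and the again-linear difference field $\A_2-\A_1 = \mathbf{b}\times\rr+\mathbf{q}$ with $\mathbf{b} = \tfrac12(\B_2-\B_1)$, one obtains $\sum_j\bigl(-\i(\A_2-\A_1)(\rr_j)\cdot\nabla_j + U(\rr_j)\bigr)\psi = (E_2-E_1)\psi$; the divergence term is absent since $\nabla\cdot(\mathbf{b}\times\rr)=0$. Multiplying by $\bar\psi$ and taking the imaginary part annihilates $U$ and the energy, leaving $\sum_j(\A_2-\A_1)(\rr_j)\cdot\nabla_j|\psi|^2 = 0$, i.e.\ $\sum_j\nabla_j\cdot((\A_2-\A_1)(\rr_j)|\psi|^2)=0$; integrating out all but one coordinate yields the single-particle identity $\nabla\cdot(\rho\,(\A_2-\A_1)) = 0$, which is precisely Eq.~\eqref{eqVRHKloopholeNecCond} specialized here.

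The crux, and the step I expect to be the main obstacle, is to show that $\nabla\cdot(\rho\,(\mathbf{b}\times\rr+\mathbf{q})) = 0$ forces $\mathbf{b}=\mathbf{q}=\mathbf 0$. Rewritten using divergence-freeness as $(\A_2-\A_1)\cdot\nabla\rho = 0$, this transport equation says $\rho$ is invariant under the flow generated by the affine field $\mathbf{b}\times\rr+\mathbf{q}$, and I would split into cases. If $\mathbf{b}=\mathbf 0$ and $\mathbf{q}\neq\mathbf 0$ the flow is a nontrivial translation, and invariance of an $L^1$ density along lines parallel to $\mathbf{q}$ forces $\rho\equiv 0$ by Fubini, contradicting $\int\rho = N$. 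If $\mathbf{b}\neq\mathbf 0$ and $\mathbf{q}\perp\mathbf{b}$ the field is a pure rotation about a shifted axis, so $\rho$ would be cylindrically symmetric about that axis, contradicting the hypothesis. If $\mathbf{b}\neq\mathbf 0$ and $\mathbf{q}\cdot\mathbf{b}\neq 0$ the flow is a screw motion, and combining $2\pi$-periodicity in the angular variable with the unbounded axial translation again forces $\rho\equiv 0$ by an integrability argument. Hence $\A_2-\A_1\equiv\mathbf 0$, giving $\B_1=\B_2$ from the linear part and $\a_1=\a_2$ from the constant part, which is (c). Substituting $\A_1=\A_2$ back into the difference equation leaves $\sum_j U(\rr_j)\,\psi = (E_2-E_1)\psi$; dividing by $\psi$, which is nonzero almost everywhere by the unique-continuation property of Section~\ref{sec:mUCP}, gives $\sum_j U(\rr_j) = E_2-E_1$, and varying a single coordinate shows $U$ is constant, i.e.\ $u_1 = u_2 + \text{const}$, proving (b). Care is needed because $\rho$ need not be smooth, so the transport reasoning should be phrased as flow-invariance in a distributional sense rather than pointwise along characteristics.
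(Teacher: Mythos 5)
Your proposal is correct, and it reconstructs what is essentially the intended argument: the paper itself states this theorem without proof (deferring to Ref.~\citenum{Tellgren2018}), but the surrounding text supplies exactly the ingredients you use --- the automatic HK1 from linearity of $(\rho,\pp,\mathbf{L}_{\mathbf{G}})$ in the state, and the shared-eigenstate necessary condition of Section~\ref{sec:cond-count}, which for a divergence-free affine difference field reduces to your transport equation $(\A_2-\A_1)\cdot\nabla\rho=0$. Your case analysis of the affine flow (translation, rotation about a shifted axis, screw motion) is the right closing step and correctly isolates cylindrical symmetry as the only escape compatible with $\int\rho=N$, matching the paper's remark that ``all LDFT counterexamples feature cylindrical symmetry.'' Two points deserve explicit mention. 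First, when you integrate $\sum_j\nabla_j\cdot\bigl((\A_2-\A_1)(\rr_j)|\psi|^2\bigr)=0$ over all but one coordinate, the discarded terms are divergences of fields growing linearly in $\rr_j$, so their vanishing requires decay beyond mere normalizability; this is covered by the finite-second-moment assumption ($|\rr|^2\rho\in L^1$, $|\rr|\,\jpara\in\vec L^1$) that the LDFT section imposes, and you should invoke it. Second, your reliance on the unique-continuation property for part (b) is consistent with the paper, which explicitly conditions the theorem on taking the UCP for granted.
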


\section{The unique-continuation property for magnetic Hamiltonians}
\label{sec:mUCP}

Generally, the unique-continuation property (UCP) for solutions of the Schr\"odinger equation gives conditions on the involved potentials such that if a (distributional) solution vanishes on a set of positive measure it must vanish everywhere.
The question if the UCP holds when the effect of a magnetic field is taken into account was studied on numerous occasions~\cite{BKRS,Wolff1992,Kurata1,Kurata2,Regbaoui,LaestadiusBenedicksPenz}.
The best result for a Hamiltonian of the type of Eq.~\eqref{eq:CDFT-Ham-general} was established in \citeauthor{Garrigue2020-magneticHK}~\cite{Garrigue2020-magneticHK} and we will repeat it here. The restrictions on the involved potentials is in the form of $L_{\mathrm{loc}}^p$ spaces on the space domain $\R^3$ (the reference gives the more general $\R^d$ but our treatment is for simplicity restricted to $\R^3$). For vector fields this means the space is of the form $\vec L_{\mathrm{loc}}^p = L_{\mathrm{loc}}^p \times L_{\mathrm{loc}}^p \times L_{\mathrm{loc}}^p$.

\begin{theorem}[magnetic UCP]
 Let $\A\in \vec L_{\mathrm{loc}}^q$ and  $|\B'|,\mathrm{div}\,\A, v,w\in L^p_{\mathrm{loc}}$,
 where  $p>2$ and $q>6$. Suppose that $\psi$ is a solution to $H[v,w,\B',\A]\psi=E\psi$. If $\psi$ vanishes on a set of positive measure (or if it vanishes to infinite order at a point), then $\psi=0$. 
\end{theorem}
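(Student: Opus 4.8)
The plan is to deduce the strong unique-continuation property from a Carleman estimate for the flat Laplacian on the configuration space $\R^{3N}$, treating every potential term---including the two magnetic ones---as a perturbation to be absorbed into the leading part. First I would recast the eigenvalue equation as a differential inequality. Expanding $(-\i\nabla_k+\A(\rr_k))^2 = -\nabla_k^2 - 2\i\A(\rr_k)\cdot\nabla_k - \i(\nabla_k\cdot\A(\rr_k)) + \abs{\A(\rr_k)}^2$ and collecting terms, the equation $H[v,w,\B',\A]\psi=E\psi$ yields the pointwise bound
\begin{equation*}
\abs{\Delta\psi} \;\lesssim\; \Phi\,\abs{\nabla\psi} + W\,\abs{\psi}
\end{equation*}
on $\R^{3N}$, where $\Delta=\sum_k\nabla_k^2$, the first-order coefficient is $\Phi=\big(\sum_k\abs{\A(\rr_k)}^2\big)^{1/2}$ (which dominates $\sum_k\abs{\A(\rr_k)}\,\abs{\nabla_k\psi}$ by Cauchy--Schwarz), and $W$ collects the zeroth-order terms $\sum_k\big(\abs{\A(\rr_k)}^2+\abs{\nabla_k\cdot\A(\rr_k)}+\abs{\B'(\rr_k)}\big)+\abs{V}+\abs{E}$ with $V=\sum_k v(\rr_k)+\sum_{k<l}w(r_{kl})$. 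Since $\A$ is only $\vec L^q_\loc$ with $\nabla\cdot\A\in L^p_\loc$, one cannot hope to gauge $\A$ away with enough regularity, so the magnetic term must be handled directly in the estimate.

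The heart of the argument is a Carleman inequality of the form
\begin{equation*}
\tau^{3}\!\int e^{2\tau\phi}\abs{u}^2 + \tau\!\int e^{2\tau\phi}\abs{\nabla u}^2 \;\lesssim\; \int e^{2\tau\phi}\abs{\Delta u}^2,
\end{equation*}
valid for all large $\tau$ and all $u$ supported away from a chosen base point, with a radial weight $\phi$ singular at that point (for infinite-order vanishing a logarithmic weight such as $\phi(x)=-\log\abs{x}$, tuned to satisfy the relevant pseudoconvexity condition). The gradient term on the left is indispensable: only with it can the first-order contribution $\Phi\,\abs{\nabla u}$ be moved to the left-hand side once $\tau$ is large. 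The decisive point is the absorption of the potentials with \emph{dimension-independent} hypotheses. A naive Hölder--Sobolev bound on $\R^{3N}$ would demand $V\in L^{3N/2}_\loc$ and $\A\in L^{3N}_\loc$, thresholds that worsen with $N$. Instead I would exploit that each summand depends on a single three-dimensional block: for a term $v(\rr_k)$ one integrates first in $\rr_k$, applying Hölder together with the three-dimensional embedding $H^1(\R^3)\hookrightarrow L^6$ in that block while freezing the remaining $3(N-1)$ coordinates, and only afterwards integrates in those; the two-body terms $w(r_{kl})$ are treated the same way in the relative coordinate $\rr_k-\rr_l$. This localizes the integrability requirement to $\R^3$, producing the $N$-independent thresholds $p>2$ for all scalar-type coefficients (so that $\abs{\A}^2,\nabla\cdot\A,v,w,\abs{\B'}$ are absorbed by $\tau^3\!\int e^{2\tau\phi}\abs{u}^2$) and $q>6$ for $\A$ itself in the first-order term (absorbed by $\tau\!\int e^{2\tau\phi}\abs{\nabla u}^2$).

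The spin coupling $\sum_k\B'(\rr_k)\cdot\mathbf{S}_k$ is a Hermitian matrix on the finite-dimensional spin space of operator norm bounded by $\sum_k\abs{\B'(\rr_k)}$; writing $\psi$ as a spinor-valued function, the Carleman estimate applies componentwise and this coupling is absorbed exactly like a scalar potential of the same integrability, so spin adds nothing essential. With the estimate in hand, the standard iteration shows that a solution vanishing to infinite order at a point vanishes in a neighborhood, and a connectedness/propagation argument along chains of balls upgrades this to $\psi\equiv 0$ on all of $\R^{3N}$. To treat the hypothesis that $\psi$ merely vanishes on a set of positive measure, I would first invoke a density-point argument: at a Lebesgue density point of the zero set, elliptic regularity together with the equation forces $\psi$ to vanish to infinite order there, reducing this case to the strong-UCP case just handled.

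The main obstacle is the simultaneous control of the first-order magnetic term under $N$-independent assumptions. Absorbing $\Phi\,\abs{\nabla u}$ requires both a gradient term in the Carleman estimate carrying a favorable power of $\tau$ and a careful Hölder balance within one three-dimensional block, matching the $\R^3$ Sobolev exponent against the threshold $q>6$; getting these exponents to close---and ensuring the weight $\phi$ is chosen so that the estimate's constants do not degrade as the ambient dimension $3N$ grows---is the delicate part. Establishing the Carleman inequality itself, with the gradient term and the correct $\tau$-powers (a Koch--Tataru/Wolff-type estimate effectively reduced to a single particle block), is where the real work lies; the subsequent propagation and the reduction from positive measure to infinite-order vanishing are comparatively routine.
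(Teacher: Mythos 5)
The paper does not actually prove this theorem: it is imported from Ref.~\citenum{Garrigue2020-magneticHK} (``\ldots was established in Garrigue \ldots and we will repeat it here''), so there is no in-paper argument to compare against. Your sketch does, however, track the strategy of that reference quite faithfully: recast the eigenvalue equation as a differential inequality $|\Delta\psi|\lesssim \Phi\,|\nabla\psi|+W\,|\psi|$ on $\R^{3N}$, prove a Carleman estimate that retains a gradient term on the left so the first-order magnetic contribution can be absorbed, and---crucially---perform the H\"older/Sobolev absorption block by block in each three-dimensional coordinate $\rr_k$ (and in the relative coordinate for $w(r_{kl})$) so that the thresholds $p>2$, $q>6$ come out independent of $N$; the spin--Zeeman term is indeed just a bounded matrix-valued potential controlled by $|\B'|$. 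Two caveats. First, your reduction of the positive-measure hypothesis to infinite-order vanishing is understated: it is not a consequence of ``elliptic regularity together with the equation'' but a separate quantitative lemma (of de Figueiredo--Gossez/Regbaoui type, cf.\ Ref.~\citenum{Regbaoui}) asserting that solutions of such differential inequalities vanishing on a set of positive measure vanish to infinite order at almost every density point of that set; its proof requires its own Carleman/doubling machinery and is not ``comparatively routine''. Second, the specific $\tau$-powers you display ($\tau^3$ on the $|u|^2$ term, $\tau$ on the $|\nabla u|^2$ term) are tied to a particular choice of weight and do not survive unchanged once the weight is made logarithmic or singular; what the argument actually needs is that both prefactors grow with $\tau$ uniformly in the ambient dimension $3N$, and verifying this for a weight compatible with the blockwise absorption is precisely the ``real work'' you correctly locate in the Carleman inequality.
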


This result can then be directly used to derive a HK-type result for a given magnetic field $\B'$ and vector potential $\A$ just like in the standard DFT case, see Section~IV of Part~I. Note that this is \emph{not} what one would call a HK-result for CDFT, where it should be possible to determine the magnetic field and/or the vector potential from the given density, maybe including other quantities like the current density. The possibility of such results will be studied in detail in the following sections. To summarize, we give the HK-result that is presented in Theorem~1.5 of \citeauthor{Garrigue2020-magneticHK}~\cite{Garrigue2020-magneticHK}.

\begin{theorem}[magnetic HK]
Let $\A\in \vec L^q + \vec L^\infty$, $\B'\in \vec L^p + \vec L^\infty$ and $v_1,v_2,w\in L^p + L^\infty$ with $p>2$ and $q>6$. If there are two normalized ground states $\psi_1$ and $\psi_2$ of 
$H[v_1,w,\B',\A]$ and $H[v_2,w,\B',\A]$, respectively, such that $\rho_{\psi_1}=\rho_{\psi_2}$, then the potentials $v_1,v_2$ are equal up to a constant.
\end{theorem}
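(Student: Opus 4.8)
The plan is to follow the two-step HK strategy specialized to the present setting, in which only the scalar potential differs while $\A$, $\B'$, and $w$ are held fixed. The decisive structural observation is that the difference of the two Hamiltonians, $H[v_1,w,\B',\A]-H[v_2,w,\B',\A]=\sum_k (v_1-v_2)(\rr_k)$, is a pure multiplication operator coupling only to the one-body density, since every kinetic, magnetic, Zeeman, and interaction term cancels. Writing $g=v_1-v_2$ and letting $E_i$ denote the respective ground-state energies, I would first run the energy-inequality version of HK1: by Rayleigh--Ritz, $E_1\le\langle\psi_2|H[v_1,w,\B',\A]|\psi_2\rangle=E_2+\langle g,\rho\rangle$, and symmetrically $E_2\le E_1-\langle g,\rho\rangle$, where I have used $\rho_{\psi_1}=\rho_{\psi_2}=\rho$. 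Adding the two inequalities forces equality throughout, so $\psi_2$ is in fact also a ground state of $H[v_1,w,\B',\A]$ (and $\psi_1$ of $H[v_2,w,\B',\A]$). This step needs no non-degeneracy assumption; it only requires that $\psi_2$ lie in the form domain of $H[v_1,w,\B',\A]$, which holds because $v_1$ and $v_2$ belong to the same potential class $L^p+L^\infty$.

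Having produced a common ground state $\psi_2$ of both Hamiltonians, I would subtract the two eigenvalue equations $H[v_1,w,\B',\A]\psi_2=E_1\psi_2$ and $H[v_2,w,\B',\A]\psi_2=E_2\psi_2$ to obtain the pointwise identity $\sum_k g(\rr_k)\,\psi_2=(E_1-E_2)\,\psi_2$. On the set where $\psi_2\neq 0$ this yields $\sum_k g(\rr_k)=E_1-E_2$. To upgrade this to an almost-everywhere statement I would invoke the magnetic UCP (the immediately preceding theorem): since $\psi_2$ is normalized, hence nontrivial, the UCP guarantees that its zero set in $\R^{3N}$ has measure zero, so $\sum_k g(\rr_k)=E_1-E_2$ holds for almost every $(\rr_1,\dots,\rr_N)$.

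The final step is an elementary separation argument on $g$. Because $g=v_1-v_2\in L^p+L^\infty$ is locally integrable, Fubini's theorem lets me fix generic coordinates $\rr_2,\dots,\rr_N$ and read off $g(\rr_1)=E_1-E_2-\sum_{k\ge 2}g(\rr_k)=\text{const}$ for almost every $\rr_1$; hence $v_1-v_2$ is constant almost everywhere, which is exactly the claimed assertion.

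I expect the genuine obstacle to be the application of the magnetic UCP, i.e.\ securing that the normalized ground state is nonzero almost everywhere. In contrast to standard DFT, where non-vanishing follows from comparatively soft arguments, the magnetic case forces one to verify that the fixed potentials meet the local integrability hypotheses of the UCP theorem ($\A\in\vec L^q_{\mathrm{loc}}$ with $q>6$ and $|\B'|,\mathrm{div}\,\A,v,w\in L^p_{\mathrm{loc}}$ with $p>2$). The global classes $\vec L^q+\vec L^\infty$ and $L^p+L^\infty$ embed into the required local spaces, but the condition on $\mathrm{div}\,\A$ is not manifest from the stated hypotheses; I would address this either by passing to a convenient gauge (noting that a gauge change leaves both $v_i$ and $\rho$, and therefore the conclusion, untouched) or by appealing to the gauge-covariant form of the UCP estimate. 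A secondary technicality is the operator-theoretic groundwork---self-adjointness and agreement of form domains for these singular potentials---needed to make the Rayleigh--Ritz comparison and the eigenvalue equations rigorous.
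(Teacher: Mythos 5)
Your proposal is correct and follows exactly the route the paper intends: it does not prove this theorem itself but cites Garrigue's Theorem~1.5 and points to the standard-DFT argument (HK1 via the Rayleigh--Ritz energy inequalities, then HK2 by subtracting the eigenvalue equations and dividing by the shared ground state, whose almost-everywhere non-vanishing is supplied by the magnetic UCP stated immediately before). Your observation that the hypothesis $\mathrm{div}\,\A\in L^p_{\mathrm{loc}}$ required by the UCP is not manifest in the HK statement, and is handled by a gauge choice, is a legitimate point that the paper's summary glosses over.
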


\section{Magnetic-field DFT}
\label{sec:B-DFT}

As an alternative to paramagnetic CDFT, it is possible to construct a theory more like standard DFT but parametrized by the magnetic field.
Such a theory 
is commonly referred to as magnetic-field DFT (BDFT for short) and is due to \citeauthor{GRAYCE}~\cite{GRAYCE}.
We denote the Grayce--Harris semiuniversal density functional by
\begin{equation*}
\begin{aligned}
&\FGH[\rho,\A]  = \inf_{\Gamma \mapsto \rho} \trace(H[0,\A]\Gamma),
\end{aligned}
\end{equation*}
such that the ground-state energy can be written 
\begin{equation}\label{eq:GHenergy}
    E[v,\A]= \inf_{\rho} \left\{  \FGH[\rho,\A]  + \langle v ,\rho \rangle  \right\}.
\end{equation}
The Grayce--Harris functional with the diamagnetic term removed is related to other functionals through partial Legendre--Fenchel transformations of its arguments~\cite{TellgrenSolo2018,REIMANN_JCTC13_4089}. In particular, as was exploited by \citeauthor{Laestadius2021}~\cite{Laestadius2021}, we can connect the Grayce--Harris functional to the 
previously introduced paramagnetic functional(s) $\FCSpure[\rho, \jpara]$ (and $\FDM[\rho,\jpara]$)
\begin{equation}\label{eqGHfun}
\begin{aligned}
&\FGH[\rho,\A]  = \inf_{\Gamma \mapsto \rho} \trace(H[0,\A]\Gamma)\\
&\quad  =  \langle \tfrac{1}{2}  \vert \A \vert^2 , \rho \rangle  + \inf_{\jpara} \left\{ \FCSpure[\rho,\jpara] +\langle \A , \jpara \rangle   \right\}.
\end{aligned}
\end{equation}
It is interesting to note that $\FGH[\rho,\A]$ is nonconvex in $\A$ (Proposition~1 in Ref.~\citenum{Laestadius2021}), such that it can describe not just diamagnetic systems.

Equation~\eqref{eq:GHenergy} is the BDFT variational principle. 
In our lingo, we can note that HK1 is available to us through this semiuniversal nature of $\FGH[\rho,\A]$: Suppose for a given $\A$, we have a density $\rho$ that comes from two different Hamiltonians $H[v_1,\A]$ and $H[v_2,\A]$, then
\begin{equation*}
    \begin{aligned}
        \FGH[\rhogs,\A]  &= \inf_{\Gamma \mapsto \rhogs} \trace(H[0,\A]\Gamma), \\
        E[v,\A]  &=   \FGH[\rhogs,\A]  + \langle v ,\rhogs   \rangle 
    \end{aligned}
\end{equation*}
imply that the two Hamiltonians $H[v_1,\A]$ and $H[v_2,\A]$ must share a ground state.

Furthermore, we also have a type of full HK result: For every fixed $\A$, a positive ground-state density $\rhogs(\rr) > 0$ (which follows from a magnetic UCP almost everywhere, see Section~\ref{sec:mUCP}) determines $v$ up to a constant~\cite{GRAYCE}. Again, in our lingo, this can be seen through the next step of a HK2. Simply use the common ground state from HK1 and subtract the two Schr\"odinger equations (recall that $\A$ is fixed and the same). After multiplication with $\psi_\mathrm{gs}^*$, integrating out all particle positions $\rr_2,\ldots,\rr_N$ and dividing by $\rhogs$ then establishes that $v_1-v_2 = \text{constant}$.

\section{Total CDFT}
\label{sec:tCDFT}

Based on the gauge invariance and the fact that the total (physical) current is used as a basic variable in time-dependent CDFT~\cite{VIGNALE_PRB70_201102,VIGNALE_PRL77_2037}, it seems a natural approach to also use this current (and not the paramagnetic current) density for the theory without time dependence. 
Moreover, as will be discussed in Section~\ref{sec-MDFT}, the Maxwell--Schr\"odinger energy minimization principle also leads to a DFT formulated with the total current density. We therefore  
now turn to the question of formulating CDFT using the total current density.  
Recall that, for given wave function $\psi$ (or a density matrix $\Gamma$) and a vector potential $\A$, we define the total current density $\jtot = \jpara_\psi + \A \rho_\psi$ (or with $\jpara_\Gamma$ and $\rho_\Gamma$). We will investigate two different routes of formulating the theory, that is

\begin{enumerate} [label=(\roman*)]
    \item varying only $\psi$ (or $\Gamma$) which then requires that $\A$ is fixed and known, and 
    \item having $\jtot$ as an entirely free parameter, however, still assuming that there exists some $(\tilde v,\tilde \A)$ that has the given density pair $(\rho,\jtot)$ as ground-state densities. 
\end{enumerate}

We shall here see that both formulations run into problems. For simplicity we will restrict the discussion to pure states (but the reader can freely replace $\psi$ by $\Gamma$ and the proper adjustments). We shall also look at what results on HK-type theorems using the total current density can be obtained using a different methodology than the partitioning into HK1 and HK2. These results are unfortunately {\it quite restrictive}. 
Again, the UCP will play a role, i.e., a ground state $\psi_0$ of the given Hamiltonian is almost everywhere nonzero such that we can divide by it and still make statements true for the full domain considered (almost everywhere). (See Section~\ref{sec:mUCP} above for further details) 

To begin our study, 
if the Hamiltonian $H[v,\A]$ has a ground state $\psi_0$, then the total current density is given by 
\begin{equation} \label{eq:j-to-a}
\jtot = \jpara_{\psi_0} + \rho_0\A,
\end{equation}
with $\rho_0 = \rho_{\psi_0}$. 
To make the connection between a $(\rho,\jtot)$-density functional and the expectation value of the energy for a discussion on the HK1 and HK2 structure, we write for a free $\jtot$
\begin{equation}\label{eq:a-D}
\begin{aligned}
    \jtot = \jpara_\psi + \frac{\jtot - \jpara_\psi}{\rho_\psi} \rho_\psi  =: \jpara_\psi + \a[\psi; \jtot] \rho_\psi.
\end{aligned}
\end{equation}
Note that the last equality defines a vector $\a = \a[\psi;\jtot]$ that is well-defined as long as $\rho_\psi \neq 0 $, which is guaranteed by the UCP (Section~\ref{sec:mUCP}).  
Then, it holds using Eq.~\eqref{eq:E-cdft}
\begin{equation} \label{eq:EjCDFT0}
\begin{aligned}
    E[v,\A] = \inf_{\psi}\big\{ \langle \psi|H_0|\psi \rangle - \langle \a[\psi; \jtot] \cdot \A , \rho_\psi \rangle &  \\
     + \langle \A,   \jtot \rangle  + \langle  v + \tfrac 1 2 \vert \A \vert^2, \rho_\psi\rangle     \big\} &.
\end{aligned}
\end{equation}
This equation will be the starting point of our analysis here since it realizes the desired linear coupling between the total current and the vector potential.

For an approach where $\A$ is fixed in $\jtot = \jpara_\psi + \rho_\psi \A$ (i.e., only varying $\psi$), we obviously can take $\a[\psi; \jtot] = \A$ for all considered $\psi$'s and Eq.~\eqref{eq:EjCDFT0} reduces to
\begin{equation} \label{eq:EjCDFT1}
\begin{aligned}
E[v,\A] &= \inf_{\psi}\left\{ \langle \psi |H_0 |\psi \rangle   + \langle \A,   \jtot \rangle  + \langle  v - \tfrac 1 2 \vert \A \vert^2, \rho_\psi\rangle     \right\},\\
\jtot &= \jpara_\psi + \rho_\psi \A .
\end{aligned}
\end{equation}
Now, in an attempt to obtain a HK1 result, assume that $\rho$ and $\jtot$ are fixed such that the r.h.s.\ in Eq.~\eqref{eq:EjCDFT1} becomes
\begin{equation*}
\begin{aligned}
     \inf_{\psi\mapsto (\rho,\jtot)} \left\{ \langle \psi | H_0 |\psi \rangle  \right\}  + \langle \A, \jtot \rangle  + \langle v - \tfrac 1 2 \vert \A \vert^2, \rho \rangle & \\
     =: F_1[\rho,\jtot]  + \langle \A,   \jtot \rangle  + \langle v - \tfrac 1 2 \vert \A \vert^2, \rho \rangle &,  
\end{aligned}
\end{equation*}
where the last equality defines the functional $F_1$. (The idea would then be to vary over $(\rho,\jtot)$ to obtain $E[v,\A]$.) 
However, a more careful consideration of $F_1[\rho,\jtot] = \inf_{\psi\mapsto (\rho,\jtot)}\left\{ \langle \psi| H_0 |\psi \rangle  \right\} $ is needed. The notation $\psi\mapsto (\rho,\jtot)$ here assumes that the admissible set of wave functions satisfies
\begin{equation}\label{eq:jtotjpara}
\rho_\psi = \rho, \quad \jpara_\psi + \rho\A = \jtot,
\end{equation}
i.e., the functional has a parametric dependence on $\A$ through the minimization domain, and we (must) write $F_1[\rho,\jtot] = F_1[\rho,\jtot;\A]$. Thus, while it holds
\[
E[v,\A] = \inf\left\{F_1[\rho,\jtot;\A]  + \langle \A,   \jtot \rangle  + \langle v - \tfrac 1 2 \vert \A \vert^2, \rho \rangle  \right\},
\]
$F_1$ is not a universal functional of $(\rho,\jtot)$ since its search domain over $\psi$ depends on $\A$. The observant reader might already have noted that in this case
\begin{align*}
    \langle \A,   \jtot \rangle  + \langle v - \tfrac 1 2 \vert \A \vert^2, \rho \rangle
    =
    \langle \A,   \jpara \rangle  + \langle v + \tfrac 1 2 \vert \A \vert^2, \rho \rangle
\end{align*}
by Eq.~\eqref{eq:jtotjpara}. Moreover, $F_1[\rho,\jtot;\A] = \FLL[\rho,\jpara]$ and the energy in this formulation is simply reduced to the corresponding one of paramagnetic CDFT (see Section~\ref{sec:pCDFT}). Consequently, such a total current formulation that just has been presented is nothing but a more or less obvious reformulation of paramagnetic CDFT. This observation has not gone unnoticed in the literature and we refer to Refs.~\citenum{Vignale-comment-2013} and \citenum{Tellgren2018} for a further pedagogical discussion of this fact.

Let us now continue and attempt to obtain a HK1 result. Suppose that $\rho$ and $\jtot$ are fixed such that the rhs. in Eq.~\eqref{eq:EjCDFT0} becomes
\begin{equation} \label{eq:EjCDFT}
\begin{aligned}
         \inf_{\psi\mapsto \rho} & \left\{ \langle \psi | H_0 | \psi \rangle - \langle \a[\psi; \jtot] \cdot \A , \rho \rangle \right\}  \\
         &+ \langle \A, \jtot \rangle 
         + \langle v + \tfrac 1 2 \vert \A \vert^2, \rho \rangle . 
\end{aligned} 
\end{equation}
Note that, for the first term in Eq.~\eqref{eq:EjCDFT}, we only need to restrict $\psi$ such that $\rho_\psi = \rho$. 
Note, in particular, the term $\langle \a[\psi; \jtot] \cdot \A , \rho \rangle $ inside the constrained search $\inf_{\psi\mapsto (\rho,\jtot)}\left\{ \langle \psi |H_0 |\psi \rangle - \langle \a[\psi; \jtot] \cdot \A , \rho \rangle \right\}$ (the would-be $F$-functional). 
Consequently, we have failed to obtain the form (see Section~\ref{sec:Prel})
\begin{align*} 
E[\vv] = \inf_\xx \{ F[\xx] + f[\vv,\xx]  \}, \quad \vv = (v,\A), \quad \xx = (\rho,\jtot) .
\end{align*}
Rather we have obtained $F[\rho,\jtot; \A]$, which is not universal in the sense that it depends on the vector potential. 
Although $(\rho,\jtot)$ is fixed, different potentials $\A$ might alter the selection of $\psi$ in the constrained-search functional and 
based on this partitioning alone, it is not clear that if two potential pairs share $(\rho,\jtot)$ then they also share a ground state. 
We will come back to this matter below when discussing Diener's approach.


In \citeauthor{Diener}~\cite{Diener} an unorthodox formulation of total CDFT was undertaken, including an attempted HK theorem for the total current density based on a suggested new Rayleigh--Ritz variational principle. 
In \citeauthor{Tellgren2012}~\cite{Tellgren2012} it was pointed out that a crucial step of the argument was left unmotivated: The strict inequality in Diener's generalized variational principle was not motivated (see next section). 
Moreover, further technical issues were raised in \citeauthor{Laestadius2014}~\cite{LaestadiusBenedicks2014}. 
Diener's approach is interesting for the reason that it comes very close to succeeding. 
Nonetheless, in \citeauthor{Laestadius2021}~\cite{Laestadius2021} it was finally proven that Diener's approach is 
unfortunately irreparably wrong. We will give a brief summary in the next section.

\subsection{Diener's formulation}
\label{sec:Diener}

\citeauthor{Diener}~\cite{Diener} gave a very interesting attempt to achieve 
a formulation using the total current density.
In particular, he tried to establish a
ground-state DFT of the total current density as well as a 
HK-like result. 
In Ref.~\citenum{Laestadius2021} Diener's attempt was reinterpreted based on a maximin variational principle and, using elementary facts about convexity, it was proven that Diener's approach does not give the correct ground-state energy. Further, it was shown that the suggestion of a HK result is irreparably flawed. We will here outline parts of the argument in Ref.~\citenum{Laestadius2021}. 

Diener's formalism can be simplified by algebraically manipulating the ground-state energy formula until we obtain a variational expression that can be related to his working equations. To give a brief outline, we first recall Section~\ref{sec:B-DFT}
and rewrite the Grayce--Harris functional in Eq.~\eqref{eqGHfun} with $\jarb$ denoting an arbitrary current density, 
\begin{equation*}
  \begin{aligned}
    &G[\rho,\A] =  \langle \tfrac{1}{2}  \vert\A \vert^2, \rho \rangle \\
    &+ \inf_{\jpara} \left\{  \FCSpure[\rho,\jpara] +\langle \A , \jpara \rangle  - \inf_{\jarb} \int \frac{|\jpara + \rho \A - \jarb|^2}{2\rho} \d\rr \right\}
    \\
    & = \inf_{\jpara} \sup_{\jarb} \left\{ \FCSpure[\rho,\jpara]  + \langle \A , \jarb \rangle - \int \frac{|\jpara-\jarb|^2}{2\rho} \d\rr \right\}  .
  \end{aligned}
\end{equation*}
The total current density is reproduced when $\mathbf{k} = \jpara + \rho \A$, also 
solving the minimax problem. We can remark that the issues related to that the correct energy cannot be obtained from a minimization principle for the total current density is mitigated through the above manipulations.
Now, it is a general fact that $\inf_x \sup_y f(x,y) \geq \sup_y \inf_x f(x,y)$, such that  
we next obtain
\begin{equation}
  \label{eqDinmaximin}
  \begin{aligned}
    & G[\rho,\A] \\
    &\geq \sup_{\jarb} \inf_{\jpara} \left\{\FCSpure[\rho,\jpara] + \langle \A , \jarb \rangle  - \int \frac{|\jpara-\jarb|^2}{2\rho} \d\rr \right\}
    \\
    & = \sup_{\jarb} \left\{ \FD[\rho,\jarb] + \langle \A , \jarb \rangle  \right\}
     =: \GD[\rho,\A].
  \end{aligned}
\end{equation}
The last equality is a definition that defines 
$\GD[\rho,\A]$. This, furthermore, identifies Diener's proposed total current-density functional 
\begin{equation}
  \label{eqDienerFun}
  \begin{aligned}
    \FD[\rho,\jarb] & = \inf_{\jpara}  \left\{ \FCSpure[\rho,\jpara] - \int \frac{|\jpara-\jarb|^2}{2\rho} \d\rr \right\}
    \\
      & = \inf_{\Gamma \mapsto \rho} \left\{ \trace( H_0\Gamma ) - \int \frac{|\jpara_{\Gamma}-\jarb|^2}{2\rho} \d\rr \right\}.
  \end{aligned}
\end{equation}
The functional (defined in the right-hand side of Eq.~\eqref{eqDinmaximin}) $\GD$ is convex in $\A$, i.e., the map $\A \mapsto \GD[\rho,\A]$ for fixed $\rho$ is convex. 
Consequently, $\GD$ can \emph{only} describe diamagnetic systems, whereas the Grayce--Harris functional $\FGH[\rho,\A]$ is nonconvex in $\A$. This leads to the fact that (Proposition~2 in Ref.~\citenum{Laestadius2021}): 
  For some $(\rho,\A)$, we have a strict inequality $\FGH[\rho,\A] > \GD[\rho,\A]$.

A question is then (notwithstanding the above) whether Diener's functional $\FD[\rho,\jarb]$ and the variational principle for $\GD[\rho,\A]$ are useful for reconstructing the correct external vector potential from an input pair $(\rho,\jtot = \jpara + \rho\A)$. 
This, together with the (Grayce and Harris) BDFT extension of the HK theorem to determine $v$ (see Section~\ref{sec:B-DFT}), would 
establish a HK-type mapping, i.e.: $(v,\jtot)$ determines $(\rho,\A)$ up to a gauge. 
 

Using Eq.~\eqref{eq:a-D}, we can express a relation between a state $\qmstate$ and an arbitrary vector field $\jarb$ through the effective vector potential
\begin{equation*} 
  \aD(\Gamma,\jarb) = \frac{\jarb - \jpara_{\qmstate}}{\rho_{\qmstate}}.
\end{equation*}
Similar to Eq.~\eqref{eq:j-to-a} we have $\jarb = \jpara_{\qmstate} + \rho_{\qmstate} \aD(\qmstate,\jarb)$, imitating the standard relationship between the total current density, the paramagnetic current, and the actual external vector potential of the system. Suppose now that $\jtot = \jpara + \rho \A$ is the correct ground-state total current density
of a magnetic system described by $\A$. Built into Diener's construction is a possible HK-type mapping, which is clear if we make the following observation: if $\FD(\rho,\jtot)$ always yields a minimizer $\qmstate_{\mathrm{m}}$ in Eq.~\eqref{eqDienerFun} such that $\aD(\qmstate_{\mathrm{m}},\jarb) = \A$, then $(\rho,\jtot)$ determines $\A$. The main result of BDFT (as described in Section~\ref{sec:B-DFT}) would then imply that also the scalar potential would be determined up to an additive constant. 
Elaborating a little further, since the input to the functional $\FD$ is gauge invariant, the external vector potential can at best be determined up to a gauge. Thus, we allow for multiple gauge dependent minimizers $\jparaDm$ in Eq.~\eqref{eqDienerFun} (each coming from a $\qmstate_\mathrm{m}$ with $\aD(\qmstate_{\mathrm{m}},\jarb) = \A + \nabla f$) and where one corresponds to a gauge in which $\aD(\qmstate_{\mathrm{m}},\jarb) = \A$. This {\it would} then be the HK-type mapping resulting from Diener's functional. Alas, the next proposition shows that such an $\FD$-based mapping does not exist. 
\begin{proposition}[Proposition~3 in Ref.~\citenum{Laestadius2021}] \label{prop:EITmasterwork}
  For some $(\rho,\A)$, Diener's current density functional $\FD$ fails to reconstruct the external potential. That is, for any minimizer $\jparaDm$ in Eq.~\eqref{eqDienerFun} we have
  \begin{equation*}
        \frac{\jtot - \jparaDm}{\rho} \neq \A.
  \end{equation*}
\end{proposition}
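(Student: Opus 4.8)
The plan is to argue by contradiction, converting a hypothetical successful reconstruction into the closure of the duality gap between $\FGH$ and $\GD$ — something forbidden by the strict inequality $\FGH[\rho,\A] > \GD[\rho,\A]$ stated just above Proposition~\ref{prop:EITmasterwork}, together with the minimax bound $\FGH \geq \GD$ of Eq.~\eqref{eqDinmaximin}. Concretely, I would fix a pair $(\rho,\A)$ for which that strict inequality holds, realized as a genuine ground-state configuration so that $\jtot$ is the true ground-state total current and $\jpara_0 := \jtot - \rho\A$ the corresponding paramagnetic current. Reconstruction success for a minimizer $\jparaDm$ of Eq.~\eqref{eqDienerFun} means precisely $(\jtot - \jparaDm)/\rho = \A$, i.e.\ $\jparaDm = \jpara_0$; the goal is to show this cannot happen.

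Assume then that some minimizer satisfies $\jparaDm = \jpara_0$. Since $\jpara_0 - \jtot = -\rho\A$, the penalty integral in Eq.~\eqref{eqDienerFun} collapses to $\langle \tfrac{1}{2}|\A|^2,\rho\rangle$, so that
\begin{equation*}
\FD[\rho,\jtot] = \FCSpure[\rho,\jpara_0] - \langle \tfrac{1}{2}|\A|^2,\rho\rangle .
\end{equation*}
Feeding the admissible choice $\jarb = \jtot$ into the supremum defining $\GD$ in Eq.~\eqref{eqDinmaximin}, and using $\langle\A,\jtot\rangle = \langle\A,\jpara_0\rangle + \langle|\A|^2,\rho\rangle$, I obtain
\begin{equation*}
\GD[\rho,\A] \geq \FD[\rho,\jtot] + \langle\A,\jtot\rangle = \FCSpure[\rho,\jpara_0] + \langle\A,\jpara_0\rangle + \langle\tfrac{1}{2}|\A|^2,\rho\rangle .
\end{equation*}
Bounding the right-hand side below by the infimum over $\jpara$ and comparing with the explicit form of $\FGH$ in Eq.~\eqref{eqGHfun} then yields $\GD[\rho,\A] \geq \FGH[\rho,\A]$.

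Combined with the reverse minimax inequality $\GD[\rho,\A] \leq \FGH[\rho,\A]$, this forces $\GD[\rho,\A] = \FGH[\rho,\A]$, contradicting the strict inequality for the chosen $(\rho,\A)$. Hence no minimizer of Eq.~\eqref{eqDienerFun} can equal $\jpara_0$, which is exactly the claim $(\jtot - \jparaDm)/\rho \neq \A$ for every minimizer. Notice that the computation never invokes the ground-state minimality of $\jpara_0$ within $\FGH$; the crude bound $\FCSpure[\rho,\jpara_0] + \langle\A,\jpara_0\rangle \geq \inf_{\jpara}\{\FCSpure[\rho,\jpara] + \langle\A,\jpara\rangle\}$ already suffices, which keeps the argument purely algebraic once the strict gap is in hand.

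The routine part is the algebra above; the step demanding real care is representability. I must ensure the strict-gap pair $(\rho,\A)$ can genuinely be taken to be a ground-state configuration, so that $\jtot$ and $\jpara_0$ are meaningful ground-state currents and so that the infimum in Eq.~\eqref{eqDienerFun} is actually attained — making ``for any minimizer $\jparaDm$'' non-vacuous. The latter I would draw from $\FCSpure$ being expectation-valued and from attainment of $\FD$ via a minimizing density matrix $\qmstate_\mathrm{m} \mapsto \rho$, while the existence of a suitable physical $(\rho,\A)$ is underwritten by the very mechanism behind the gap: the convexity of $\A \mapsto \GD[\rho,\A]$ set against the nonconvexity of $\FGH$ in $\A$. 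I would therefore either isolate these as standing hypotheses or extract the explicit configuration from the construction proving $\FGH[\rho,\A] > \GD[\rho,\A]$.
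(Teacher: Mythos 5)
Your proof is correct and follows essentially the same route as the paper: the review itself defers the proof to Ref.~\citenum{Laestadius2021}, but the machinery it assembles around the statement --- the maximin bound $\FGH[\rho,\A]\geq\GD[\rho,\A]$ of Eq.~\eqref{eqDinmaximin} and the strict inequality for some $(\rho,\A)$ arising from the convexity mismatch --- is exactly what you use, and your contradiction (a reconstructing minimizer would force $\GD[\rho,\A]\geq\FGH[\rho,\A]$ and hence close the gap) is the intended argument. The representability point you flag is the right one to worry about for the physical relevance of the statement, though note that your algebra only needs $\jtot-\rho\A$ to be \emph{some} admissible current so the implication itself is unconditional, and non-attainment of the infimum in Eq.~\eqref{eqDienerFun} would only render the claim vacuously true rather than false.
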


\subsection{Partial HK results} 

We will finish our discussion about total CDFT considering when a HK result can actually be proven. As will be evident, these are quite restricted results. In the one-electron case, a HK theorem follows from $N$-representability constraints and no assumption that the density is a ground-state density is even necessary. Wherever $\rho(\rr) \neq 0$, we can directly reconstruct the external magnetic field as the vorticity 
\begin{equation*}
  \pmb{\nu} = \nabla\times\frac{\mathbf{j}}{\rho} = \nabla\times\frac{\jpara + \rho\A}{\rho} = \mathbf{0} + \B,
\end{equation*}
since $\nabla \times (\jpara/\rho)=\mathbf{0}$ in the one-particle case.
The above HK result for one-electron systems leaves open what happens if the density vanishes on finite volume of space (see Section~\ref{sec:mUCP} for conditions when this cannot happen). Idealized model cases where this happens have been discussed in connection with the Aharonov--Bohm effect. When the density vanishes on an infinitely long cylindrical or tube-shaped region, it follows from the Byers--Yang theorem~\cite{Byers_1961} that the total current density is a periodic function of the flux inside the tube. Hence, magnetic fields that differ in zero-density regions can produce the same total current density. This type of counterexample works for both, one-electron systems and many-electron systems.

\begin{theorem}\label{HK:N1}
   For one-electron systems in a magnetic field, the total current $\jtot$ and the particle density $\rho(\rr) \neq 0$ a.e.\ determine $(v,\A)$ up to a gauge transformation.
\end{theorem}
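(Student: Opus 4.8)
The plan is to reconstruct $\A$ and $v$ separately and \emph{explicitly} from the gauge-invariant data $(\rho,\jtot)$, exploiting the single-particle structure to bypass any variational or HK1--HK2 machinery. Let $\psi$ be a ground state of $H[v,\A]=\tfrac12(-\i\nabla+\A)^2+v$. Since $\rho=\abs{\psi}^2\neq 0$ a.e.\ and a ground state is continuous and, by the magnetic UCP of Section~\ref{sec:mUCP}, nonvanishing, I may take $\rho>0$ everywhere and write the polar decomposition $\psi=\sqrt{\rho}\,\rme^{\i S}$ with a globally defined smooth phase $S$ (here using that $\R^3$ is simply connected). The central object is the physical velocity field $\mathbf{u}=\jtot/\rho=\nabla S+\A$, which is manifestly fixed by the input data.

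First I would recover $\A$ up to gauge. In the one-electron case $\jpara=\rho\,\nabla S$, so $\jpara/\rho=\nabla S$ is curl-free, and taking the curl of $\mathbf{u}$ reproduces exactly the vorticity identity stated just above the theorem,
\begin{equation*}
  \nabla\times\frac{\jtot}{\rho}=\nabla\times(\nabla S+\A)=\nabla\times\A=\B.
\end{equation*}
Hence the magnetic field $\B$ is determined by $(\rho,\jtot)$ using only the polar form (no eigenstate property is needed for this step, matching the $N$-representability remark preceding the theorem), and since $\B=\nabla\times\A$ fixes $\A$ only up to an additive gradient, this pins down $\A$ precisely up to a gauge transformation $\A\mapsto\A+\nabla f$.

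Next I would recover $v$ up to a constant by inserting the polar form into the eigenvalue equation $H[v,\A]\psi=E\psi$ and separating real and imaginary parts (the Madelung form). The imaginary part yields only the continuity equation $\nabla\cdot\jtot=0$, which is automatic for an eigenstate and carries no new information. The real part is a quantum Hamilton--Jacobi equation that can be solved algebraically for the potential,
\begin{equation*}
  v=E+\tfrac12\frac{\nabla^2\sqrt{\rho}}{\sqrt{\rho}}-\tfrac12\abs{\mathbf{u}}^2,
\end{equation*}
where every term on the right except the constant $E$ is a fixed functional of $(\rho,\jtot)$. Thus, if $(v_1,\A_1)$ and $(v_2,\A_2)$ both produce the ground-state data $(\rho,\jtot)$ with energies $E_1,E_2$, subtracting the two expressions gives $v_1-v_2=E_1-E_2=\text{constant}$, completing the reconstruction of $(v,\A)$ up to gauge.

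The main obstacle is not the algebra but justifying the polar decomposition and the differential manipulations: one needs $\rho>0$ on a connected, simply connected region so that $S$ exists globally and $\jpara/\rho$ is genuinely a gradient. Failure of simple connectivity is precisely what drives the Aharonov--Bohm/Byers--Yang counterexamples mentioned in the surrounding text when the density is allowed to vanish on a tube. The hypothesis $\rho\neq 0$ a.e., upgraded to $\rho>0$ on all of $\R^3$ via the magnetic UCP, removes this topological obstruction; the remaining work is checking enough regularity of $\rho$ and $\mathbf{u}$ for $\nabla^2\sqrt{\rho}/\sqrt{\rho}$ and $\nabla\times\mathbf{u}$ to be meaningful, which for a ground state follows from elliptic regularity under the stated $L^p_{\mathrm{loc}}$ assumptions on the potentials.
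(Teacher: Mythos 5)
Your proposal is correct, and its core step is exactly the paper's: for $N=1$ the velocity $\jpara/\rho$ is a gradient, so $\nabla\times(\jtot/\rho)=\B$ recovers the magnetic field pointwise wherever $\rho\neq 0$, fixing $\A$ up to a gradient. Where you diverge is in the recovery of $v$: the paper (implicitly, by the same route as its Theorem~\ref{HK:jpnull}) first uses the semiuniversal BDFT/HK1 argument of Section~\ref{sec:B-DFT} to produce a \emph{common} ground state for the two Hamiltonians and then subtracts the two Schr\"odinger equations, whereas you write the quantum Hamilton--Jacobi identity $v=E+\tfrac12\nabla^2\sqrt{\rho}/\sqrt{\rho}-\tfrac12|\jtot/\rho|^2$ for each system separately and subtract. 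Your version buys something: it is an explicit inversion formula depending only on the shared data $(\rho,\jtot)$ and the energy, so no appeal to a shared ground state (and hence no HK1-type step) is needed; the price is that you must justify the polar decomposition and the pointwise differential identities, which the variational route avoids. One caveat you share with the paper rather than resolve: the hypothesis $\rho\neq 0$ a.e.\ together with the magnetic UCP does not by itself give $\rho>0$ \emph{everywhere}; a continuous eigenfunction can still vanish on a measure-zero nodal set (e.g.\ a line for nonzero angular momentum), in which case $\R^3\setminus\{\psi=0\}$ need not be simply connected, $S$ can be multivalued, and $\nabla\times(\jpara/\rho)$ vanishes only away from the nodal set. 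This determines $\B$ as an $L^p$ function and so does not damage the conclusion, but your phrase ``upgraded to $\rho>0$ on all of $\R^3$ via the magnetic UCP'' overstates what the UCP delivers; the honest statement is that all identities hold almost everywhere, which is the level of rigor of the paper's own argument.
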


In (the very restricted) case of $\jpara=0$ we have the even stronger result that the vector potential gets fully determined. This can be stated also for many-electron systems if additionally to $\rho$ and $\jtot$ also $\jpara$ is given.

\begin{theorem}\label{HK:jpnull}
   The triple $(\rho,\jpara,\jtot)$, with $\rho(\rr) \neq 0$ a.e., determines $\A$ and $v$ up to an additive constant. 
\end{theorem}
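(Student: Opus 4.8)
The plan is to separate the determination of $\A$ from that of $v$. The vector potential can be recovered purely algebraically from the defining relation of the total current, after which the scalar potential is fixed by reducing to the magnetic Hohenberg--Kohn result already stated above.

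First I would suppose the triple $(\rho,\jpara,\jtot)$ is realized as the ground-state data of two Hamiltonians $H[v_1,\A_1]$ and $H[v_2,\A_2]$, with ground states $\psi_1$ and $\psi_2$. Applying the definition $\jtot=\jpara+\rho\A$ (cf.\ Eq.~\eqref{eq:j-to-a}) to each system and using that the triple is shared gives $\jpara+\rho\A_1=\jtot=\jpara+\rho\A_2$, hence $\rho(\A_1-\A_2)=\mathbf{0}$. Since $\rho(\rr)\neq 0$ almost everywhere, this forces $\A_1=\A_2=:\A$ a.e., and in fact $\A=(\jtot-\jpara)/\rho$ can be read off directly. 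This step uses nothing beyond the definition of $\jtot$ and is exactly the improvement over Theorem~\ref{HK:N1}: there, with only $\rho$ and $\jtot$ available, one can reconstruct the magnetic field $\B=\pmb{\nu}$ but the vector potential only up to a gauge, whereas here the explicit knowledge of $\jpara$ pins down $\A$ completely.

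With $\A$ now common to both systems, the two ground states $\psi_1,\psi_2$ belong to Hamiltonians $H[v_1,\A]$ and $H[v_2,\A]$ that differ only in their scalar potentials and that share the ground-state density $\rho$. This is precisely the hypothesis of the magnetic Hohenberg--Kohn theorem stated above (with $\B'=0$ and the fixed interaction $w$), which yields $v_1=v_2+\text{constant}$. If one prefers to spell out the mechanism, it is the usual HK2 step: the two Schr\"odinger equations $H[v_i,\A]\psi=E_i\psi$ can be arranged to share a common ground state $\psi_\mathrm{gs}$ (the HK1 half, available here since the Grayce--Harris functional $\FGH[\rho,\A]$ of Section~\ref{sec:B-DFT} is semiuniversal), and subtracting them cancels the identical kinetic and interaction terms, leaving $\sum_k (v_1-v_2)(\rr_k)\,\psi_\mathrm{gs}=(E_1-E_2)\psi_\mathrm{gs}$. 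The magnetic UCP (Section~\ref{sec:mUCP}) guarantees $\psi_\mathrm{gs}\neq 0$ a.e., so division by $\psi_\mathrm{gs}$ and freezing all but one coordinate force $v_1-v_2$ to be the constant $(E_1-E_2)/N$.

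The main obstacle is not the algebra, which is immediate, but verifying the hypotheses that make the second stage rigorous: one must ensure that the recovered $\A$ together with $v_1,v_2$ and the interaction $w$ satisfy the local-integrability conditions required by the magnetic UCP, so that the nonzero ground state is nonzero almost everywhere and the division is legitimate. A secondary point is that the reduction to a shared-$\A$ problem presupposes that $(\rho,\jpara,\jtot)$ is genuinely $v$-representable by both systems, which is what lets us speak of ground states $\psi_1,\psi_2$ in the first place; granting this (as the statement implicitly does), the argument closes.
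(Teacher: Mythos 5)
Your proposal is correct and follows essentially the same route as the paper: recover $\A=(\jtot-\jpara)/\rho$ directly from the definition of the total current using $\rho\neq 0$ a.e., then invoke the BDFT argument of Section~\ref{sec:B-DFT} (shared ground state via the semiuniversal Grayce--Harris functional, subtraction of the Schr\"odinger equations, and the magnetic UCP) to fix $v$ up to a constant. The paper's own proof is just a two-line version of exactly this; your elaboration of the second stage and the caveats about integrability and representability are consistent with what Section~\ref{sec:B-DFT} already supplies.
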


\begin{proof}
By $\A = (\jtot-\jpara)/\rho$
the vector potential already gets fixed. Then Section~\ref{sec:B-DFT} describes how to determine $v$ up to a constant.
\end{proof}


To the best of our understanding, and besides the above two results, all known attempts in the literature fall short of a general HK result for the total current density.

\section{Maxwell--Schr\"odinger DFT}
\label{sec-MDFT}

\newcommand \Aind {\A_{\mathrm{ind}}}
\newcommand \Bind {\B_{\mathrm{ind}}}
\newcommand \Atot {\A_{\mathrm{tot}}}
\newcommand \Atotix[1] {\A_{\mathrm{tot};{#1}}}
\newcommand \Btot {\B_{\mathrm{tot}}}

\newcommand \Aext {\mathbf{A}_{\mathrm{ext}}}
\newcommand \Aextix[1] {\mathbf{A}_{\mathrm{ext};{#1}}}
\newcommand \Bext {\mathbf{B}_{\mathrm{ext}}}
\newcommand \Bextix[1] {\mathbf{B}_{\mathrm{ext};{#1}}}
\newcommand \Jext {\mathbf{J}_{\mathrm{ext}}} 
\newcommand \Jextix[1] {\mathbf{J}_{\mathrm{ext};{#1}}}

We have seen above that the total current density is not suitable as a variational parameter, at least not in the conventional variational principle. We here consider a modification of the conventional variational principle that also takes into account the energy of the induced magnetic field.

An external magnetic field induces an electric current density $-\mathbf{j}$ in a molecule (recall that the charge of an electron is $-e=-1$ in our units). In accordance with Biot--Savart's law, $\nabla\times\B_{\mathrm{ind}}(\rr) = -\mu_0 \mathbf{j}(\rr)$, this current density in turn induces an internal magnetic field. For a system with a non-degenerate ground state, there is no permanent current density, and in a weak uniform magnetic field $\Bext$, one therefore has
\begin{equation*}
  \Bind(\rr) = \boldsymbol{\sigma}(\rr) \, \Bext + \mathcal{O}(B^2),
\end{equation*}
where $\boldsymbol{\sigma}(\rr)$ is a dimensionless nuclear shielding tensor~\cite{HelgakerJaszunskiRuud}. Its value at the nuclear positions is important in nuclear magnetic resonance spectroscopy and it is sometimes, as with the nucleus-independent chemical shift method~\cite{SCHLEYER_JACS118_6317}, studied at other selected locations within a molecule. The eigenvalues of $\boldsymbol{\sigma}(\rr)$ are typically on the order of 100 ppm or $10^{-4}$. Hence, the induced field tends to be much weaker than the external field. Nonetheless, the induced field has an energy that is typically neglected in standard electronic structure theory, but is accounted for in the Maxwell--Schr\"odinger model of quantum electrons coupled self-consistently to a classical electromagnetic field.
Remarkably, taking into account the energy of the induced magnetic magnetic field in what will then be called Maxwell--Schr\"odinger DFT (or MDFT for short) has a substantial qualitative impact on current-density functional theory~\cite{TellgrenSolo2018}. It allows for a natural formulation using the total current density or, equivalently, its induced magnetic field. Moreover, the central functional turns out to be a version of the Grayce--Harris functional (see Section~\ref{sec:B-DFT}), which now appears as a universal functional, rather than the Vignale--Rasolt functional. In general, the magnetic field induced by the current density can be described by the vector potential
\begin{equation*}
  \Aind(\rr) = -\frac{\mu_0}{4\pi} \int \frac{\mathbf{j}(\rr')}{|\rr-\rr'|} \d\rr'.
\end{equation*}
The energy of the field is
\begin{equation*}
  \mathcal{E}_{\mathrm{ind}} = \frac{1}{2\mu_0} \int |\Bind(\rr)|^2 \d\rr = \frac{\mu_0}{8\pi} \int \frac{\mathbf{j}(\rr)\cdot\mathbf{j}(\rr')}{|\rr-\rr'|} \d\rr \d\rr'.
\end{equation*}
For simplicity, we now demand that both the external and the induced magnetic fields have finite energy, i.e., we take all magnetic fields to belong to the function space
\begin{equation*}
  L^2_{\mathrm{div}}(\mathbb{R}^3)= \{  \mathbf{u} \in L^2(\mathbb{R}^3) \, | \, \nabla\cdot\mathbf{u} = 0 \}.
\end{equation*}
We require vector potentials to satisfy $\nabla\times\A \in L^2_{\mathrm{div}}(\mathbb{R}^3)$.
The Maxwell--Schr\"odinger energy functional is 
\begin{equation*}
\begin{aligned}
  &E_{\mathrm{M}}[v,\Aext] \\
  &= \inf_{\Aind} \left( E[v,\Aext+\Aind] + \frac{1}{2\mu_0} \|\nabla\times\Aind\|_2^2 \right).
\end{aligned}
\end{equation*}
For the external field $\Bext$, we regard not only its vector potentials $\Aext$, but also the associated current density $-\mu_0 \Jext = \nabla\times\Bext$ as an alternate representation of $\Bext$.  For example, the ground-state energy $E[v,\Aext]$ can equally well be regarded as a functional of $\Bext$ or $\Jext$~\cite{TellgrenSolo2018}. Exploiting the gauge invariance of the ground-state energy functional $E[v,\Aext]$, we can now write the Maxwell--Schr\"odinger energy functional as
\begin{equation*}
  E_{\mathrm{M}}[v,\Bext] = \inf_{\Bind} \left( E[v,\Bext+\Bind] + \frac{1}{2\mu_0} \|\Bind\|_2^2 \right).
\end{equation*}
Here, at the outset, the induced magnetostatic field $\Bind$ is treated as an independent variational parameter, which does not necessarily satisfy Biot--Savart's law. However, this relation is satisfied by a minimizer since a form of Biot--Savart's law is just the stationarity condition for the above minimization~\cite{TellgrenSolo2018,Garrigue2020-magneticHK}.
The infimum in the above equation can just as well be taken over $\Btot = \Bext+\Bind$. Then one sees that $E_{\mathrm{M}}[v,\Bext]$ is the Moreau--Yosida regularization (already discussed in Section~\ref{sec:para-MY} for the density functional of paramagnetic CDFT) of the conventional energy $E[v,\Bext]$. This has the immediate consequence of imposing an upper limit on how diamagnetic a system can be in the sense that~\cite{TellgrenSolo2018}
\begin{equation*}
  E_{\mathrm{M}}[v,\Bext] \leq E[v,\mathbf{0}] + \frac{1}{2\mu_0} \langle \Bext, \Bext \rangle.
\end{equation*}
Moreover, expressing the energy $E[v,\Atot]$ in terms of the Grayce--Harris functional gives
\begin{align*}
  E_{\mathrm{M}}[v,\Aext] = \frac{\|\Bext\|_2^2}{2\mu_0} & + \inf_{\rho,\Atot} \left\{ \langle v, \rho \rangle - \frac{\langle \Bext, \Btot \rangle}{\mu_0} \right. \\
  &\left. + \frac{\|\Btot\|_2^2}{2\mu_0}  + G[\rho,\Atot]\right\}.
\end{align*}
or, exploiting gauge invariance,
\begin{align*}
  E_{\mathrm{M}}[v,\Bext] = \frac{\|\Bext\|_2^2}{2\mu_0} & + \inf_{\rho,\Btot} \left\{ \langle v, \rho \rangle - \frac{\langle \Bext, \Btot \rangle}{\mu_0} \right. \\
  &\left. + \frac{\| \Btot\|_2^2}{2\mu_0}  + G[\rho,\Btot]\right\}.
\end{align*}
From this expression it follows that $\bar{E}_{\mathrm{M}}[v,\Bext] = E_{\mathrm{M}}[v,\Bext] - \frac{1}{2\mu_0} \| \Bext\|_2^2$ is jointly concave, and it is, to within a reparametrization eliminating the factor $2\mu_0$, a Legendre--Fenchel transform of the shifted Grayce--Harris functional $\bar{G}[\rho,\Btot] = \frac{1}{2\mu_0} \|\Btot\|_2^2 + G[\rho,\Btot]$:
\begin{equation*}
\begin{aligned}
    &\bar{E}_{\mathrm{M}}[v,\Bext] \\
    &=  \inf_{\rho,\Btot} \left\{ \langle v, \rho \rangle - \frac{1}{\mu_0} \langle \Bext, \Btot \rangle + \bar{G}[\rho,\Btot]\right\}.
\end{aligned}
\end{equation*}

The Maxwell--Schr\"odinger ground-state energy can also be expressed in terms of the paramagnetic current density and the Vignale--Rasolt functional,
\begin{align*}
   E_{\mathrm{M}}[v,\Aext] =& \inf_{\rho,\Atot,\jpara} \bigg\{ \langle v+\tfrac{1}{2} |\Atot|^2, \rho\rangle + \langle \Atot,\jpara\rangle \\
   &\left. + \frac{\|\nabla\times\Atot-\Bext\|_2^2}{2\mu_0} + \FCSpure[\rho,\jpara] \right\}.
\end{align*}
When minimizers $\rho,\Atot,\jpara$ are available, we have
\begin{equation*}
 \begin{split}
  & \langle \tfrac{1}{2} |\Atot|^2, \rho\rangle + \langle \Atot,\jpara\rangle + \frac{\|\Btot-\Bext\|_2^2}{2\mu_0}
  \\
  & = \inf_{\Atot'} \left\{ \langle \tfrac{1}{2} |\Atot'|^2, \rho\rangle + \langle \Atot',\jpara\rangle + \frac{\|\Btot'-\Bext\|_2^2}{2\mu_0} \right\}
  \end{split}
\end{equation*}
and therefore also, with $\Atot' = \Atot + \epsilon \boldsymbol{\zeta}$,
\begin{equation*}
  \langle \boldsymbol{\zeta}, \jpara + \rho \Atot + \tfrac{1}{\mu_0} \nabla\times(\nabla\times\Atot-\Bext)\rangle \geq 0,
\end{equation*}
for all $\boldsymbol{\zeta}$. Hence, we define the total current density in the Maxwell--Schr\"odinger model to be
\begin{equation*}
  \mathbf{j} = \jpara + \rho \Atot = \jpara + \rho \Aext + \rho \Aind
\end{equation*}
and for minimizers we recover Biot--Savart's law, $-\mu_0 \mathbf{j} = \nabla\times(\Btot-\Bext)$, which is now a self-consistent condition where the induced field appears on both the left- and right-hand side.

The convex structure of the outlined theory automatically yields a type of HK1 result~\cite{TellgrenSolo2018}:

\begin{theorem}[HK1 in MDFT]
\label{thmHK1inMDFT}
Suppose that the pairs $(\Gamma_1,\Atotix{1})$ and $(\Gamma_2,\Atotix{2})$ are both Maxwell--Schr\"odinger ground states for $(v_1,\Aextix{1})$ and $(v_2,\Aextix{2})$, respectively. Suppose further that $\Gamma_1,\Gamma_2\mapsto \rho$ and that $\Atotix{1}$ and $\Atotix{2}$ yield the same magnetic field $\nabla\times\Atotix{1} = \nabla\times\Atotix{2} = \Btot$. Then $(\Gamma_1,\Atotix{1})$ is also a ground state for $(v_2,\Aextix{2})$ and vice versa.
\end{theorem}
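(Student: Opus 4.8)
The plan is to run the same constrained-search decomposition that underlies HK1 for paramagnetic CDFT (Theorem~\ref{th:paraCDFT-HK1}), but applied to the jointly minimized pair $(\Gamma,\Atot)$, with the Grayce--Harris functional $G[\rho,\Atot]$ serving as the universal object. First I would rewrite the Maxwell--Schr\"odinger energy as a single joint minimization over $(\Gamma,\Atot)$. Starting from $E_{\mathrm{M}}[v,\Aext]=\inf_{\Aind}\big(E[v,\Aext+\Aind]+\tfrac{1}{2\mu_0}\|\nabla\times\Aind\|_2^2\big)$, inserting $E[v,\Atot]=\inf_{\Gamma}\{\trace(H[0,\Atot]\Gamma)+\langle v,\rho_\Gamma\rangle\}$, merging the two infima over $\Atot=\Aext+\Aind$ and $\Gamma$, and expanding $\|\Btot-\Bext\|_2^2$, I obtain
\begin{equation*}
\begin{aligned}
E_{\mathrm{M}}[v,\Aext]=\frac{\|\Bext\|_2^2}{2\mu_0}+\inf_{\Gamma,\Atot}\Big\{&\trace(H[0,\Atot]\Gamma)+\langle v,\rho_\Gamma\rangle\\
&-\frac{\langle\Btot,\Bext\rangle}{\mu_0}+\frac{\|\Btot\|_2^2}{2\mu_0}\Big\},
\end{aligned}
\end{equation*}
with $\Btot=\nabla\times\Atot$, where a Maxwell--Schr\"odinger ground state is exactly a minimizing pair.

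Next I would isolate the \emph{universal} internal contribution at any ground state. Since every term in the objective except $\trace(H[0,\Atot]\Gamma)$ depends on $(\Gamma,\Atot)$ only through $(\rho_\Gamma,\Btot)$, a minimizer $(\Gamma_i,\Atotix{i})$ must have $\Gamma_i$ minimize $\trace(H[0,\Atotix{i}]\Gamma)$ over all $\Gamma\mapsto\rho$; hence $\trace(H[0,\Atotix{i}]\Gamma_i)=G[\rho,\Atotix{i}]$ for $i=1,2$. I then invoke the gauge invariance of $G$ (the same invariance already used to pass to the $\Btot$-form of $E_{\mathrm{M}}$): on $\R^3$ two vector potentials with equal curl differ by a gradient, and conjugating by the corresponding unitary $\prod_k\rme^{-\i\chi(\rr_k)}$ leaves both the constraint $\Gamma\mapsto\rho$ and $\trace(H[0,\cdot]\Gamma)$ unchanged, so $G[\rho,\Atot]$ depends on $\Atot$ only through $\Btot$. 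Together with the hypothesis $\nabla\times\Atotix{1}=\nabla\times\Atotix{2}=\Btot$ this yields the crucial identity $\trace(H[0,\Atotix{1}]\Gamma_1)=G[\rho,\Btot]=\trace(H[0,\Atotix{2}]\Gamma_2)$.

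The HK1 conclusion then follows by evaluating the $(v_2,\Aextix{2})$ objective at the competing point $(\Gamma_1,\Atotix{1})$. This point is admissible for the $(v_2,\Aextix{2})$ problem because the minimization domain does not depend on the potentials, and its objective value is $G[\rho,\Btot]+\langle v_2,\rho\rangle-\tfrac{1}{\mu_0}\langle\Btot,\Bextix{2}\rangle+\tfrac{1}{2\mu_0}\|\Btot\|_2^2$. Replacing $G[\rho,\Btot]$ by $\trace(H[0,\Atotix{2}]\Gamma_2)$ via the identity above shows this value equals $E_{\mathrm{M}}[v_2,\Aextix{2}]-\tfrac{1}{2\mu_0}\|\Bextix{2}\|_2^2$, i.e.\ the competing point attains the minimum. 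Hence $(\Gamma_1,\Atotix{1})$ is a Maxwell--Schr\"odinger ground state for $(v_2,\Aextix{2})$, and the symmetric argument gives the converse.

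The step I expect to require the most care is the gauge-invariance identification $G[\rho,\Atotix{1}]=G[\rho,\Atotix{2}]$: it is what permits comparing the two total vector potentials, which coincide only up to a gradient, and it rests on the unitary gauge transformation simultaneously preserving the density constraint and the internal energy. Everything else is the routine mechanism, shared by all HK1 proofs in this review, of pulling the fixed coupling terms out of the infimum once $(\rho,\Btot)$ is held fixed.
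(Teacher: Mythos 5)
Your proof is correct, but it takes a genuinely different route from the paper's. You run the constrained-search argument: you observe that at a Maxwell--Schr\"odinger minimizer the state must realize the Grayce--Harris value, $\trace(H[0,\Atotix{i}]\Gamma_i)=G[\rho,\Atotix{i}]$, use gauge invariance of $G$ (two vector potentials on $\R^3$ with equal curl differ by $\nabla\chi$, and the associated unitary preserves both the density constraint and the internal energy) to get $G[\rho,\Atotix{1}]=G[\rho,\Atotix{2}]$, and then directly verify that $(\Gamma_1,\Atotix{1})$ attains the $(v_2,\Aextix{2})$ minimum. The paper instead follows the traditional two-inequality route: in the case $\Atotix{1}=\Atotix{2}$ it inserts $\Gamma_2$ as a competitor for $(v_1,\Aextix{1})$ and vice versa, adds the resulting inequalities so the cross terms $\langle v_1-v_2,\rho\rangle$ and the field-energy differences cancel, and concludes equality; the case of unequal vector potentials is then reduced to the first case by explicitly gauge-transforming $\Gamma_1$ to $\Gamma_1'$. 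Your version has the advantage of making the universal object explicit --- it is $G[\rho,\Btot]$ that plays the role of $\tilde F[\xx]$ in the abstract HK1 scheme of Section~\ref{sec:split}, which ties the theorem directly to the Legendre--Fenchel structure of $\bar{E}_{\mathrm{M}}$ derived in Section~\ref{sec-MDFT} --- and it concludes that $(\Gamma_1,\Atotix{1})$ itself (rather than its gauge transform) is a ground state for $(v_2,\Aextix{2})$, matching the theorem statement verbatim. The paper's version is more elementary and mirrors the classical Hohenberg--Kohn reductio. Both rely on the same gauge unitary at the key step, so the mathematical content is equivalent.
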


\begin{proof}
Let us divide the proof into two cases, where in the first case we make an additional assumption. Case I: The total vector potentials are equal, $\Atotix{1} = \Atotix{2} = \Atot$, then
\begin{equation}
 \label{eqMDFT_HK1proof}
 \begin{split}
  &E_{\mathrm{M}}[v_1,\Aextix{1}] \\
  &= \langle v_1, \rho \rangle + \frac{\|\Bextix{1}-\Btot\|_2^2}{2\mu_0} +\trace( H[0,\Atot] \Gamma_1 )
  \\
  &\leq \langle v_1, \rho \rangle + \frac{\|\Bextix{1}-\Btot\|_2^2}{2\mu_0} + \trace( H[0,\Atot] \Gamma_2)
  \\
  &= E_{\mathrm{M}}[v_2,\Aextix{2}] + \langle v_1-v_2, \rho \rangle \\
  &\quad + \frac{\|\Bextix{1}-\Btot\|_2^2-\|\Bextix{2}-\Btot\|_2^2}{2\mu_0},
  \end{split}
\end{equation}
and the same holds if the indices 1 and 2 are exchanged.
Adding the two resulting inequalities yields
\begin{equation*}
 \begin{split}
  &E_{\mathrm{M}}[v_1,\Aextix{1}] + E_{\mathrm{M}}[v_2,\Aextix{2}] \\
   &\leq E_{\mathrm{M}}[v_1,\Aextix{1}]+ E_{\mathrm{M}}[v_2,\Aextix{2}].
  \end{split}
\end{equation*}
If Eq.~\eqref{eqMDFT_HK1proof} is a strict inequality for either of the index combinations, one would obtain the contradiction $E_{\mathrm{M}}[v_1,\Aextix{1}] + E_{\mathrm{M}}[v_2,\Aextix{2}] 
< E_{\mathrm{M}}[v_1,\Aextix{1}]+ E_{\mathrm{M}}[v_2,\Aextix{2}]$. Hence, Eq.~\eqref{eqMDFT_HK1proof} must hold with equality.

Case II: The vector potentials are not equal, $\Atotix{1} \neq \Atotix{2}$. Since the vector potentials share the same magnetic field, they at most differ by a gauge function, $\Atotix{2} = \Atotix{1} + \nabla \chi$. Defining
\begin{equation*}
  \begin{split}
    \Gamma'_{1}(\mathbf{r}_1,\ldots,\mathbf{r}_{N}; \mathbf{r}'_1,\ldots,\mathbf{r}'_{N})  =\; & \Gamma_{1}(\mathbf{r}_1,\ldots,\mathbf{r}_{N}; \mathbf{r}'_1,\ldots,\mathbf{r}'_{N})
    \\
    & \times \prod_{k=1}^{N} \rme^{\i\left(\chi(\mathbf{r}_k) - \chi(\mathbf{r}'_k)\right)}
  \end{split}
\end{equation*}
we note that, by gauge invariance, $\trace( H[0,\Atotix{1}] \Gamma_1) = \trace( H[0,\Atotix{2}] \Gamma'_1)$, so $(\Gamma'_1,\Atotix{2})$ is still a Maxwell--Schr\"odinger ground state for $(v_1,\Aextix{1})$. Since also $\Gamma'_1\mapsto\rho$, we can consider $(\Gamma'_1,\Atotix{2})$ and $(\Gamma_2,\Atotix{2})$ instead of $(\Gamma_1,\Atotix{1})$ and $(\Gamma_2,\Atotix{2})$. This reduces Case II to Case I.
\end{proof}

The fact that it is the total magnetic field that enters in the HK1 result has the surprising consequence that the current density required to generate the external field, $-\Jextix{i}(\rr) = \mu_0^{-1} \nabla\times\Bextix{i}(\rr)$, comes into play. Specifically, the shared current density relevant to the HK1 result is
\begin{equation*}
 -\frac{\nabla\times\Btot}{\mu_0} =  \Jextix{i} + \jpara_i + \rho \Atotix{i}.
\end{equation*}

\begin{theorem}[HK2 in MDFT]
  Suppose two different external potentials $(v_1,\Aextix{1})$ and $(v_2,\Aextix{2})$ share the same ground-state density $\rho$ and total magnetic field $\Btot$ with $\rho(\rr)>0$ (almost everywhere). Then (a) $v_1$ and $v_2$ are equal up to a constant, and (b) the external magnetic fields are equal, $\nabla\times\Aextix{1} = \nabla\times\Aextix{2}$.
\end{theorem}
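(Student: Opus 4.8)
The plan is to run the standard HK1-then-HK2 template, leaning on Theorem~\ref{thmHK1inMDFT} to manufacture a common ground state and then reducing part~(a) to the magnetic-field DFT argument of Section~\ref{sec:B-DFT}. First I would invoke HK1 in MDFT: the hypotheses here---a shared ground-state density $\rho$ together with a shared total field $\Btot$---are exactly those of Theorem~\ref{thmHK1inMDFT}. Its conclusion, combined with the gauge reduction in its proof (the passage from Case~II to Case~I), supplies a \emph{single} total vector potential $\Atot$ and a single state $\Gamma$ such that $(\Gamma,\Atot)$ is simultaneously a Maxwell--Schr\"odinger ground state of $(v_1,\Aextix{1})$ and of $(v_2,\Aextix{2})$. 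In particular $\Gamma$ is then a ground-state density matrix of the ordinary magnetic Hamiltonian $H[v_1,\Atot]$ and of $H[v_2,\Atot]$ with the \emph{same} vector potential $\Atot$.

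For part~(a) this is precisely the situation treated in Section~\ref{sec:B-DFT}. I would pass to a pure-state component $\psi$ of $\Gamma$, which is nonzero almost everywhere by the magnetic UCP of Section~\ref{sec:mUCP} (consistent with the assumption $\rho>0$ a.e.), and subtract the two eigenvalue equations $H[v_1,\Atot]\psi=\mathcal{E}_1\psi$ and $H[v_2,\Atot]\psi=\mathcal{E}_2\psi$. Since the kinetic, interaction, and all $\Atot$-dependent terms coincide, only the scalar potentials survive, giving $\sum_k (v_1-v_2)(\rr_k)\,\psi=(\mathcal{E}_1-\mathcal{E}_2)\,\psi$. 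Dividing by $\psi$ where it is nonzero yields $\sum_k (v_1-v_2)(\rr_k)=\mathcal{E}_1-\mathcal{E}_2$ almost everywhere on $\R^{3N}$; freezing all but one coordinate then forces $v_1-v_2$ to be constant, which is~(a).

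For part~(b) the key observation is that the common state $\Gamma$ and the common $\Atot$ fix the total current density $\mathbf{j}=\jpara_{\Gamma}+\rho\,\Atot$, so it is the same object for both systems. Because $(\Gamma,\Atot)$ is a Maxwell--Schr\"odinger ground state of each external potential, the induced-field stationarity condition (the self-consistent Biot--Savart law derived above) holds in both minimizations, $\nabla\times(\Btot-\Bextix{i})=-\mu_0\mathbf{j}$ for $i=1,2$ with the \emph{same} $\mathbf{j}$. Subtracting gives $\nabla\times(\Bextix{1}-\Bextix{2})=\mathbf{0}$. As every magnetic field here lies in $L^2_{\mathrm{div}}(\R^3)$, the difference $\Bextix{1}-\Bextix{2}$ is an $L^2$ field that is at once curl-free and divergence-free; in Fourier space its transform is simultaneously parallel and orthogonal to $\mathbf{k}$ and hence vanishes, so $\Bextix{1}=\Bextix{2}$, establishing~(b).

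The main obstacle I anticipate is the bookkeeping in the first step: one must verify that the HK1 gauge reduction genuinely delivers a single total vector potential common to both minimization problems, so that the stationarity condition in part~(b) is evaluated at the \emph{same} matter current $\mathbf{j}$ for both systems (recall that $\jpara$ is gauge dependent, while $\mathbf{j}=\jpara+\rho\Atot$ is not, so the common gauge is what makes the two Biot--Savart identities share their right-hand side). The remaining analytic input---that a curl-free, divergence-free $L^2$ field must vanish---is standard, but it is exactly what upgrades ``equal curls'' to the stated equality $\nabla\times\Aextix{1}=\nabla\times\Aextix{2}$.
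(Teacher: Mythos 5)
Your proposal is correct and follows essentially the same route as the paper: invoke HK1 in MDFT to obtain a single common pair $(\Gamma,\Atot)$, reduce part~(a) to the BDFT argument of Section~\ref{sec:B-DFT}, and obtain part~(b) by subtracting the two Biot--Savart identities (which share the same total current because the state and $\Atot$ are shared) and using that the curl is invertible on $L^2_{\mathrm{div}}$. The only cosmetic differences are that you spell out the division-by-$\psi$ version of the BDFT step and the Fourier argument for why a curl-free, divergence-free $L^2$ field vanishes, both of which the paper leaves implicit.
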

\begin{proof}
  Part (a): By Theorem~\ref{thmHK1inMDFT}, there exists a shared ground state $\Gamma$ and vector potential $\Atot$ such that $(\Gamma,\Atot)$ is a ground state of both $H[v_1,\Atot]$ and $H[v_2,\Atot]$. That $v_1 = v_2 + \text{constant}$ now follows from the HK result in BDFT in Section~\ref{sec:B-DFT}.  \\
  Part (b): Biot--Savart's law yields
  \begin{equation*}
    \frac{\nabla\times\Btot}{\mu_0} =  -(\Jextix{i} + \jpara_{\Gamma} + \rho \Atot).
  \end{equation*}
  Since $\Jextix{i} = -\mu_0^{-1} \nabla\times\Bextix{i} = -\mu_0^{-1} \nabla\times\nabla\times\Aextix{i}$ is the only term that depends on $i$, it follows that
  \begin{equation*}
    \nabla\times\Bextix{i} = \nabla\times\Bextix{j}.
  \end{equation*}
Finally, under the condition $\Bextix{i} \in L^2_{\mathrm{div}}$, the curl is invertible. Hence, $\Bextix{i} = \nabla\times\Aextix{i} = \nabla\times\Aextix{j} = \Bextix{j}$.
\end{proof}

The convexity of the outlined theory as well as the above HK result are both results of the introduction of an internal magnetic field as an additional variational degree of freedom. While the vacuum magnetic permeability has an empirical value, $\mu_0 = 1.2566 \times 10^{-6}$~NA$^{-2}$, one could try to connect the above model and its HK results to the conventional Schr\"odinger model from before by considering the limit $\mu_0 \to 0^{+}$, though to our knowledge this has not yet been done. This might be one avenue for deriving a type of HK result for total current densities. Finally, we note the work by Garrigue~\cite{Garrigue2020-magneticHK} where the Maxwell--Schr\"odinger model is also analyzed and Theorem~2.7 of that work establishes a HK2 result involving the current density $\jpara + \rho \Aind$. Hence, the counterexamples that prevent a full HK2 result for the paramagnetic current density within the conventional Vignale--Rasolt CDFT formulation are circumvented in the Maxwell--Schr\"odinger model.

\section{Quantum-electrodynamical DFT}
\label{sec:QEDFT}

If we want to understand where the Schr\"odinger equations in their various forms encountered in this review come from, we can find the answer in the theory of QED. This theory arises from representing the energy-momentum relation of special relativity $E^2 = p^2c^2 + m^2 c^4$ in terms of first-order differential equations~\cite{ryder_1996,Greiner_1996}. If we do so for massive spin-$1/2$ particles we end up with the single-particle Dirac equation, while for massless spin-$1$ particles we arrive at the Riemann--Silberstein equations~\cite{silberstein1907,oppenheimer1931,bialynicki1994wave,gersten1999maxwell}. The Riemann--Silberstein equations are one of many equivalent ways to express the Maxwell equations in vacuum. The equations for matter and for light are coupled by making the local conservation of charges (charges are not destroyed but can only be moved around in space and time) explicit~\cite{Greiner_1996,ryder_1996}. This leads to the ``minimal-coupling prescription'', which is commonly expressed by the simple rule to replace the momentum operator $-\i\nabla$ by $-\i \nabla + \A$. The first thing that is problematic in these equations, however, is that since they are first order, they allow for negative-energy solutions which are nonphysical. One therefore performs a ``second quantization step'', where the equations are expressed in terms of field operators for light as well as for charged particles, and the negative energy solutions are assigned a positive value and interpreted physically as the corresponding anti-particles~\cite{Greiner_1996,ryder_1996}. The resulting quantum field theory is, however, mathematically notoriously badly behaved~\cite{baez2014introduction,scharf2014finite}, since it rests on the ill-defined concept of multiplying distribution-valued operators~\cite{thirring2013quantum}. This is the origin of the regularization and renormalization issues in quantum field theories~\cite{Greiner_1996,ryder_1996}. A second problem is encountered for the quantized light field, where in general we have four polarization directions, while physically only two transverse polarizations exist. This problem arises due to the gauge freedom of the Maxwell equation in vector-potential formulation, and in general implies quite intricate technical solutions~\cite{keller2012quantum,Greiner_1996}. However, if we decide to work in Coulomb gauge, i.e.,
\begin{equation*}
    \nabla \cdot \A = 0,
\end{equation*}
then in vacuum it holds $-\nabla^2 \phi = 0$. This implies that the zero component of the electromagnetic vector potential is  $\phi= 0$. Thus only the two physical transverse degrees of freedom of the light field are left that need to be second quantized. Yet, upon coupling to the charged matter degrees of freedom, the Coulomb gauge condition implies that the total longitudinal electric and interaction energy that arises from the charged particles is expressed directly by the expectation value of~\cite{Greiner_1996} 
\begin{equation}\label{eq:CoulombMaxwell}
\begin{aligned}
   W_{\rm C} &= \frac{1}{2}\sum_{k \neq l}^{N} \sum_{\bn \in \Z^3} \frac{4 \pi}{\kn^2} \frac{\rme^{\i \kn \cdot(\rr_k-\rr_l)}}{L^3}
   \\
   &\overset{L\rightarrow \infty}{\longrightarrow} \frac{1}{2} \sum_{k \neq l}^{N} \frac{1}{|\rr_k - \rr_l|}
\end{aligned}
\end{equation}
in atomic units. For simplicity, we have here assumed a finite but arbitrarily large quantization volume $L^3$ with periodic boundary conditions which implies a Fourier expansion with the wave vector $\kn = \tfrac{2 \pi}{L} \bn$. For $L\rightarrow \infty$ the longitudinal Maxwell energy becomes the usual Coulomb interaction.
In just the same way the external scalar potential $v$, that acts as the binding potential for the system, arises from the coupling to electrons and to external charges like nuclei.
Thus we see that by including the Coulomb interaction and the external scalar potential, which was already present in Eq.~\eqref{eq:CDFT-Ham-general}, we have taken into account the full longitudinal Maxwell energy together with the backreaction of matter on the longitudinal light field. Consequently, for only scalar external potentials, we have also already taken into account the corresponding (purely longitudinal) photon-field energy. Considering the issues that we encountered throughout this review when trying to establish a HK2 result for CDFT, a simple physical explanation is at hand: We also need to take into account the energy contribution of the transverse photon field (induced magnetic field). Indeed, Section~\ref{sec-MDFT} highlights that this idea works and a self-consistent treatment of light and matter allows to establish also a HK2 result in the context of CDFT. Let us see whether we can also find a similar HK2 result if we keep both, light and matter fully quantized in the next step.

Note that we have assumed first-quantized charged particles in Eq.~\eqref{eq:CoulombMaxwell}, i.e., no electron-positron pair creation/annihilation is possible and the number of particles (electrons) is therefore conserved~\cite{spohn2004dynamics}. We have thus avoided one potential regularization/renormalization problem of fully relativistic QED~\cite{Greiner_1996,ryder_1996}. For the quantized field modes we have then
\begin{equation}\label{eq:quantizedvectorpot}
\begin{aligned}
\Ah(\rr) = \sqrt{\frac{4 \pi}{L^3}} \sum_{\bn \in \Z^3}\sum_{\lambda=1}^{2} \frac{\be(\bn,\lambda)}{\sqrt{2 \omega_\bn}}\left(\ah_{\bn,\lambda} \rme^{\i \kn \cdot \rr} +\ah^{\dagger}_{\bn,\lambda} \rme^{-\i \kn\cdot \rr}  \right),
\end{aligned}
\end{equation}
where $\be(\bn,\lambda)$ are two orthogonal transverse (with respect to $\bn$) polarization vectors and $\omega_\bn = c |\kn|$. If we assume $L\rightarrow \infty$, the sum in Eq.~\eqref{eq:quantizedvectorpot} becomes an integral and the creation $\ah^{\dagger}_{\bn,\lambda}$ and annihilation operators $\ah_{\bn,\lambda}$ that obey
\begin{equation*}
\left[\ah_{\bn',\lambda'},\ah_{\bn,\lambda}^{\dagger}\right] = \delta_{\bn,\bn'}\delta_{\lambda', \lambda},
\end{equation*}
turn into genuine field operators~\cite{thirring2013quantum}. For notational simplicity and to avoid further discussions about the properties of these field operators we keep a finite but arbitrarily large volume. The free quantized electromagnetic Hamiltonian is then simply
\begin{equation*}
H_{\rm ph} = \sum_{\bn ,\lambda} \omega_{\bn} \ah^{\dagger}_{\bn,\lambda} \ah_{\bn,\lambda},
\end{equation*}
and the coupling of the free photon field to a classical transverse external \textit{charge} current 
\begin{equation}\label{eq:externalcurrentexpansion}
\begin{aligned}
\Jext(\rr) = \sqrt{\frac{1}{4 \pi L^3}} \sum_{\bn ,\lambda} \sqrt{\frac{\omega_{\bn}^3}{2}} \be(\bn,\lambda) (J_{\bn,\lambda} \rme^{\i \kn \cdot \rr} + J_{\bn,\lambda}^{*}\rme^{-\i \kn\cdot \rr})
\end{aligned}
\end{equation}
is then
\begin{equation*}
\int \Jext(\rr) \cdot \Ah(\rr) \d \rr = \sum_{\bn ,\lambda} \omega_{\bn} \left(J^{*}_{\bn,\lambda} \ah_{\bn,\lambda} + J_{\bn,\lambda} \ah_{\bn,\lambda}^\dagger \right).
\end{equation*}
The fully coupled Pauli--Fierz Hamiltonian is then~\cite{spohn2004dynamics}
\begin{equation}\label{eq:PauliFierzHamiltonian}
\begin{aligned}
H &= \sum_{k=1}^{N}\left\{ \frac{1}{2 m}\left[\boldsymbol{\sigma}_{k}\cdot\left(-\i \nabla_k + \Ah(\rr_k)  \right)  \right]^2 + v(\rr_k)\right\} \\ 
& + W_{\rm C} + \sum_{\bn, \lambda} \omega_{\bn} \left( \ah^{\dagger}_{\bn,\lambda} \ah_{\bn,\lambda} - J^{*}_{\bn,\lambda} \ah_{\bn,\lambda} - J_{\bn,\lambda} \ah_{\bn,\lambda}^\dagger \right),
\end{aligned}
\end{equation}
where $\boldsymbol{\sigma}_k$ is the standard vector of Pauli spin matrices. We note first that the expectation value of the operator $\Ah(\rr)$ corresponds to the induced transverse field, i.e., if we compare to the previous section, it is the induced magnetic field. Yet instead of denoting the internal field $\mathbf{A}(\rr)$ with a subindex as done before, we here denote external magnetic fields with $\mathbf{A}_{\rm ext}(\rr)$. We further note here that the coupling to any external transverse vector potential $\mathbf{A}_{\rm ext}(\rr)$ can be taken into account by merely a coherent shift (vacuum polarization) of the photon modes. This means $\ah_{\bn,\lambda} \rightarrow \ah_{\bn,\lambda} - A^{\rm ext}_{\bn,\lambda}$ and accordingly for the creation operator, where the $A^{\rm ext}_{\bn,\lambda}$ are the Fourier expansion coefficients of the vector potential~\cite{Ruggenthaler2015}. That is, in Eq.~\eqref{eq:quantizedvectorpot} we get $\Ah(\rr) \rightarrow \Ah(\rr) + \mathbf{A}_{\rm ext}(\rr)$ upon such a coherent shift. This implies that we can represent any external magnetic field acting on the electronic system by taking the corresponding external transverse charge current that generates this field via the static Maxwell equation
\begin{equation*}
-\nabla^2 \mathbf{A}_{\rm ext}(\rr) = \mu_0  \Jext(\rr).
\end{equation*}
This equivalence of external classical transverse currents and external classical transverse vector potentials, which correspond uniquely to an external classical magnetic field (as also discussed in Section~\ref{sec-MDFT}), is of significance for a density-functional reformulation of the Pauli--Fierz quantum theory. This further consequence of the gauge principle implies that there are two ways of generating the same physical equilibrium situation. We note that for a time-dependent situation this is no longer the case, since we then have different initial states and potentially different dynamics. Thus if we want to achieve a HK2 result we need to make a choice. In the following we will choose to describe all the physically different magnetic fields by external classical transverse currents. Thus we have two classical external fields that we can adapt to generate physically different situations, the usual external classical scalar potential $v(\rr)$ of standard electronic DFT and the external classical transverse charge currents $\Jext(\rr)$, i.e., an external pair $(v,\Jext)$.

Before we come to the formulation of quantum-electrodynamical DFT (QEDFT), let us make some final yet important remarks with respect to the mathematical properties of the Pauli--Fierz Hamiltonian. Firstly, to have a well-defined self-adjoint Hamiltonian, one usually employs a form factor that regularizes how the modes couple in the ultraviolet regime~\cite{spohn2004dynamics}. The simplest version of this is to have an ultraviolet cutoff. We will therefore assume some highest momentum cutoff $\Lambda$ in $\kn$ in Eq.~\eqref{eq:PauliFierzHamiltonian}, which also implies that the allowed $\Jext(\rr)$ have a highest allowed momentum in the expansion of Eq.~\eqref{eq:externalcurrentexpansion}. Also, depending on the chosen cutoff $|\kn| \leq \Lambda$, one needs to use a bare mass for the electrons, since the observable mass $m=m_{\rm e} = 1$ (in atomic units) does contain already all the photon contributions. Now, with having the photon modes explicit, the free dispersion of the electron will change without modifying the observable mass to a (cutoff-dependent) bare mass of the electron. Thus in Eq.~\eqref{eq:PauliFierzHamiltonian} we have $m_{\rm e} \geq m=m(\Lambda)>0$~\cite{spohn2004dynamics,hainzl2002mass}. We note that we here assume a finite volume $L^3 \subset \mathbb{R}^3$ and hence for any scalar potential we will have a ground state by construction. Nevertheless, for the Pauli--Fierz Hamiltonian defined on all of $\mathbb{R}^3$ it can be shown that every scalar potential that has a ground state without coupling to the photon field also has a ground state with the coupling to the photon field~\cite{spohn2004dynamics}. This gives a nice consistency with standard electronic DFT and the question of $v$-representable ground states.

Let us next, following the structure proposed in this review, first define the HK1 for QEDFT. For this we re-express the Hamiltonian of Eq.~\eqref{eq:PauliFierzHamiltonian} in terms of
\begin{equation*}
H[v,\Jext] = H_0 + \sum_{k=1}^N v(\rr_k) -  \int \Jext(\rr) \cdot \Ah(\rr) \d \rr.
\end{equation*}
In this way the (ensemble) constrained search functional for QEDFT is then
\begin{equation*}
F_{\rm CS, ens}[\rho, \A] = \inf_{\Gamma \mapsto (\rho, \A)} \trace(H_0 \Gamma), 
\end{equation*}
such that 
\begin{equation*}
E[v,\Jext] = \inf_{\rho,\A} \{F_{\rm CS, ens}[\rho, \A] + \langle v, \rho \rangle - \langle \Jext,\A \rangle  \}.
\end{equation*}
This follows exactly the structure proposed in Section~X of Part~I of this review. With respect to previous examples, e.g., the Maxwell--Schr\"odinger DFT, we now have, however, density matrices that contain electronic and photonic degrees of freedom.

\pagebreak 

\begin{theorem}[HK1 in QEDFT]
\label{thmHK1inQEDFT}
Let $\Gamma_1$ be a ground state of $H[v_1,\mathbf{J}_{\rm ext,1}]$ and $\Gamma_2$ a ground state of $H[v_2,\mathbf{J}_{\rm ext,2}]$. If $\Gamma_1, \Gamma_2 \rightarrow (\rho,\A)$, i.e., if these states share the same density and vector potential, then $\Gamma_1$ is also a ground state of $H[v_2,\mathbf{J}_{\rm ext,2}]$ and $\Gamma_2$ is also a ground state of $H[v_1,\mathbf{J}_{\rm ext,1}]$. 
\end{theorem}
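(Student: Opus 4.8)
The plan is to reuse the universal-functional argument that already established HK1 for paramagnetic CDFT in Theorem~\ref{th:paraCDFT-HK1}, because the QEDFT energy $E[v,\Jext] = \inf_{\rho,\A}\{F_{\rm CS,ens}[\rho,\A] + \langle v,\rho\rangle - \langle\Jext,\A\rangle\}$ has precisely the required form: an internal part $F_{\rm CS,ens}$ built solely from $H_0$, with no reference to $v$ or $\Jext$, plus couplings that are \emph{linear} in the density variables $(\rho,\A)$. The only structurally new feature is that the second density variable is the field expectation $\A=\trace(\Ah\Gamma)$ rather than a paramagnetic current; but by linearity of the trace this still enters the energy through a plain dual pairing, so the same machinery should apply.

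First I would record that, since $\Gamma_1$ and $\Gamma_2$ are assumed to be ground states, the infimum defining each $E[v_i,\mathbf{J}_{\mathrm{ext},i}]$ is attained. Using $\Gamma_i\mapsto(\rho,\A)$ together with the linearity of the couplings, the ground-state value reads
\begin{equation*}
E[v_i,\mathbf{J}_{\mathrm{ext},i}] = \trace(H_0\Gamma_i) + \langle v_i,\rho\rangle - \langle\mathbf{J}_{\mathrm{ext},i},\A\rangle .
\end{equation*}
On the other hand the variational principle gives the bound $E[v_i,\mathbf{J}_{\mathrm{ext},i}]\le F_{\rm CS,ens}[\rho,\A] + \langle v_i,\rho\rangle - \langle\mathbf{J}_{\mathrm{ext},i},\A\rangle$, whence $\trace(H_0\Gamma_i)\le F_{\rm CS,ens}[\rho,\A]$, while the definition of $F_{\rm CS,ens}$ as an infimum over $\Gamma\mapsto(\rho,\A)$ supplies the reverse inequality $F_{\rm CS,ens}[\rho,\A]\le\trace(H_0\Gamma_i)$. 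Together these pin down $\trace(H_0\Gamma_1)=\trace(H_0\Gamma_2)=F_{\rm CS,ens}[\rho,\A]$, so both states realize the same universal value.

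With this equality the interchange is immediate. Evaluating
\begin{equation*}
\trace(H[v_2,\mathbf{J}_{\mathrm{ext},2}]\Gamma_1) = \trace(H_0\Gamma_1) + \langle v_2,\rho\rangle - \langle\mathbf{J}_{\mathrm{ext},2},\A\rangle = E[v_2,\mathbf{J}_{\mathrm{ext},2}],
\end{equation*}
where the second equality uses $\trace(H_0\Gamma_1)=F_{\rm CS,ens}[\rho,\A]$ and the ground-state value of $\Gamma_2$, shows $\Gamma_1$ is a ground state of $H[v_2,\mathbf{J}_{\mathrm{ext},2}]$; the symmetric computation handles $\Gamma_2$. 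The one step deserving genuine care is not this variational bookkeeping but the justification that the $\Jext$-coupling energy depends on $\Gamma$ only through the single density variable $\A$. This rests on linearity of the trace,
\begin{equation*}
\trace\Big(\!-\!\int\Jext(\rr)\cdot\Ah(\rr)\,\d\rr\;\Gamma\Big) = -\int\Jext(\rr)\cdot\trace(\Ah(\rr)\Gamma)\,\d\rr = -\langle\Jext,\A\rangle ,
\end{equation*}
so that every photonic degree of freedom of $\Gamma$ beyond its field expectation $\langle\Ah\rangle$ is invisible to the coupling. Once this is granted, $H_0$ is manifestly universal and the argument closes exactly as in Theorem~\ref{th:paraCDFT-HK1}; existence of the required ground states is not an issue, since working in the finite quantization volume $L^3$ guarantees it by construction.
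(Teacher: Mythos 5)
Your proposal is correct and follows essentially the same route as the paper, which simply defers to the abstract HK1 argument (Theorem~1 of Part~I, mirrored here in the proof of Theorem~\ref{th:paraCDFT-HK1}): the coupling terms $\langle v,\rho\rangle - \langle\Jext,\A\rangle$ are fixed once $(\rho,\A)$ is fixed, so the constrained minimization of $\trace(H_0\Gamma)$ is universal and both states must realize it. Your explicit verification that the $\Jext$-coupling depends on $\Gamma$ only through $\A=\trace(\Ah\Gamma)$ is the one point the paper leaves implicit, and it is handled correctly.
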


For the proof of this statement we refer to Theorem~1 of Part~I of this review. Let us next turn to the more important question of the HK2 in QEDFT. To do so we first note that the total (physical) \textit{charge} current density of the Pauli--Fierz Hamiltonian (also compare with Section~\ref{sec-MDFT}) is
\begin{equation*}
\bj_{\Gamma}(\rr)= -\frac{1}{m}\bj_{\Gamma}^{\rm p}(\rr) - \bj^{\rm m}_{\Gamma}(\rr) - \bj^{\rm dm} _{\Gamma}(\rr),
\end{equation*}
where $\bj^{\rm m}_{\Gamma} = \tfrac{1}{2m}\nabla \times \sum_{k=1}^{N} \trace(\Gamma \boldsymbol{\sigma}_k \delta(\rr-\rr_k) )$ is the magnetization current and $\bj^{\rm dm} _{\Gamma}(\rr) = \frac{1}{m} \sum_{k=1}^N \trace(\Gamma \delta(\rr-\rr_k) \Ah(\rr))$ the quantized diamagnetic current. By using the Heisenberg equation of motion for $\Ah(\rr)$ twice~\cite{Ruggenthaler2015} we find that any eigenstate of the coupled matter-photon system obeys the static inhomogeneous Maxwell equation in Coulomb gauge
\begin{equation}\label{eq:staticMaxwell}
- \nabla^2 \A_{\Gamma}(\rr) = \mu_0  \left(\bj_{\Gamma, \perp}(\rr) + \Jext(\rr)  \right), 
\end{equation}
where $\A_{\Gamma}(\rr) = \trace(\Gamma \Ah(\rr))$ and $\bj_{\Gamma, \perp}(\rr)$ is the transverse part of the total charge current. This allows us to show the following theorem.

\begin{theorem}[HK2 in QEDFT]
If two external pairs $(v_1,\mathbf{J}_{\rm ext,1})$ and $(v_2,\mathbf{J}_{\rm ext,2})$ share a common eigenstate $\psi$ and if $\psi$ is non-zero almost everywhere, then these two pairs are the same. The equivalent statement holds for density matrices.
\end{theorem}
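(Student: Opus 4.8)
The plan is to pin down the external current $\mathbf{J}_{\mathrm{ext}}$ and the scalar potential $v$ separately, exploiting that the static Maxwell equation~\eqref{eq:staticMaxwell} holds for \emph{every} eigenstate of the coupled matter--photon system. This is what lets QEDFT succeed where conventional CDFT failed: the transverse coupling ties the current directly to state-dependent quantities.

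First I would settle the current. Since $\psi$ is an eigenstate of both $H[v_1,\mathbf{J}_{\mathrm{ext},1}]$ and $H[v_2,\mathbf{J}_{\mathrm{ext},2}]$, equation~\eqref{eq:staticMaxwell} applies once with each external current. Writing $\Gamma = |\psi\rangle\langle\psi|$, both the induced field $\A_\Gamma = \trace(\Gamma\Ah)$ and the transverse total charge current $\bj_{\Gamma,\perp}$ depend on the state alone and carry no reference to the potentials. Hence the two instances of~\eqref{eq:staticMaxwell} have identical left-hand sides $-\nabla^2\A_\Gamma$ and identical internal-current terms $\mu_0\bj_{\Gamma,\perp}$; subtracting them leaves $\mu_0(\mathbf{J}_{\mathrm{ext},1}-\mathbf{J}_{\mathrm{ext},2}) = \mathbf{0}$, so $\mathbf{J}_{\mathrm{ext},1}=\mathbf{J}_{\mathrm{ext},2}$.

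With the currents now known to coincide, I would subtract the two eigenvalue equations $H[v_1,\mathbf{J}_{\mathrm{ext},1}]\psi = E_1\psi$ and $H[v_2,\mathbf{J}_{\mathrm{ext},2}]\psi = E_2\psi$. Because $H_0$ is common and the photon-coupling terms $-\int\mathbf{J}_{\mathrm{ext},i}\cdot\Ah\,\d\rr$ now cancel, every operator touching the photon field drops out and one is left with the purely multiplicative electronic operator
\begin{equation*}
    \Big( \sum_{k=1}^N \big(v_1(\rr_k)-v_2(\rr_k)\big) - (E_1-E_2) \Big)\psi = 0.
\end{equation*}
Reading $\psi$ as a Fock-space-valued function of the electronic configuration $(\rr_1,\dots,\rr_N)$, the prefactor is a scalar for each configuration, so wherever $\|\psi(\rr_1,\dots,\rr_N)\|_{\mathrm{Fock}}\neq 0$---that is, almost everywhere by hypothesis---I may divide it out to obtain $\sum_k \big(v_1-v_2\big)(\rr_k) = E_1-E_2$ pointwise a.e. Fixing all but one coordinate and varying the remaining one then forces $v_1-v_2$ to be constant, the standard final move of an HK2 argument (cf.\ Section~IV of Part~I and the BDFT argument in Section~\ref{sec:B-DFT}). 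This yields $v_1 = v_2 + \mathrm{const}$ together with $\mathbf{J}_{\mathrm{ext},1}=\mathbf{J}_{\mathrm{ext},2}$, which is the claim.

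The density-matrix version runs in parallel: linearity of $\A_\Gamma$ and $\bj_{\Gamma,\perp}$ in $\Gamma$ makes the Maxwell step go through verbatim, and the scalar-potential step uses positivity of the density ($\rho_\Gamma>0$ a.e.) in place of $\psi\neq 0$ a.e. The only genuinely delicate ingredient is the validity of~\eqref{eq:staticMaxwell} for an arbitrary eigenstate---the double application of the Heisenberg equation of motion for $\Ah$---but that is supplied upstream; granting it, the argument is essentially immediate, and the real content lies in recognizing that the transverse-current variable closes the HK2 loophole present in paramagnetic CDFT.
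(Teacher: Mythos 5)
Your proposal is correct and follows essentially the same route as the paper: first use the static Maxwell equation \eqref{eq:staticMaxwell}, whose state-dependent terms $\A_\Gamma$ and $\bj_{\Gamma,\perp}$ carry no reference to the external pair, to conclude $\mathbf{J}_{\mathrm{ext},1}=\mathbf{J}_{\mathrm{ext},2}$, and then reduce to the standard HK2 division argument for the scalar potential (which the paper simply cites from Part~I and you spell out explicitly, including the density-matrix case).
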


\begin{proof}
First we note that if both Hamiltonians $H[v_1,\mathbf{J}_{\rm ext,1}]$ and $H[v_2,\mathbf{J}_{\rm ext,2}]$ share a common eigenstate then due to Eq.~\eqref{eq:staticMaxwell} we have $\mathbf{J}_{\rm ext,1}=\mathbf{J}_{\rm ext,2}$. Thus we are left with the two equations
\begin{align*}
\left(H_0 + V[v_1]\right)\psi &= E[v_1,\Jext]\psi,
\\
\left(H_0 + V[v_2]\right)\psi &= E[v_2,\Jext]\psi,
\end{align*}
and we can follow the standard HK2 proof of Theorem~2 of Part~I. We can further use Corollary~3 of Part~I to find the equivalent statement for the density matrices.
\end{proof}
At that point we see again how powerful the abstract formulation of HK1 and HK2 as presented at the end of Part~I and then repeated at the beginning of Part~II of this review is. It allows to re-use many results of standard electronic DFT for other settings as well. We finally note that for the corresponding KS system in QEDFT one commonly uses non-interacting electrons and photons, which leads to electronic Pauli--Kohn--Sham equations coupled to a static inhomogeneous Maxwell equation of the form of Eq.~\eqref{eq:staticMaxwell}~\cite{Ruggenthaler2015}.

\section{Summary}
\label{sec:summary}

\begin{table*}
 \begin{tabular}{l|cccc} 
    \hline \hline
    flavor & density variables & potential variables & HK1 & HK2 \\   \hline
    DFT & $\rho$ & $v$ & yes & yes \\
    LDFT & $\rho,\pp,\mathbf{L}$ & $v,\A=\a+\tfrac{1}{2}\B\times\rr$ & yes & yes, if $\rho$ asym. \\
    SDFT (col.) & $\rho, m_z = \rho_{\uparrow}-\rho_{\downarrow}$ & $v,B'_z$ & yes & debated \\
    SDFT (noncol.) & $\rho, \mathbf{m}$ & $v,\B'$ & yes & no \\
    CDFT (no spin) & $\rho,\jpara$ & $v,\A$ & yes & no \\
    CDFT (with spin curr.) & $\rho,\jmag = \jpara + \nabla\times\mathbf{m}$ & $v,\A$ & yes & no \\
    CDFT (with spin dens.) & $\rho,\mathbf{m},\jpara$ & $v,\A$ & yes & no \\
    CDFT (with spin dens.) & $\rho,\mathbf{m},\jpara$ & $v,\B',\A$ & yes & no \\
    MDFT & $\rho,\Btot$ & $v,\B$ & yes & yes \\
    QEDFT & $\rho,\boldsymbol{A}$ & $v,\Jext$ & yes & yes  \\
    \hline \hline
 \end{tabular}
 \caption{The status of HK1 and HK2 within some flavors of DFT.}\label{table:sum-dfts}
\end{table*}

Many flavors of density-functional theory exist besides standard DFT. All flavors considered here capture some aspect of spin and orbital magnetism.
 They can be characterized in terms of constraints on the, at the outset, very general Hamiltonian given in Eq.~\eqref{eq:CDFT-Ham-general}
\begin{equation*}
\begin{aligned}
  H[v,\B',\A] &= \frac{1}{2} \sum_{k=1}^N (-\i\nabla_k + \A(\rr_k))^2 + \sum_{k=1}^N \B'(\rr_k)\cdot\mathbf{S}_k\\
  &\quad + \sum_k v(\rr_k) + \lambda\sum_{k<l} \frac{1}{r_{kl}}.
\end{aligned}
\end{equation*}
As noted there, we allow for the case that the magnetic field appearing in the spin-Zeeman term is unrelated to the vector potential that affects the orbital degrees of freedom, i.e., $\B' \neq \nabla\times\A$.

Noncollinear SDFT is obtained when orbital effects are neglected by setting $\A=\mathbf{0}$. The external magnetic field $\B'$ is then paired with the spin density $\mathbf{m}$. Both of these are general, noncollinear vector fields.
Yet, most practical approximate functionals are constructed for the collinear case when only collinear magnetic fields $\B' = (0,0,B'_z)$ along, say, the $z$-axis are allowed. There is a global spin quantization axis and only the component $m_z = \rho_{\uparrow}-\rho_{\downarrow}$ of the vector field $\mathbf{m}$ is needed. The non-uniqueness of potentials (i.e., the lack of a HK2 result in our terminology) in collinear SDFT has been discussed by several authors, with different conclusions. The situation is summarized in \citeauthor{AYERS_JCP124_224108}~\cite{AYERS_JCP124_224108}.

In paramagnetic CDFT, the orbital effects are retained. Different flavors of CDFT are possible depending on how the spin-Zeeman term is treated. The simplest flavor, treated here in great detail, simply neglects it ($\B'=\mathbf{0}$). Alternatively, in the physically natural case where $\B' = \nabla\times\A$, a partial integration turns the spin-Zeeman term into an interaction between $\A$ and the spin current density. The latter is absorbed into the paramagnetic current density to form $\jmag = \jpara + \nabla\times\mathbf{m}$. Retaining $\B'$ as an independent variable, unrelated to $\nabla\times\A$, yields the most flexible setting with the triple $(\rho,\mathbf{m},\jpara)$ as the basic density variables. Loosely speaking, in a CDFT formulation analogous to Lieb's formulation of standard DFT, the triple of independent density variables must have a triple of independent potential variables. Hence, $\B'$ needs to be retained as an independent variable if $\mathbf{m}$ is to be an independent density. However, when Lieb-like formulation is not required, nothing prevents the introduction of additional constraints in a constrained-search formulation. In this sense, a CDFT formulation with a triple of density variables $(\rho,\mathbf{m},\jpara)$ and a pair potential variables $(v,\A)$, with $\B'=\nabla\times\A$, also exists.

With regard to the Hohenberg--Kohn theorem in CDFT, the inclusion or exclusion of spin effects makes no difference: HK1 holds and HK2 does not.
As already noted in Part~I, the HK1 result does not only hold for standard DFT, but holds for all variants of extended DFTs that offer the required structure. Paramagnetic CDFT has this structure and is arguably the most natural CDFT formulation as far as the mathematical results are concerned. At the same time, this theory is not invariant under gauge transformations and a HK2 cannot hold.
On the other hand, for the formulation of CDFT that uses the total (physical) current it is 
unfortunate that in general
\begin{equation*}
\begin{aligned}
  &\langle \psi | H[v,\A] | \psi \rangle \\
  &\neq \inf_{ \psi \mapsto (\rho,\jtot) }\{ \langle \psi | H_0 | \psi \rangle   \}
 + \langle \A,  \jtot \rangle + \langle  v - \tfrac 1 2 \vert \A \vert^2, \rho\rangle .  
\end{aligned}
\end{equation*}
Equality only holds for $\psi$ such that $\a[\psi; \jtot] = \A$ (where $\a[\psi;\jtot]$ was defined in Eq.~\eqref{eq:a-D}), then a HK1 result follows. As can be seen by Eq.~\eqref{eq:EjCDFT}, it is not evident how to obtain an HK1 result since  minimization of just $H_0$ over wave functions then leaves out $\psi$-dependent terms. Note that HK2 does not hold, since we know that a shared eigenstate of magnetic Hamiltonians does not imply that the potentials are equal (up to a gauge).  
Also note that, if one fixes $\a= \A$, then one effectively has a paramagnetic formulation of CDFT again. 

Furthermore, what could be stressed from the above discussion is that, regardless of the status of a full HK result, we have 
no {\it HK variational principle} in total CDFT~\cite{LaestadiusBenedicks2015}. 
Thus, even if the question of a HK result for the total current could be answered in the positive, the formulation would be restricted to $v$-representable densities and thereby excluding the usual approach of utilizing constrained-search functionals on $N$-representable densities. This has stopped the mathematical development of total CDFT as compared to the paramagnetic variant. 

We have seen that by going beyond the usual density-functional setting, when new density and corresponding potential variables are included, problems arise mostly with respect to HK2. This is compactly highlighted in Table~\ref{table:sum-dfts}.  
While the mathematical reasons have been discussed in detail in the preceding sections, there are often also simple physical reasons. This holds specifically in the context when magnetic fields are included and associated densities are considered. The non-uniqueness results, discussed in Section~\ref{sec:cond-count}, arise because the back-reaction of the current on the external field and the change in Maxwell energy is not taken into account. Doing so by also including the induced Maxwell field in a self-consistent manner, as discussed in Section~\ref{sec-MDFT}, avoids some of these issues and a HK2 theorem becomes available.
Hence, the density-functional theory based on the Maxwell--Schr\"odinger model (MDFT) is a type of total CDFT with a full HK result.
This intuitive result, however, raises the question why we do not need to include the Maxwell field energy also in the usual (standard) DFT of only scalar external potentials. The answer to this question is given in Section~\ref{sec:QEDFT} with the help of QED. We saw that the Coulomb interaction of the usual Schr\"odinger equation is actually taking the self-consistent longitudinal photon energy into account upon interaction with matter. Therefore it seems natural to also take the transverse photon energy into account. In the context of low-energy QED, where both contributions are considered self-consistently, we therefore again find a HK2 result.

This formal discussion has shown that promoting the Maxwell field to a quantized system allows to recover a DFT formulation that is very close to the original electronic DFT. And by approximating the Pauli--Fierz theory we obtain, in the mean-field coupling limit, the Maxwell--Schr\"odinger equation, and by discarding the transverse part of the Maxwell field altogether, we find standard electronic DFT. Yet, apart from this nice consistency and the simple form of a DFT, is there any other reason to consider QEDFT and the Pauli--Fierz theory? The answer is `yes' and lies in the emerging fields of polaritonic chemistry and materials science as well as ab initio QED~\cite{garcia2021manipulating,ruggenthaler2022understanding}. In these fields, photonic structures, such as optical cavities, change locally the vacuum modes that couple to the matter subsystem and hence present a novel control knob to influence chemical and material properties. There are by now many seminal experimental results that show that upon reaching strong matter-photon coupling in photonic structures, we can indeed modify and control such properties. Consequently, methods that can approximately solve the Pauli--Fierz field theory become increasingly important.


\section*{Acknowledgement}

We are indebted to our two referees for numerous comments and careful corrections that helped to greatly improve the paper.
EIT, MAC and AL thank the  Research Council of Norway (RCN) under CoE (Hylleraas Centre) Grant No.~262695, for AL and MAC also CCerror Grant No.~287906 and for EIT also ``Magnetic Chemistry'' Grant No.~287950, and MR acknowledges the Cluster of Excellence ``CUI: Advanced Imaging of Matter'' of the Deutsche Forschungsgemeinschaft (DFG), EXC 2056, project ID 390715994. AL and MAC were also supported by the ERC through StG REGAL under agreement No.~101041487.
The authors thank Centre for Advanced Studies (CAS) in Oslo, since this work includes insights gathered at the YoungCAS workshop ``Do Electron Current Densities Determine All There Is to Know?'', held July 9-13, 2018, in Oslo, Norway.

\pagebreak
\section*{Bibliography}

%


\begin{thebibliography}{76}%
\makeatletter
\providecommand \@ifxundefined [1]{%
 \@ifx{#1\undefined}
}%
\providecommand \@ifnum [1]{%
 \ifnum #1\expandafter \@firstoftwo
 \else \expandafter \@secondoftwo
 \fi
}%
\providecommand \@ifx [1]{%
 \ifx #1\expandafter \@firstoftwo
 \else \expandafter \@secondoftwo
 \fi
}%
\providecommand \natexlab [1]{#1}%
\providecommand \enquote  [1]{``#1''}%
\providecommand \bibnamefont  [1]{#1}%
\providecommand \bibfnamefont [1]{#1}%
\providecommand \citenamefont [1]{#1}%
\providecommand \href@noop [0]{\@secondoftwo}%
\providecommand \href [0]{\begingroup \@sanitize@url \@href}%
\providecommand \@href[1]{\@@startlink{#1}\@@href}%
\providecommand \@@href[1]{\endgroup#1\@@endlink}%
\providecommand \@sanitize@url [0]{\catcode `\\12\catcode `\$12\catcode
  `\&12\catcode `\#12\catcode `\^12\catcode `\_12\catcode `\%12\relax}%
\providecommand \@@startlink[1]{}%
\providecommand \@@endlink[0]{}%
\providecommand \url  [0]{\begingroup\@sanitize@url \@url }%
\providecommand \@url [1]{\endgroup\@href {#1}{\urlprefix }}%
\providecommand \urlprefix  [0]{URL }%
\providecommand \Eprint [0]{\href }%
\providecommand \doibase [0]{https://doi.org/}%
\providecommand \selectlanguage [0]{\@gobble}%
\providecommand \bibinfo  [0]{\@secondoftwo}%
\providecommand \bibfield  [0]{\@secondoftwo}%
\providecommand \translation [1]{[#1]}%
\providecommand \BibitemOpen [0]{}%
\providecommand \bibitemStop [0]{}%
\providecommand \bibitemNoStop [0]{.\EOS\space}%
\providecommand \EOS [0]{\spacefactor3000\relax}%
\providecommand \BibitemShut  [1]{\csname bibitem#1\endcsname}%
\let\auto@bib@innerbib\@empty
\bibitem [{\citenamefont {Vignale}\ and\ \citenamefont
  {Rasolt}(1987)}]{Vignale1987}%
  \BibitemOpen
  \bibfield  {author} {\bibinfo {author} {\bibfnamefont {G.}~\bibnamefont
  {Vignale}}\ and\ \bibinfo {author} {\bibfnamefont {M.}~\bibnamefont
  {Rasolt}},\ }\bibfield  {title} {\enquote {\bibinfo {title}
  {Density-functional theory in strong magnetic fields},}\ }\href
  {https://doi.org/10.1103/PhysRevLett.59.2360} {\bibfield  {journal} {\bibinfo
   {journal} {Phys. Rev. Lett.}\ }\textbf {\bibinfo {volume} {59}},\ \bibinfo
  {pages} {2360--2363} (\bibinfo {year} {1987})}\BibitemShut {NoStop}%
\bibitem [{\citenamefont {Vignale}\ and\ \citenamefont
  {Rasolt}(1988)}]{Vignale1988}%
  \BibitemOpen
  \bibfield  {author} {\bibinfo {author} {\bibfnamefont {G.}~\bibnamefont
  {Vignale}}\ and\ \bibinfo {author} {\bibfnamefont {M.}~\bibnamefont
  {Rasolt}},\ }\bibfield  {title} {\enquote {\bibinfo {title} {Current- and
  spin-density-functional theory for inhomogeneous electronic systems in strong
  magnetic fields},}\ }\href {https://doi.org/10.1103/PhysRevB.37.10685}
  {\bibfield  {journal} {\bibinfo  {journal} {Phys. Rev. B}\ }\textbf {\bibinfo
  {volume} {37}},\ \bibinfo {pages} {10685--10696} (\bibinfo {year}
  {1988})}\BibitemShut {NoStop}%
\bibitem [{\citenamefont {Vignale}, \citenamefont {Rasolt},\ and\ \citenamefont
  {Geldart}(1990)}]{vignale-rasolt-geldart1990}%
  \BibitemOpen
  \bibfield  {author} {\bibinfo {author} {\bibfnamefont {G.}~\bibnamefont
  {Vignale}}, \bibinfo {author} {\bibfnamefont {M.}~\bibnamefont {Rasolt}},\
  and\ \bibinfo {author} {\bibfnamefont {D.}~\bibnamefont {Geldart}},\
  }\bibfield  {title} {\enquote {\bibinfo {title} {Magnetic fields and density
  functional theory},}\ }\href {https://doi.org/10.1016/S0065-3276(08)60599-7}
  {\bibfield  {journal} {\bibinfo  {journal} {Adv. Quantum Chem.}\ }\textbf
  {\bibinfo {volume} {21}},\ \bibinfo {pages} {235--253} (\bibinfo {year}
  {1990})}\BibitemShut {NoStop}%
\bibitem [{\citenamefont {Diener}(1991)}]{Diener}%
  \BibitemOpen
  \bibfield  {author} {\bibinfo {author} {\bibfnamefont {G.}~\bibnamefont
  {Diener}},\ }\bibfield  {title} {\enquote {\bibinfo {title}
  {Current-density-functional theory for a nonrelativistic electron gas in a
  strong magnetic field},}\ }\href {https://doi.org/10.1088/0953-8984/3/47/014}
  {\bibfield  {journal} {\bibinfo  {journal} {J. Phys.: Condens. Matter}\
  }\textbf {\bibinfo {volume} {3}},\ \bibinfo {pages} {9417--9428} (\bibinfo
  {year} {1991})}\BibitemShut {NoStop}%
\bibitem [{\citenamefont {Capelle}\ and\ \citenamefont
  {Vignale}(2002)}]{Capelle2002}%
  \BibitemOpen
  \bibfield  {author} {\bibinfo {author} {\bibfnamefont {K.}~\bibnamefont
  {Capelle}}\ and\ \bibinfo {author} {\bibfnamefont {G.}~\bibnamefont
  {Vignale}},\ }\bibfield  {title} {\enquote {\bibinfo {title} {Nonuniqueness
  and derivative discontinuities in density-functional theories for
  current-carrying and superconducting systems},}\ }\href
  {https://doi.org/10.1103/PhysRevB.65.113106} {\bibfield  {journal} {\bibinfo
  {journal} {Phys. Rev. B}\ }\textbf {\bibinfo {volume} {65}},\ \bibinfo
  {pages} {113106} (\bibinfo {year} {2002})}\BibitemShut {NoStop}%
\bibitem [{\citenamefont {Penz}\ \emph {et~al.}(2023)\citenamefont {Penz},
  \citenamefont {Tellgren}, \citenamefont {Csirik}, \citenamefont
  {Ruggenthaler},\ and\ \citenamefont {Laestadius}}]{PartI}%
  \BibitemOpen
  \bibfield  {author} {\bibinfo {author} {\bibfnamefont {M.}~\bibnamefont
  {Penz}}, \bibinfo {author} {\bibfnamefont {E.~I.}\ \bibnamefont {Tellgren}},
  \bibinfo {author} {\bibfnamefont {M.~A.}\ \bibnamefont {Csirik}}, \bibinfo
  {author} {\bibfnamefont {M.}~\bibnamefont {Ruggenthaler}},\ and\ \bibinfo
  {author} {\bibfnamefont {A.}~\bibnamefont {Laestadius}},\ }\bibfield  {title}
  {\enquote {\bibinfo {title} {The structure of the density-potential mapping.
  {P}art {I}: Standard density-functional theory},}\ }\href@noop {} {\bibfield
  {journal} {\bibinfo  {journal} {ACS Phys. Chem. Au}\ } (\bibinfo {year}
  {2023})}\BibitemShut {NoStop}%
\bibitem [{\citenamefont {von Barth}(2004)}]{vonBarth2004basic}%
  \BibitemOpen
  \bibfield  {author} {\bibinfo {author} {\bibfnamefont {U.}~\bibnamefont {von
  Barth}},\ }\bibfield  {title} {\enquote {\bibinfo {title} {Basic
  density-functional theory—an overview},}\ }\href
  {https://doi.org/10.1238/Physica.Topical.109a00009} {\bibfield  {journal}
  {\bibinfo  {journal} {Phys. Scr.}\ }\textbf {\bibinfo {volume} {2004}},\
  \bibinfo {pages} {9} (\bibinfo {year} {2004})}\BibitemShut {NoStop}%
\bibitem [{\citenamefont {Burke}\ and\ \citenamefont
  {friends}(2007)}]{burke2007abc}%
  \BibitemOpen
  \bibfield  {author} {\bibinfo {author} {\bibfnamefont {K.}~\bibnamefont
  {Burke}}\ and\ \bibinfo {author} {\bibnamefont {friends}},\ }\href
  {https://dft.uci.edu/doc/g1.pdf} {\enquote {\bibinfo {title} {The {ABC} of
  {DFT}},}\ } (\bibinfo {year} {2007}),\ \bibinfo {note} {accessed
  2023-01-31}\BibitemShut {NoStop}%
\bibitem [{\citenamefont {Burke}(2012)}]{burke2012perspective}%
  \BibitemOpen
  \bibfield  {author} {\bibinfo {author} {\bibfnamefont {K.}~\bibnamefont
  {Burke}},\ }\bibfield  {title} {\enquote {\bibinfo {title} {Perspective on
  density functional theory},}\ }\href {https://doi.org/10.1063/1.4704546}
  {\bibfield  {journal} {\bibinfo  {journal} {J. Chem. Phys.}\ }\textbf
  {\bibinfo {volume} {136}},\ \bibinfo {pages} {150901} (\bibinfo {year}
  {2012})}\BibitemShut {NoStop}%
\bibitem [{\citenamefont {Dreizler}\ and\ \citenamefont
  {Gross}(2012)}]{dreizler2012-book}%
  \BibitemOpen
  \bibfield  {author} {\bibinfo {author} {\bibfnamefont {R.~M.}\ \bibnamefont
  {Dreizler}}\ and\ \bibinfo {author} {\bibfnamefont {E.~K.}\ \bibnamefont
  {Gross}},\ }\href@noop {} {\emph {\bibinfo {title} {Density Functional
  Theory: An Approach to the Quantum Many-body Problem}}}\ (\bibinfo
  {publisher} {Springer},\ \bibinfo {year} {2012})\BibitemShut {NoStop}%
\bibitem [{\citenamefont {Eschrig}(2003)}]{eschrig2003-book}%
  \BibitemOpen
  \bibfield  {author} {\bibinfo {author} {\bibfnamefont {H.}~\bibnamefont
  {Eschrig}},\ }\href@noop {} {\emph {\bibinfo {title} {The Fundamentals of
  Density Functional Theory}}},\ \bibinfo {edition} {2nd}\ ed.\ (\bibinfo
  {publisher} {Springer},\ \bibinfo {year} {2003})\BibitemShut {NoStop}%
\bibitem [{\citenamefont {Parr}\ and\ \citenamefont {Yang}(1989)}]{parr}%
  \BibitemOpen
  \bibfield  {author} {\bibinfo {author} {\bibfnamefont {R.}~\bibnamefont
  {Parr}}\ and\ \bibinfo {author} {\bibfnamefont {W.}~\bibnamefont {Yang}},\
  }\href@noop {} {\emph {\bibinfo {title} {Density Functional Theory of Atoms
  and Molecules}}}\ (\bibinfo  {publisher} {Oxford University Press},\ \bibinfo
  {year} {1989})\BibitemShut {NoStop}%
\bibitem [{\citenamefont {Teale}\ \emph {et~al.}(2022)\citenamefont {Teale},
  \citenamefont {Helgaker}, \citenamefont {Savin}, \citenamefont {Adamo},
  \citenamefont {Aradi}, \citenamefont {Arbuznikov}, \citenamefont {Ayers},
  \citenamefont {Baerends}, \citenamefont {Barone},\ and\ \citenamefont
  {Calaminici}}]{teale2022round-table}%
  \BibitemOpen
  \bibfield  {author} {\bibinfo {author} {\bibfnamefont {A.~M.}\ \bibnamefont
  {Teale}}, \bibinfo {author} {\bibfnamefont {T.}~\bibnamefont {Helgaker}},
  \bibinfo {author} {\bibfnamefont {A.}~\bibnamefont {Savin}}, \bibinfo
  {author} {\bibfnamefont {C.}~\bibnamefont {Adamo}}, \bibinfo {author}
  {\bibfnamefont {B.}~\bibnamefont {Aradi}}, \bibinfo {author} {\bibfnamefont
  {A.~V.}\ \bibnamefont {Arbuznikov}}, \bibinfo {author} {\bibfnamefont
  {P.~W.}\ \bibnamefont {Ayers}}, \bibinfo {author} {\bibfnamefont {E.~J.}\
  \bibnamefont {Baerends}}, \bibinfo {author} {\bibfnamefont {V.}~\bibnamefont
  {Barone}},\ and\ \bibinfo {author} {\bibfnamefont {P.}~\bibnamefont
  {Calaminici}},\ }\bibfield  {title} {\enquote {\bibinfo {title} {{DFT}
  exchange: Sharing perspectives on the workhorse of quantum chemistry and
  materials science},}\ }\href {https://doi.org/10.1039/D2CP02827A} {\bibfield
  {journal} {\bibinfo  {journal} {Phys. Chem. Chem. Phys.}\ }\textbf {\bibinfo
  {volume} {24}},\ \bibinfo {pages} {28700--28781} (\bibinfo {year}
  {2022})}\BibitemShut {NoStop}%
\bibitem [{\citenamefont {Lai}(2001)}]{LAI_RMP73_629}%
  \BibitemOpen
  \bibfield  {author} {\bibinfo {author} {\bibfnamefont {D.}~\bibnamefont
  {Lai}},\ }\bibfield  {title} {\enquote {\bibinfo {title} {Matter in strong
  magnetic fields},}\ }\href {https://doi.org/10.1103/RevModPhys.73.629}
  {\bibfield  {journal} {\bibinfo  {journal} {Rev. Mod. Phys.}\ }\textbf
  {\bibinfo {volume} {73}},\ \bibinfo {pages} {629--662} (\bibinfo {year}
  {2001})}\BibitemShut {NoStop}%
\bibitem [{\citenamefont {Helgaker}, \citenamefont {Jaszunski},\ and\
  \citenamefont {Ruud}(1999)}]{HelgakerJaszunskiRuud}%
  \BibitemOpen
  \bibfield  {author} {\bibinfo {author} {\bibfnamefont {T.}~\bibnamefont
  {Helgaker}}, \bibinfo {author} {\bibfnamefont {M.}~\bibnamefont
  {Jaszunski}},\ and\ \bibinfo {author} {\bibfnamefont {K.}~\bibnamefont
  {Ruud}},\ }\bibfield  {title} {\enquote {\bibinfo {title} {Ab initio methods
  for the calculation of {NMR} shielding and indirect spinminus signspin
  coupling constants},}\ }\href {https://doi.org/10.1021/cr960017t} {\bibfield
  {journal} {\bibinfo  {journal} {Chem. Rev.}\ }\textbf {\bibinfo {volume}
  {99}},\ \bibinfo {pages} {293--352} (\bibinfo {year} {1999})}\BibitemShut
  {NoStop}%
\bibitem [{\citenamefont {Gomes}\ and\ \citenamefont
  {Mallion}(2001)}]{GOMES_CR101_1349}%
  \BibitemOpen
  \bibfield  {author} {\bibinfo {author} {\bibfnamefont {J.~A. N.~F.}\
  \bibnamefont {Gomes}}\ and\ \bibinfo {author} {\bibfnamefont {R.~B.}\
  \bibnamefont {Mallion}},\ }\bibfield  {title} {\enquote {\bibinfo {title}
  {Aromaticity and ring currents},}\ }\href {https://doi.org/10.1021/cr990323h}
  {\bibfield  {journal} {\bibinfo  {journal} {Chem. Rev.}\ }\textbf {\bibinfo
  {volume} {101}},\ \bibinfo {pages} {1349--1384} (\bibinfo {year}
  {2001})}\BibitemShut {NoStop}%
\bibitem [{\citenamefont {Vaara}\ and\ \citenamefont
  {Pyykk{\"o}}(2001)}]{VAARA_PRL86_3268}%
  \BibitemOpen
  \bibfield  {author} {\bibinfo {author} {\bibfnamefont {J.}~\bibnamefont
  {Vaara}}\ and\ \bibinfo {author} {\bibfnamefont {P.}~\bibnamefont
  {Pyykk{\"o}}},\ }\bibfield  {title} {\enquote {\bibinfo {title}
  {Magnetic-field-induced quadrupole splitting in gaseous and liquid {Xe-131}
  {NMR}: Quadratic and quartic field dependence},}\ }\href
  {https://doi.org/10.1103/PhysRevLett.86.3268} {\bibfield  {journal} {\bibinfo
   {journal} {Phys. Rev. Lett.}\ }\textbf {\bibinfo {volume} {86}},\ \bibinfo
  {pages} {3268--3271} (\bibinfo {year} {2001})}\BibitemShut {NoStop}%
\bibitem [{\citenamefont {Pagola}\ \emph {et~al.}(2005)\citenamefont {Pagola},
  \citenamefont {Pelloni}, \citenamefont {Caputo}, \citenamefont {Ferraro},\
  and\ \citenamefont {Lazzeretti}}]{PAGOLA_PRA72_033401}%
  \BibitemOpen
  \bibfield  {author} {\bibinfo {author} {\bibfnamefont {G.~I.}\ \bibnamefont
  {Pagola}}, \bibinfo {author} {\bibfnamefont {S.}~\bibnamefont {Pelloni}},
  \bibinfo {author} {\bibfnamefont {M.~C.}\ \bibnamefont {Caputo}}, \bibinfo
  {author} {\bibfnamefont {M.~B.}\ \bibnamefont {Ferraro}},\ and\ \bibinfo
  {author} {\bibfnamefont {P.}~\bibnamefont {Lazzeretti}},\ }\bibfield  {title}
  {\enquote {\bibinfo {title} {Fourth-rank hypermagnetizability of medium-size
  planar conjugated molecules and fullerene},}\ }\href
  {https://doi.org/10.1103/PhysRevA.72.033401} {\bibfield  {journal} {\bibinfo
  {journal} {Phys. Rev. A}\ }\textbf {\bibinfo {volume} {72}},\ \bibinfo
  {pages} {033401} (\bibinfo {year} {2005})}\BibitemShut {NoStop}%
\bibitem [{\citenamefont {Caputo}\ \emph {et~al.}(2007)\citenamefont {Caputo},
  \citenamefont {Ferraro}, \citenamefont {Pagola},\ and\ \citenamefont
  {Lazzeretti}}]{CAPUTO_JCP126_154103}%
  \BibitemOpen
  \bibfield  {author} {\bibinfo {author} {\bibfnamefont {M.~C.}\ \bibnamefont
  {Caputo}}, \bibinfo {author} {\bibfnamefont {M.~B.}\ \bibnamefont {Ferraro}},
  \bibinfo {author} {\bibfnamefont {G.~I.}\ \bibnamefont {Pagola}},\ and\
  \bibinfo {author} {\bibfnamefont {P.}~\bibnamefont {Lazzeretti}},\ }\bibfield
   {title} {\enquote {\bibinfo {title} {Calculation of the electric
  hypershielding at the nuclei of molecules in a strong magnetic field},}\
  }\href {https://doi.org/10.1063/1.2716666} {\bibfield  {journal} {\bibinfo
  {journal} {J. Chem. Phys.}\ }\textbf {\bibinfo {volume} {126}},\ \bibinfo
  {eid} {154103} (\bibinfo {year} {2007})}\BibitemShut {NoStop}%
\bibitem [{\citenamefont {Tellgren}\ \emph {et~al.}(2012)\citenamefont
  {Tellgren}, \citenamefont {Kvaal}, \citenamefont {Sagvolden}, \citenamefont
  {Ekstr\"om}, \citenamefont {Teale},\ and\ \citenamefont
  {Helgaker}}]{Tellgren2012}%
  \BibitemOpen
  \bibfield  {author} {\bibinfo {author} {\bibfnamefont {E.~I.}\ \bibnamefont
  {Tellgren}}, \bibinfo {author} {\bibfnamefont {S.}~\bibnamefont {Kvaal}},
  \bibinfo {author} {\bibfnamefont {E.}~\bibnamefont {Sagvolden}}, \bibinfo
  {author} {\bibfnamefont {U.}~\bibnamefont {Ekstr\"om}}, \bibinfo {author}
  {\bibfnamefont {A.~M.}\ \bibnamefont {Teale}},\ and\ \bibinfo {author}
  {\bibfnamefont {T.}~\bibnamefont {Helgaker}},\ }\bibfield  {title} {\enquote
  {\bibinfo {title} {Choice of basic variables in current-density-functional
  theory},}\ }\href {https://doi.org/10.1103/PhysRevA.86.062506} {\bibfield
  {journal} {\bibinfo  {journal} {Phys. Rev. A}\ }\textbf {\bibinfo {volume}
  {86}},\ \bibinfo {pages} {062506} (\bibinfo {year} {2012})}\BibitemShut
  {NoStop}%
\bibitem [{\citenamefont {Laestadius}\ and\ \citenamefont
  {Benedicks}(2014)}]{LaestadiusBenedicks2014}%
  \BibitemOpen
  \bibfield  {author} {\bibinfo {author} {\bibfnamefont {A.}~\bibnamefont
  {Laestadius}}\ and\ \bibinfo {author} {\bibfnamefont {M.}~\bibnamefont
  {Benedicks}},\ }\bibfield  {title} {\enquote {\bibinfo {title}
  {{H}ohenberg--{K}ohn theorems in the presence of magnetic field},}\ }\href
  {https://doi.org/10.1002/qua.24668} {\bibfield  {journal} {\bibinfo
  {journal} {Int. J. Quantum Chem.}\ }\textbf {\bibinfo {volume} {114}},\
  \bibinfo {pages} {782--795} (\bibinfo {year} {2014})}\BibitemShut {NoStop}%
\bibitem [{\citenamefont {Grayce}\ and\ \citenamefont {Harris}(1994)}]{GRAYCE}%
  \BibitemOpen
  \bibfield  {author} {\bibinfo {author} {\bibfnamefont {C.~J.}\ \bibnamefont
  {Grayce}}\ and\ \bibinfo {author} {\bibfnamefont {R.~A.}\ \bibnamefont
  {Harris}},\ }\bibfield  {title} {\enquote {\bibinfo {title} {Magnetic-field
  density-functional theory},}\ }\href
  {https://doi.org/10.1103/PhysRevA.50.3089} {\bibfield  {journal} {\bibinfo
  {journal} {Phys. Rev. A}\ }\textbf {\bibinfo {volume} {50}},\ \bibinfo
  {pages} {3089--3095} (\bibinfo {year} {1994})}\BibitemShut {NoStop}%
\bibitem [{\citenamefont {Laestadius}, \citenamefont {Penz},\ and\
  \citenamefont {Tellgren}(2021)}]{Laestadius2021}%
  \BibitemOpen
  \bibfield  {author} {\bibinfo {author} {\bibfnamefont {A.}~\bibnamefont
  {Laestadius}}, \bibinfo {author} {\bibfnamefont {M.}~\bibnamefont {Penz}},\
  and\ \bibinfo {author} {\bibfnamefont {E.}~\bibnamefont {Tellgren}},\
  }\bibfield  {title} {\enquote {\bibinfo {title} {Revisiting
  density-functional theory of the total current density},}\ }\href
  {https://doi.org/10.1088/1361-648X/abf784} {\bibfield  {journal} {\bibinfo
  {journal} {J. Phys.: Cond. Matter}\ }\textbf {\bibinfo {volume} {33}},\
  \bibinfo {pages} {295504} (\bibinfo {year} {2021})}\BibitemShut {NoStop}%
\bibitem [{\citenamefont {Laestadius}\ and\ \citenamefont
  {Benedicks}(2015)}]{LaestadiusBenedicks2015}%
  \BibitemOpen
  \bibfield  {author} {\bibinfo {author} {\bibfnamefont {A.}~\bibnamefont
  {Laestadius}}\ and\ \bibinfo {author} {\bibfnamefont {M.}~\bibnamefont
  {Benedicks}},\ }\bibfield  {title} {\enquote {\bibinfo {title} {Nonexistence
  of a hohenberg-kohn variational principle in total current-density-functional
  theory},}\ }\href {https://doi.org/10.1103/PhysRevA.91.032508} {\bibfield
  {journal} {\bibinfo  {journal} {Phys. Rev. A}\ }\textbf {\bibinfo {volume}
  {91}},\ \bibinfo {pages} {032508} (\bibinfo {year} {2015})}\BibitemShut
  {NoStop}%
\bibitem [{\citenamefont {Tellgren}(2018)}]{TellgrenSolo2018}%
  \BibitemOpen
  \bibfield  {author} {\bibinfo {author} {\bibfnamefont {E.~I.}\ \bibnamefont
  {Tellgren}},\ }\bibfield  {title} {\enquote {\bibinfo {title}
  {Density-functional theory for internal magnetic fields},}\ }\href
  {https://doi.org/10.1103/PhysRevA.97.012504} {\bibfield  {journal} {\bibinfo
  {journal} {Phys. Rev. A}\ }\textbf {\bibinfo {volume} {97}},\ \bibinfo
  {pages} {012504} (\bibinfo {year} {2018})}\BibitemShut {NoStop}%
\bibitem [{\citenamefont {Ayers}\ and\ \citenamefont
  {Yang}(2006)}]{AYERS_JCP124_224108}%
  \BibitemOpen
  \bibfield  {author} {\bibinfo {author} {\bibfnamefont {P.~W.}\ \bibnamefont
  {Ayers}}\ and\ \bibinfo {author} {\bibfnamefont {W.}~\bibnamefont {Yang}},\
  }\bibfield  {title} {\enquote {\bibinfo {title} {Legendre-transform
  functionals for spin-density-functional theory},}\ }\href
  {https://doi.org/10.1063/1.2200884} {\bibfield  {journal} {\bibinfo
  {journal} {J. Chem. Phys.}\ }\textbf {\bibinfo {volume} {124}},\ \bibinfo
  {eid} {224108} (\bibinfo {year} {2006})}\BibitemShut {NoStop}%
\bibitem [{\citenamefont {Capelle}\ and\ \citenamefont
  {Gross}(1997)}]{CAPELLE_PRL78_1872}%
  \BibitemOpen
  \bibfield  {author} {\bibinfo {author} {\bibfnamefont {K.}~\bibnamefont
  {Capelle}}\ and\ \bibinfo {author} {\bibfnamefont {E.~K.~U.}\ \bibnamefont
  {Gross}},\ }\bibfield  {title} {\enquote {\bibinfo {title} {Spin-density
  functionals from current-density functional theory and vice versa: {A} road
  towards new approximations},}\ }\href
  {https://doi.org/10.1103/PhysRevLett.78.1872} {\bibfield  {journal} {\bibinfo
   {journal} {Phys. Rev. Lett.}\ }\textbf {\bibinfo {volume} {78}},\ \bibinfo
  {pages} {1872--1875} (\bibinfo {year} {1997})}\BibitemShut {NoStop}%
\bibitem [{\citenamefont {Eschrig}\ and\ \citenamefont
  {Servedio}(1999)}]{ESCHRIG_JCC20_23}%
  \BibitemOpen
  \bibfield  {author} {\bibinfo {author} {\bibfnamefont {H.}~\bibnamefont
  {Eschrig}}\ and\ \bibinfo {author} {\bibfnamefont {V.~D.~P.}\ \bibnamefont
  {Servedio}},\ }\bibfield  {title} {\enquote {\bibinfo {title} {Relativistic
  density functional approach to open shells},}\ }\href
  {https://doi.org/10.1002/(SICI)1096-987X(19990115)20:1<23::AID-JCC5>3.0.CO;2-N}
  {\bibfield  {journal} {\bibinfo  {journal} {J. Comput. Chem.}\ }\textbf
  {\bibinfo {volume} {20}},\ \bibinfo {pages} {23--30} (\bibinfo {year}
  {1999})}\BibitemShut {NoStop}%
\bibitem [{\citenamefont {Gontier}(2013)}]{GONTIER_PRL111_153001}%
  \BibitemOpen
  \bibfield  {author} {\bibinfo {author} {\bibfnamefont {D.}~\bibnamefont
  {Gontier}},\ }\bibfield  {title} {\enquote {\bibinfo {title}
  {$n$-representability in noncollinear spin-polarized density-functional
  theory},}\ }\href {https://doi.org/10.1103/PhysRevLett.111.153001} {\bibfield
   {journal} {\bibinfo  {journal} {Phys. Rev. Lett.}\ }\textbf {\bibinfo
  {volume} {111}},\ \bibinfo {pages} {153001} (\bibinfo {year}
  {2013})}\BibitemShut {NoStop}%
\bibitem [{\citenamefont {Eich}\ and\ \citenamefont
  {Gross}(2013)}]{EICH_PRL111_156401}%
  \BibitemOpen
  \bibfield  {author} {\bibinfo {author} {\bibfnamefont {F.~G.}\ \bibnamefont
  {Eich}}\ and\ \bibinfo {author} {\bibfnamefont {E.~K.~U.}\ \bibnamefont
  {Gross}},\ }\bibfield  {title} {\enquote {\bibinfo {title} {Transverse
  spin-gradient functional for noncollinear spin-density-functional theory},}\
  }\href {https://doi.org/10.1103/PhysRevLett.111.156401} {\bibfield  {journal}
  {\bibinfo  {journal} {Phys. Rev. Lett.}\ }\textbf {\bibinfo {volume} {111}},\
  \bibinfo {pages} {156401} (\bibinfo {year} {2013})}\BibitemShut {NoStop}%
\bibitem [{\citenamefont {Laestadius}\ \emph {et~al.}(2019)\citenamefont
  {Laestadius}, \citenamefont {Penz}, \citenamefont {Tellgren}, \citenamefont
  {Ruggenthaler}, \citenamefont {Kvaal},\ and\ \citenamefont
  {Helgaker}}]{MY-CDFTpaper2019}%
  \BibitemOpen
  \bibfield  {author} {\bibinfo {author} {\bibfnamefont {A.}~\bibnamefont
  {Laestadius}}, \bibinfo {author} {\bibfnamefont {M.}~\bibnamefont {Penz}},
  \bibinfo {author} {\bibfnamefont {E.~I.}\ \bibnamefont {Tellgren}}, \bibinfo
  {author} {\bibfnamefont {M.}~\bibnamefont {Ruggenthaler}}, \bibinfo {author}
  {\bibfnamefont {S.}~\bibnamefont {Kvaal}},\ and\ \bibinfo {author}
  {\bibfnamefont {T.}~\bibnamefont {Helgaker}},\ }\bibfield  {title} {\enquote
  {\bibinfo {title} {{Kohn--Sham theory with paramagnetic currents:
  Compatibility and functional differentiability}},}\ }\href
  {https://doi.org/10.1021/acs.jctc.9b00141} {\bibfield  {journal} {\bibinfo
  {journal} {J. Chem. Theory Comput.}\ }\textbf {\bibinfo {volume} {15}},\
  \bibinfo {pages} {4003--4020} (\bibinfo {year} {2019})}\BibitemShut {NoStop}%
\bibitem [{\citenamefont {Lieb}\ and\ \citenamefont
  {Schrader}(2013)}]{LiebSchrader}%
  \BibitemOpen
  \bibfield  {author} {\bibinfo {author} {\bibfnamefont {E.~H.}\ \bibnamefont
  {Lieb}}\ and\ \bibinfo {author} {\bibfnamefont {R.}~\bibnamefont
  {Schrader}},\ }\bibfield  {title} {\enquote {\bibinfo {title} {Current
  densities in density-functional theory},}\ }\href
  {https://doi.org/10.1103/PhysRevA.88.032516} {\bibfield  {journal} {\bibinfo
  {journal} {Phys. Rev. A}\ }\textbf {\bibinfo {volume} {88}},\ \bibinfo
  {pages} {032516} (\bibinfo {year} {2013})}\BibitemShut {NoStop}%
\bibitem [{\citenamefont {Tellgren}, \citenamefont {Kvaal},\ and\ \citenamefont
  {Helgaker}(2014)}]{TellgrenNrep}%
  \BibitemOpen
  \bibfield  {author} {\bibinfo {author} {\bibfnamefont {E.~I.}\ \bibnamefont
  {Tellgren}}, \bibinfo {author} {\bibfnamefont {S.}~\bibnamefont {Kvaal}},\
  and\ \bibinfo {author} {\bibfnamefont {T.}~\bibnamefont {Helgaker}},\
  }\bibfield  {title} {\enquote {\bibinfo {title} {Fermion $n$-representability
  for prescribed density and paramagnetic current density},}\ }\href
  {https://doi.org/10.1103/PhysRevA.89.012515} {\bibfield  {journal} {\bibinfo
  {journal} {Phys. Rev. A}\ }\textbf {\bibinfo {volume} {89}},\ \bibinfo
  {pages} {012515} (\bibinfo {year} {2014})}\BibitemShut {NoStop}%
\bibitem [{\citenamefont {Laestadius}\ and\ \citenamefont
  {Tellgren}(2018)}]{LaestadiusTellgren2018}%
  \BibitemOpen
  \bibfield  {author} {\bibinfo {author} {\bibfnamefont {A.}~\bibnamefont
  {Laestadius}}\ and\ \bibinfo {author} {\bibfnamefont {E.~I.}\ \bibnamefont
  {Tellgren}},\ }\bibfield  {title} {\enquote {\bibinfo {title}
  {Density--wave-function mapping in degenerate current-density-functional
  theory},}\ }\href {https://doi.org/10.1103/PhysRevA.97.022514} {\bibfield
  {journal} {\bibinfo  {journal} {Phys. Rev. A}\ }\textbf {\bibinfo {volume}
  {97}},\ \bibinfo {pages} {022514} (\bibinfo {year} {2018})}\BibitemShut
  {NoStop}%
\bibitem [{\citenamefont {Lieb}\ and\ \citenamefont {Loss}(2001)}]{LiebLoss}%
  \BibitemOpen
  \bibfield  {author} {\bibinfo {author} {\bibfnamefont {E.~H.}\ \bibnamefont
  {Lieb}}\ and\ \bibinfo {author} {\bibfnamefont {M.}~\bibnamefont {Loss}},\
  }\href@noop {} {\emph {\bibinfo {title} {Analysis}}}\ (\bibinfo  {publisher}
  {American Mathematical Society, Providence, Rhode Island},\ \bibinfo {year}
  {2001})\BibitemShut {NoStop}%
\bibitem [{\citenamefont {Avron}, \citenamefont {Herbst},\ and\ \citenamefont
  {Simon}(1981)}]{Avron1981}%
  \BibitemOpen
  \bibfield  {author} {\bibinfo {author} {\bibfnamefont {J.~E.}\ \bibnamefont
  {Avron}}, \bibinfo {author} {\bibfnamefont {I.~W.}\ \bibnamefont {Herbst}},\
  and\ \bibinfo {author} {\bibfnamefont {B.}~\bibnamefont {Simon}},\ }\bibfield
   {title} {\enquote {\bibinfo {title} {Schrödinger operators with magnetic
  fields},}\ }\href {https://doi.org/10.1007/BF01209311} {\bibfield  {journal}
  {\bibinfo  {journal} {Commun. Math. Phys.}\ }\textbf {\bibinfo {volume}
  {79}},\ \bibinfo {pages} {529–572} (\bibinfo {year} {1981})}\BibitemShut
  {NoStop}%
\bibitem [{\citenamefont {Laestadius}(2014)}]{Laestadius2014}%
  \BibitemOpen
  \bibfield  {author} {\bibinfo {author} {\bibfnamefont {A.}~\bibnamefont
  {Laestadius}},\ }\bibfield  {title} {\enquote {\bibinfo {title} {Density
  functionals in the presence of magnetic field},}\ }\href
  {https://doi.org/10.1002/qua.24707} {\bibfield  {journal} {\bibinfo
  {journal} {Int. J. Quantum Chem.}\ }\textbf {\bibinfo {volume} {114}},\
  \bibinfo {pages} {1445--1456} (\bibinfo {year} {2014})}\BibitemShut {NoStop}%
\bibitem [{\citenamefont {Giesbertz}(2016)}]{giesbertz2016invertibility}%
  \BibitemOpen
  \bibfield  {author} {\bibinfo {author} {\bibfnamefont {K.~J.}\ \bibnamefont
  {Giesbertz}},\ }\bibfield  {title} {\enquote {\bibinfo {title} {Invertibility
  of the retarded response functions for initial mixed states: application to
  one-body reduced density matrix functional theory},}\ }\href
  {https://doi.org/10.1080/00268976.2016.1141253} {\bibfield  {journal}
  {\bibinfo  {journal} {Mol. Phys.}\ }\textbf {\bibinfo {volume} {114}},\
  \bibinfo {pages} {1128--1134} (\bibinfo {year} {2016})}\BibitemShut {NoStop}%
\bibitem [{\citenamefont {Valone}(1980)}]{Valone80}%
  \BibitemOpen
  \bibfield  {author} {\bibinfo {author} {\bibfnamefont {S.~M.}\ \bibnamefont
  {Valone}},\ }\bibfield  {title} {\enquote {\bibinfo {title} {Consequences of
  extending 1‐matrix energy functionals from pure–state representable to
  all ensemble representable 1 matrices},}\ }\href
  {https://doi.org/10.1063/1.440249} {\bibfield  {journal} {\bibinfo  {journal}
  {J. Chem. Phys.}\ }\textbf {\bibinfo {volume} {73}},\ \bibinfo {pages}
  {1344--1349} (\bibinfo {year} {1980})}\BibitemShut {NoStop}%
\bibitem [{\citenamefont {Penz}\ and\ \citenamefont {van
  Leeuwen}(2021)}]{penz-DFT-graphs}%
  \BibitemOpen
  \bibfield  {author} {\bibinfo {author} {\bibfnamefont {M.}~\bibnamefont
  {Penz}}\ and\ \bibinfo {author} {\bibfnamefont {R.}~\bibnamefont {van
  Leeuwen}},\ }\bibfield  {title} {\enquote {\bibinfo {title}
  {Density-functional theory on graphs},}\ }\href
  {https://doi.org/10.1063/5.0074249} {\bibfield  {journal} {\bibinfo
  {journal} {J. Chem. Phys.}\ }\textbf {\bibinfo {volume} {155}},\ \bibinfo
  {pages} {244111} (\bibinfo {year} {2021})}\BibitemShut {NoStop}%
\bibitem [{\citenamefont {Kvaal}\ \emph {et~al.}(2021)\citenamefont {Kvaal},
  \citenamefont {Laestadius}, \citenamefont {Tellgren},\ and\ \citenamefont
  {Helgaker}}]{Kvaal2021}%
  \BibitemOpen
  \bibfield  {author} {\bibinfo {author} {\bibfnamefont {S.}~\bibnamefont
  {Kvaal}}, \bibinfo {author} {\bibfnamefont {A.}~\bibnamefont {Laestadius}},
  \bibinfo {author} {\bibfnamefont {E.}~\bibnamefont {Tellgren}},\ and\
  \bibinfo {author} {\bibfnamefont {T.}~\bibnamefont {Helgaker}},\ }\bibfield
  {title} {\enquote {\bibinfo {title} {Lower semicontinuity of the universal
  functional in paramagnetic current–density functional theory},}\ }\href
  {https://doi.org/10.1021/acs.jpclett.0c03422} {\bibfield  {journal} {\bibinfo
   {journal} {J. Phys. Chem. Lett.}\ }\textbf {\bibinfo {volume} {12}},\
  \bibinfo {pages} {1421--1425} (\bibinfo {year} {2021})},\ \bibinfo {note}
  {pMID: 33522817}\BibitemShut {NoStop}%
\bibitem [{\citenamefont {Laestadius}\ \emph {et~al.}(2018)\citenamefont
  {Laestadius}, \citenamefont {Penz}, \citenamefont {Tellgren}, \citenamefont
  {Ruggenthaler}, \citenamefont {Kvaal},\ and\ \citenamefont
  {Helgaker}}]{KSpaper2018}%
  \BibitemOpen
  \bibfield  {author} {\bibinfo {author} {\bibfnamefont {A.}~\bibnamefont
  {Laestadius}}, \bibinfo {author} {\bibfnamefont {M.}~\bibnamefont {Penz}},
  \bibinfo {author} {\bibfnamefont {E.~I.}\ \bibnamefont {Tellgren}}, \bibinfo
  {author} {\bibfnamefont {M.}~\bibnamefont {Ruggenthaler}}, \bibinfo {author}
  {\bibfnamefont {S.}~\bibnamefont {Kvaal}},\ and\ \bibinfo {author}
  {\bibfnamefont {T.}~\bibnamefont {Helgaker}},\ }\bibfield  {title} {\enquote
  {\bibinfo {title} {{Generalized Kohn--Sham iteration on Banach spaces}},}\
  }\href {https://doi.org/10.1063/1.5037790} {\bibfield  {journal} {\bibinfo
  {journal} {J. Chem. Phys.}\ }\textbf {\bibinfo {volume} {149}},\ \bibinfo
  {pages} {164103} (\bibinfo {year} {2018})}\BibitemShut {NoStop}%
\bibitem [{\citenamefont {Tchenkoue}\ \emph {et~al.}(2019)\citenamefont
  {Tchenkoue}, \citenamefont {Penz}, \citenamefont {Theophilou}, \citenamefont
  {Ruggenthaler},\ and\ \citenamefont {Rubio}}]{tchenkoue2019force}%
  \BibitemOpen
  \bibfield  {author} {\bibinfo {author} {\bibfnamefont {M.-L.~M.}\
  \bibnamefont {Tchenkoue}}, \bibinfo {author} {\bibfnamefont {M.}~\bibnamefont
  {Penz}}, \bibinfo {author} {\bibfnamefont {I.}~\bibnamefont {Theophilou}},
  \bibinfo {author} {\bibfnamefont {M.}~\bibnamefont {Ruggenthaler}},\ and\
  \bibinfo {author} {\bibfnamefont {A.}~\bibnamefont {Rubio}},\ }\bibfield
  {title} {\enquote {\bibinfo {title} {Force balance approach for advanced
  approximations in density functional theories},}\ }\href
  {https://doi.org/10.1063/1.5123608} {\bibfield  {journal} {\bibinfo
  {journal} {J. Chem. Phys.}\ }\textbf {\bibinfo {volume} {151}},\ \bibinfo
  {pages} {154107} (\bibinfo {year} {2019})}\BibitemShut {NoStop}%
\bibitem [{\citenamefont {Kvaal}\ \emph {et~al.}(2014)\citenamefont {Kvaal},
  \citenamefont {Ekstr{\"o}m}, \citenamefont {Teale},\ and\ \citenamefont
  {Helgaker}}]{Kvaal2014}%
  \BibitemOpen
  \bibfield  {author} {\bibinfo {author} {\bibfnamefont {S.}~\bibnamefont
  {Kvaal}}, \bibinfo {author} {\bibfnamefont {U.}~\bibnamefont {Ekstr{\"o}m}},
  \bibinfo {author} {\bibfnamefont {A.~M.}\ \bibnamefont {Teale}},\ and\
  \bibinfo {author} {\bibfnamefont {T.}~\bibnamefont {Helgaker}},\ }\bibfield
  {title} {\enquote {\bibinfo {title} {{Differentiable but exact formulation of
  density-functional theory}},}\ }\href {https://doi.org/10.1063/1.4867005}
  {\bibfield  {journal} {\bibinfo  {journal} {J. Chem. Phys.}\ }\textbf
  {\bibinfo {volume} {140}},\ \bibinfo {pages} {18A518} (\bibinfo {year}
  {2014})}\BibitemShut {NoStop}%
\bibitem [{\citenamefont {Kvaal}()}]{Kvaal2022-MY}%
  \BibitemOpen
  \bibfield  {author} {\bibinfo {author} {\bibfnamefont {S.}~\bibnamefont
  {Kvaal}},\ }\href@noop {} {\enquote {\bibinfo {title} {{M}oreau--{Y}osida
  regularization in {DFT}},}\ }\bibinfo {howpublished} {(10 Aug 2022) arXiv
  e-prints [math.NA] 2208.05268},\ \bibinfo {note} {accessed
  2023-01-31}\BibitemShut {NoStop}%
\bibitem [{\citenamefont {Zhao}, \citenamefont {Morrison},\ and\ \citenamefont
  {Parr}(1994)}]{ZMP1994}%
  \BibitemOpen
  \bibfield  {author} {\bibinfo {author} {\bibfnamefont {Q.}~\bibnamefont
  {Zhao}}, \bibinfo {author} {\bibfnamefont {R.~C.}\ \bibnamefont {Morrison}},\
  and\ \bibinfo {author} {\bibfnamefont {R.~G.}\ \bibnamefont {Parr}},\
  }\bibfield  {title} {\enquote {\bibinfo {title} {From electron densities to
  {K}ohn--{S}ham kinetic energies, orbital energies, exchange-correlation
  potentials, and exchange-correlation energies},}\ }\href
  {https://doi.org/10.1103/PhysRevA.50.2138} {\bibfield  {journal} {\bibinfo
  {journal} {Phys. Rev. A}\ }\textbf {\bibinfo {volume} {50}},\ \bibinfo
  {pages} {2138} (\bibinfo {year} {1994})}\BibitemShut {NoStop}%
\bibitem [{\citenamefont {Penz}, \citenamefont {Csirik},\ and\ \citenamefont
  {Laestadius}(2023)}]{Penz2022ZMP}%
  \BibitemOpen
  \bibfield  {author} {\bibinfo {author} {\bibfnamefont {M.}~\bibnamefont
  {Penz}}, \bibinfo {author} {\bibfnamefont {M.~A.}\ \bibnamefont {Csirik}},\
  and\ \bibinfo {author} {\bibfnamefont {A.}~\bibnamefont {Laestadius}},\
  }\bibfield  {title} {\enquote {\bibinfo {title} {Density-potential inversion
  from {M}oreau--{Y}osida regularization},}\ }\href
  {https://doi.org/10.1088/2516-1075/acc626} {\bibfield  {journal} {\bibinfo
  {journal} {Electron. Struct.}\ }\textbf {\bibinfo {volume} {5}},\ \bibinfo
  {pages} {014009} (\bibinfo {year} {2023})}\BibitemShut {NoStop}%
\bibitem [{\citenamefont {Tellgren}\ \emph {et~al.}(2018)\citenamefont
  {Tellgren}, \citenamefont {Laestadius}, \citenamefont {Helgaker},
  \citenamefont {Kvaal},\ and\ \citenamefont {Teale}}]{Tellgren2018}%
  \BibitemOpen
  \bibfield  {author} {\bibinfo {author} {\bibfnamefont {E.~I.}\ \bibnamefont
  {Tellgren}}, \bibinfo {author} {\bibfnamefont {A.}~\bibnamefont
  {Laestadius}}, \bibinfo {author} {\bibfnamefont {T.}~\bibnamefont
  {Helgaker}}, \bibinfo {author} {\bibfnamefont {S.}~\bibnamefont {Kvaal}},\
  and\ \bibinfo {author} {\bibfnamefont {A.~M.}\ \bibnamefont {Teale}},\
  }\bibfield  {title} {\enquote {\bibinfo {title} {Uniform magnetic fields in
  density-functional theory},}\ }\href {https://doi.org/10.1063/1.5007300}
  {\bibfield  {journal} {\bibinfo  {journal} {J. Chem. Phys.}\ }\textbf
  {\bibinfo {volume} {148}},\ \bibinfo {pages} {024101} (\bibinfo {year}
  {2018})}\BibitemShut {NoStop}%
\bibitem [{\citenamefont {Barcelo}\ \emph {et~al.}(1988)\citenamefont
  {Barcelo}, \citenamefont {Kenig}, \citenamefont {Ruiz},\ and\ \citenamefont
  {Sogge}}]{BKRS}%
  \BibitemOpen
  \bibfield  {author} {\bibinfo {author} {\bibfnamefont {B.}~\bibnamefont
  {Barcelo}}, \bibinfo {author} {\bibfnamefont {C.~E.}\ \bibnamefont {Kenig}},
  \bibinfo {author} {\bibfnamefont {A.}~\bibnamefont {Ruiz}},\ and\ \bibinfo
  {author} {\bibfnamefont {C.~D.}\ \bibnamefont {Sogge}},\ }\bibfield  {title}
  {\enquote {\bibinfo {title} {Weighted {S}obolev inequalities and unique
  continuation for the {L}aplacian plus lower order terms},}\ }\href
  {https://doi.org/10.1215/ijm/1255989128} {\bibfield  {journal} {\bibinfo
  {journal} {Illinois J. Math.}\ }\textbf {\bibinfo {volume} {32}},\ \bibinfo
  {pages} {230--245} (\bibinfo {year} {1988})}\BibitemShut {NoStop}%
\bibitem [{\citenamefont {Wolff}(1992)}]{Wolff1992}%
  \BibitemOpen
  \bibfield  {author} {\bibinfo {author} {\bibfnamefont {T.~H.}\ \bibnamefont
  {Wolff}},\ }\bibfield  {title} {\enquote {\bibinfo {title} {A property of
  measures in $\mathbb{R}^n$ and an application to unique continuation},}\
  }\href {https://doi.org/10.1007/BF01896975} {\bibfield  {journal} {\bibinfo
  {journal} {Geom. Funct. Anal.}\ }\textbf {\bibinfo {volume} {2}},\ \bibinfo
  {pages} {225--284} (\bibinfo {year} {1992})}\BibitemShut {NoStop}%
\bibitem [{\citenamefont {Kurata}(1993)}]{Kurata1}%
  \BibitemOpen
  \bibfield  {author} {\bibinfo {author} {\bibfnamefont {K.}~\bibnamefont
  {Kurata}},\ }\bibfield  {title} {\enquote {\bibinfo {title} {A unique
  continuation theorem for uniformly elliptic equations with strongly singular
  potentials},}\ }\href {https://doi.org/10.1080/03605309308820968} {\bibfield
  {journal} {\bibinfo  {journal} {Comm. in P.D.E.}\ }\textbf {\bibinfo {volume}
  {18}},\ \bibinfo {pages} {1161--1189} (\bibinfo {year} {1993})}\BibitemShut
  {NoStop}%
\bibitem [{\citenamefont {Kurata}(1997)}]{Kurata2}%
  \BibitemOpen
  \bibfield  {author} {\bibinfo {author} {\bibfnamefont {K.}~\bibnamefont
  {Kurata}},\ }\bibfield  {title} {\enquote {\bibinfo {title} {A unique
  continuation theorem for the {S}chr\"odinger equation with singular magnetic
  field},}\ }\href {https://doi.org/10.1090/S0002-9939-97-03672-1} {\bibfield
  {journal} {\bibinfo  {journal} {Proc. Amer. Math. Soc.}\ }\textbf {\bibinfo
  {volume} {125}},\ \bibinfo {pages} {853--860} (\bibinfo {year}
  {1997})}\BibitemShut {NoStop}%
\bibitem [{\citenamefont {Regbaoui}(2001)}]{Regbaoui}%
  \BibitemOpen
  \bibfield  {author} {\bibinfo {author} {\bibfnamefont {R.}~\bibnamefont
  {Regbaoui}},\ }\bibfield  {title} {\enquote {\bibinfo {title} {Unique
  continuation from sets of positive measure},}\ }in\ \href
  {https://doi.org/10.1007/978-1-4612-0203-5_13} {\emph {\bibinfo {booktitle}
  {Carleman Estimates and Applications to Uniqueness and Control Theory}}},\
  \bibinfo {series} {Progress in Nonlinear Differential Equations and Their
  Applications}, Vol.~\bibinfo {volume} {46},\ \bibinfo {editor} {edited by\
  \bibinfo {editor} {\bibfnamefont {F.}~\bibnamefont {Colombini}}\ and\
  \bibinfo {editor} {\bibfnamefont {C.}~\bibnamefont {Zuily}}}\ (\bibinfo
  {publisher} {Birkh\"auser, Basel},\ \bibinfo {year} {2001})\ pp.\ \bibinfo
  {pages} {179--190}\BibitemShut {NoStop}%
\bibitem [{\citenamefont {Laestadius}, \citenamefont {Benedicks},\ and\
  \citenamefont {Penz}(2020)}]{LaestadiusBenedicksPenz}%
  \BibitemOpen
  \bibfield  {author} {\bibinfo {author} {\bibfnamefont {A.}~\bibnamefont
  {Laestadius}}, \bibinfo {author} {\bibfnamefont {M.}~\bibnamefont
  {Benedicks}},\ and\ \bibinfo {author} {\bibfnamefont {M.}~\bibnamefont
  {Penz}},\ }\bibfield  {title} {\enquote {\bibinfo {title} {Unique
  continuation for the magnetic {S}chrödinger equation},}\ }\href
  {https://doi.org/10.1002/qua.26149} {\bibfield  {journal} {\bibinfo
  {journal} {Int. J. Quantum Chem.}\ }\textbf {\bibinfo {volume} {120}},\
  \bibinfo {pages} {e26149} (\bibinfo {year} {2020})}\BibitemShut {NoStop}%
\bibitem [{\citenamefont {Garrigue}(2020)}]{Garrigue2020-magneticHK}%
  \BibitemOpen
  \bibfield  {author} {\bibinfo {author} {\bibfnamefont {L.}~\bibnamefont
  {Garrigue}},\ }\bibfield  {title} {\enquote {\bibinfo {title} {{Unique
  continuation for many-body {S}chr{\"o}dinger operators and the
  {H}ohenberg--{K}ohn theorem. II. The Pauli Hamiltonian}},}\ }\href
  {https://doi.org/10.4171/DM/765} {\bibfield  {journal} {\bibinfo  {journal}
  {Doc. Math.}\ }\textbf {\bibinfo {volume} {25}},\ \bibinfo {pages} {869--898}
  (\bibinfo {year} {2020})}\BibitemShut {NoStop}%
\bibitem [{\citenamefont {Reimann}\ \emph {et~al.}(2017)\citenamefont
  {Reimann}, \citenamefont {Borgoo}, \citenamefont {Tellgren}, \citenamefont
  {Teale},\ and\ \citenamefont {Helgaker}}]{REIMANN_JCTC13_4089}%
  \BibitemOpen
  \bibfield  {author} {\bibinfo {author} {\bibfnamefont {S.}~\bibnamefont
  {Reimann}}, \bibinfo {author} {\bibfnamefont {A.}~\bibnamefont {Borgoo}},
  \bibinfo {author} {\bibfnamefont {E.~I.}\ \bibnamefont {Tellgren}}, \bibinfo
  {author} {\bibfnamefont {A.~M.}\ \bibnamefont {Teale}},\ and\ \bibinfo
  {author} {\bibfnamefont {T.}~\bibnamefont {Helgaker}},\ }\bibfield  {title}
  {\enquote {\bibinfo {title} {Magnetic-field density-functional theory
  ({BDFT}): Lessons from the adiabatic connection},}\ }\href
  {https://doi.org/10.1021/acs.jctc.7b00295} {\bibfield  {journal} {\bibinfo
  {journal} {J. Chem. Theory Comput.}\ }\textbf {\bibinfo {volume} {13}},\
  \bibinfo {pages} {4089--4100} (\bibinfo {year} {2017})}\BibitemShut {NoStop}%
\bibitem [{\citenamefont {Vignale}(2004)}]{VIGNALE_PRB70_201102}%
  \BibitemOpen
  \bibfield  {author} {\bibinfo {author} {\bibfnamefont {G.}~\bibnamefont
  {Vignale}},\ }\bibfield  {title} {\enquote {\bibinfo {title} {Mapping from
  current densities to vector potentials in time-dependent current density
  functional theory},}\ }\href
  {http://link.aps.org/doi/10.1103/PhysRevB.70.201102} {\bibfield  {journal}
  {\bibinfo  {journal} {Phys. Rev. B}\ }\textbf {\bibinfo {volume} {70}},\
  \bibinfo {pages} {201102} (\bibinfo {year} {2004})}\BibitemShut {NoStop}%
\bibitem [{\citenamefont {Vignale}\ and\ \citenamefont
  {Kohn}(1996)}]{VIGNALE_PRL77_2037}%
  \BibitemOpen
  \bibfield  {author} {\bibinfo {author} {\bibfnamefont {G.}~\bibnamefont
  {Vignale}}\ and\ \bibinfo {author} {\bibfnamefont {W.}~\bibnamefont {Kohn}},\
  }\bibfield  {title} {\enquote {\bibinfo {title} {Current-dependent
  exchange-correlation potential for dynamical linear response theory},}\
  }\href {https://doi.org/10.1103/PhysRevLett.77.2037} {\bibfield  {journal}
  {\bibinfo  {journal} {Phys. Rev. Lett.}\ }\textbf {\bibinfo {volume} {77}},\
  \bibinfo {pages} {2037--2040} (\bibinfo {year} {1996})}\BibitemShut {NoStop}%
\bibitem [{\citenamefont {Vignale}, \citenamefont {Ullrich},\ and\
  \citenamefont {Capelle}(2013)}]{Vignale-comment-2013}%
  \BibitemOpen
  \bibfield  {author} {\bibinfo {author} {\bibfnamefont {G.}~\bibnamefont
  {Vignale}}, \bibinfo {author} {\bibfnamefont {C.~A.}\ \bibnamefont
  {Ullrich}},\ and\ \bibinfo {author} {\bibfnamefont {K.}~\bibnamefont
  {Capelle}},\ }\bibfield  {title} {\enquote {\bibinfo {title} {Comment on
  `{D}ensity and physical current density functional theory' by {X}iao-{Y}in
  {P}an and {V}iraht {S}ahni},}\ }\href {https://doi.org/10.1002/qua.24327}
  {\bibfield  {journal} {\bibinfo  {journal} {Int. J. Quantum Chem.}\ }\textbf
  {\bibinfo {volume} {113}},\ \bibinfo {pages} {1422--1423} (\bibinfo {year}
  {2013})}\BibitemShut {NoStop}%
\bibitem [{\citenamefont {Byers}\ and\ \citenamefont
  {Yang}(1961)}]{Byers_1961}%
  \BibitemOpen
  \bibfield  {author} {\bibinfo {author} {\bibfnamefont {N.}~\bibnamefont
  {Byers}}\ and\ \bibinfo {author} {\bibfnamefont {C.~N.}\ \bibnamefont
  {Yang}},\ }\bibfield  {title} {\enquote {\bibinfo {title} {Theoretical
  considerations concerning quantized magnetic flux in superconducting
  cylinders},}\ }\href {https://doi.org/10.1103/physrevlett.7.46} {\bibfield
  {journal} {\bibinfo  {journal} {Phys. Rev. Lett.}\ }\textbf {\bibinfo
  {volume} {7}},\ \bibinfo {pages} {46--49} (\bibinfo {year}
  {1961})}\BibitemShut {NoStop}%
\bibitem [{\citenamefont {von Ragu{\'{e}}~Schleyer}\ \emph
  {et~al.}(1996)\citenamefont {von Ragu{\'{e}}~Schleyer}, \citenamefont
  {Maerker}, \citenamefont {Dransfeld}, \citenamefont {Jiao},\ and\
  \citenamefont {van Eikema~Hommes}}]{SCHLEYER_JACS118_6317}%
  \BibitemOpen
  \bibfield  {author} {\bibinfo {author} {\bibfnamefont {P.}~\bibnamefont {von
  Ragu{\'{e}}~Schleyer}}, \bibinfo {author} {\bibfnamefont {C.}~\bibnamefont
  {Maerker}}, \bibinfo {author} {\bibfnamefont {A.}~\bibnamefont {Dransfeld}},
  \bibinfo {author} {\bibfnamefont {H.}~\bibnamefont {Jiao}},\ and\ \bibinfo
  {author} {\bibfnamefont {N.~J.~R.}\ \bibnamefont {van Eikema~Hommes}},\
  }\bibfield  {title} {\enquote {\bibinfo {title} {Nucleus-independent chemical
  shifts: A simple and efficient aromaticity probe},}\ }\href
  {https://doi.org/10.1021/ja960582d} {\bibfield  {journal} {\bibinfo
  {journal} {J. Am. Chem. Soc.}\ }\textbf {\bibinfo {volume} {118}},\ \bibinfo
  {pages} {6317--6318} (\bibinfo {year} {1996})}\BibitemShut {NoStop}%
\bibitem [{\citenamefont {Ryder}(1996)}]{ryder_1996}%
  \BibitemOpen
  \bibfield  {author} {\bibinfo {author} {\bibfnamefont {L.~H.}\ \bibnamefont
  {Ryder}},\ }\href@noop {} {\emph {\bibinfo {title} {Quantum Field Theory}}},\
  \bibinfo {edition} {2nd}\ ed.\ (\bibinfo  {publisher} {Cambridge University
  Press},\ \bibinfo {year} {1996})\BibitemShut {NoStop}%
\bibitem [{\citenamefont {Greiner}\ and\ \citenamefont
  {Reinhardt}(1996)}]{Greiner_1996}%
  \BibitemOpen
  \bibfield  {author} {\bibinfo {author} {\bibfnamefont {W.}~\bibnamefont
  {Greiner}}\ and\ \bibinfo {author} {\bibfnamefont {J.}~\bibnamefont
  {Reinhardt}},\ }\href@noop {} {\emph {\bibinfo {title} {Field
  Quantization}}}\ (\bibinfo  {publisher} {Springer Berlin Heidelberg},\
  \bibinfo {year} {1996})\BibitemShut {NoStop}%
\bibitem [{\citenamefont {Silberstein}(1907)}]{silberstein1907}%
  \BibitemOpen
  \bibfield  {author} {\bibinfo {author} {\bibfnamefont {L.}~\bibnamefont
  {Silberstein}},\ }\bibfield  {title} {\enquote {\bibinfo {title}
  {Elektromagnetische {G}rundgleichungen in bivektorieller {B}ehandlung},}\
  }\href {https://doi.org/10.1002/andp.19073270313} {\bibfield  {journal}
  {\bibinfo  {journal} {Ann. Phys.}\ }\textbf {\bibinfo {volume} {327}},\
  \bibinfo {pages} {579--586} (\bibinfo {year} {1907})}\BibitemShut {NoStop}%
\bibitem [{\citenamefont {Oppenheimer}(1931)}]{oppenheimer1931}%
  \BibitemOpen
  \bibfield  {author} {\bibinfo {author} {\bibfnamefont {J.~R.}\ \bibnamefont
  {Oppenheimer}},\ }\bibfield  {title} {\enquote {\bibinfo {title} {Note on
  light quanta and the electromagnetic field},}\ }\href
  {https://doi.org/10.1103/PhysRev.38.725} {\bibfield  {journal} {\bibinfo
  {journal} {Phys. Rev.}\ }\textbf {\bibinfo {volume} {38}},\ \bibinfo {pages}
  {725--746} (\bibinfo {year} {1931})}\BibitemShut {NoStop}%
\bibitem [{\citenamefont {Bia{\l}ynicki-Birula}(1994)}]{bialynicki1994wave}%
  \BibitemOpen
  \bibfield  {author} {\bibinfo {author} {\bibfnamefont {I.}~\bibnamefont
  {Bia{\l}ynicki-Birula}},\ }\bibfield  {title} {\enquote {\bibinfo {title} {On
  the wave function of the photon},}\ }\href
  {https://doi.org/10.12693/APHYSPOLA.86.97} {\bibfield  {journal} {\bibinfo
  {journal} {Acta Phys. Pol. A}\ }\textbf {\bibinfo {volume} {1}},\ \bibinfo
  {pages} {97--116} (\bibinfo {year} {1994})}\BibitemShut {NoStop}%
\bibitem [{\citenamefont {Gersten}(1999)}]{gersten1999maxwell}%
  \BibitemOpen
  \bibfield  {author} {\bibinfo {author} {\bibfnamefont {A.}~\bibnamefont
  {Gersten}},\ }\bibfield  {title} {\enquote {\bibinfo {title} {Maxwell’s
  equations as the one-photon quantum equation},}\ }\href
  {https://doi.org/10.1023/A:1017551920941} {\bibfield  {journal} {\bibinfo
  {journal} {Found. Phys. Lett.}\ }\textbf {\bibinfo {volume} {12}},\ \bibinfo
  {pages} {291--298} (\bibinfo {year} {1999})}\BibitemShut {NoStop}%
\bibitem [{\citenamefont {Baez}, \citenamefont {Segal},\ and\ \citenamefont
  {Zhou}(2014)}]{baez2014introduction}%
  \BibitemOpen
  \bibfield  {author} {\bibinfo {author} {\bibfnamefont {J.~C.}\ \bibnamefont
  {Baez}}, \bibinfo {author} {\bibfnamefont {I.~E.}\ \bibnamefont {Segal}},\
  and\ \bibinfo {author} {\bibfnamefont {Z.}~\bibnamefont {Zhou}},\ }\bibfield
  {title} {\enquote {\bibinfo {title} {Introduction to algebraic and
  constructive quantum field theory},}\ }in\ \href@noop {} {\emph {\bibinfo
  {booktitle} {Introduction to Algebraic and Constructive Quantum Field
  Theory}}}\ (\bibinfo  {publisher} {Princeton University Press},\ \bibinfo
  {year} {2014})\BibitemShut {NoStop}%
\bibitem [{\citenamefont {Scharf}(2014)}]{scharf2014finite}%
  \BibitemOpen
  \bibfield  {author} {\bibinfo {author} {\bibfnamefont {G.}~\bibnamefont
  {Scharf}},\ }\href@noop {} {\emph {\bibinfo {title} {Finite quantum
  electrodynamics: the causal approach}}}\ (\bibinfo  {publisher} {Courier
  Corporation},\ \bibinfo {year} {2014})\BibitemShut {NoStop}%
\bibitem [{\citenamefont {Thirring}(2013)}]{thirring2013quantum}%
  \BibitemOpen
  \bibfield  {author} {\bibinfo {author} {\bibfnamefont {W.}~\bibnamefont
  {Thirring}},\ }\href@noop {} {\emph {\bibinfo {title} {Quantum mathematical
  physics: Atoms, molecules and large systems}}}\ (\bibinfo  {publisher}
  {Springer Science \& Business Media},\ \bibinfo {year} {2013})\BibitemShut
  {NoStop}%
\bibitem [{\citenamefont {Keller}(2012)}]{keller2012quantum}%
  \BibitemOpen
  \bibfield  {author} {\bibinfo {author} {\bibfnamefont {O.}~\bibnamefont
  {Keller}},\ }\href@noop {} {\emph {\bibinfo {title} {Quantum theory of
  near-field electrodynamics}}}\ (\bibinfo  {publisher} {Springer Science \&
  Business Media},\ \bibinfo {year} {2012})\BibitemShut {NoStop}%
\bibitem [{\citenamefont {Spohn}(2004)}]{spohn2004dynamics}%
  \BibitemOpen
  \bibfield  {author} {\bibinfo {author} {\bibfnamefont {H.}~\bibnamefont
  {Spohn}},\ }\href@noop {} {\emph {\bibinfo {title} {Dynamics of charged
  particles and their radiation field}}}\ (\bibinfo  {publisher} {Cambridge
  university press},\ \bibinfo {year} {2004})\BibitemShut {NoStop}%
\bibitem [{\citenamefont {Ruggenthaler}()}]{Ruggenthaler2015}%
  \BibitemOpen
  \bibfield  {author} {\bibinfo {author} {\bibfnamefont {M.}~\bibnamefont
  {Ruggenthaler}},\ }\href@noop {} {\enquote {\bibinfo {title} {Ground-state
  quantum-electro\-dynamical density-functional theory},}\ }\bibinfo
  {howpublished} {(4 Aug 2017) arXiv e-prints [quant-ph] 1509.01417},\ \bibinfo
  {note} {accessed 2023-01-31}\BibitemShut {NoStop}%
\bibitem [{\citenamefont {Hainzl}\ and\ \citenamefont
  {Seiringer}(2002)}]{hainzl2002mass}%
  \BibitemOpen
  \bibfield  {author} {\bibinfo {author} {\bibfnamefont {C.}~\bibnamefont
  {Hainzl}}\ and\ \bibinfo {author} {\bibfnamefont {R.}~\bibnamefont
  {Seiringer}},\ }\bibfield  {title} {\enquote {\bibinfo {title} {Mass
  renormalization and energy level shift in non-relativistic {QED}},}\ }\href
  {https://doi.org/10.48550/arXiv.math-ph/0205044} {\bibfield  {journal}
  {\bibinfo  {journal} {Adv. Theor. Math. Phys}\ }\textbf {\bibinfo {volume}
  {6}},\ \bibinfo {pages} {847--871} (\bibinfo {year} {2002})}\BibitemShut
  {NoStop}%
\bibitem [{\citenamefont {Garcia-Vidal}, \citenamefont {Ciuti},\ and\
  \citenamefont {Ebbesen}(2021)}]{garcia2021manipulating}%
  \BibitemOpen
  \bibfield  {author} {\bibinfo {author} {\bibfnamefont {F.~J.}\ \bibnamefont
  {Garcia-Vidal}}, \bibinfo {author} {\bibfnamefont {C.}~\bibnamefont
  {Ciuti}},\ and\ \bibinfo {author} {\bibfnamefont {T.~W.}\ \bibnamefont
  {Ebbesen}},\ }\bibfield  {title} {\enquote {\bibinfo {title} {Manipulating
  matter by strong coupling to vacuum fields},}\ }\href
  {https://doi.org/10.1126/science.abd0336} {\bibfield  {journal} {\bibinfo
  {journal} {Science}\ }\textbf {\bibinfo {volume} {373}},\ \bibinfo {pages}
  {eabd0336} (\bibinfo {year} {2021})}\BibitemShut {NoStop}%
\bibitem [{\citenamefont {Ruggenthaler}, \citenamefont {Sidler},\ and\
  \citenamefont {Rubio}()}]{ruggenthaler2022understanding}%
  \BibitemOpen
  \bibfield  {author} {\bibinfo {author} {\bibfnamefont {M.}~\bibnamefont
  {Ruggenthaler}}, \bibinfo {author} {\bibfnamefont {D.}~\bibnamefont
  {Sidler}},\ and\ \bibinfo {author} {\bibfnamefont {A.}~\bibnamefont
  {Rubio}},\ }\href@noop {} {\enquote {\bibinfo {title} {Understanding
  polaritonic chemistry from ab initio quantum electrodynamics},}\ }\bibinfo
  {howpublished} {(8 Nov 2022) arXiv e-prints [quant-ph] 2211.04241},\ \bibinfo
  {note} {accessed 2023-03-01}\BibitemShut {NoStop}%
\end{thebibliography}
\end{document}